\theoremstyle{plain}
\newtheorem{theorem}{Theorem}[section]	
\newtheorem{lemma}{Lemma}[section]
\newtheorem{assumption}{Assumption}[section]
\theoremstyle{definition}
\newtheorem{remark}{Remark}[section]
\renewcommand{\qed}{\hfill{\tiny \ensuremath{\blacksquare} }}%
\newcommand{\Ep}{{\mathrm{E}}}
\renewcommand{\Pr}{{\mathrm{P}}}
\begin{document}

\title[Weighted-Average Quantile Regression]{Weighted-Average Quantile Regression}
\thanks{}

\author[Chetverikov]{Denis Chetverikov}
\author[Liu]{Yukun Liu}
\author[Tsyvinski]{Aleh Tsyvinski}

\address[D. Chetverikov]
{Department of Economics, UCLA, Bunche Hall, 8283, 315 Portola Plaza, Los Angeles, CA 90095, USA.}
\email{chetverikov@econ.ucla.edu}

\address[Y. Liu]
{Simon Business School, University of Rochester, 3-147 Carol Simon Hall, Rochester, NY 14611, USA.}
\email{yliu229@simon.rochester.edu}

\address[A. Tsyvinski]
{Department of Economics, Yale University, 28 Hillhouse Ave, New Haven, CT 06511, USA.}
\email{a.tsyvinski@yale.edu}

\date{\today.}

\begin{abstract}
In this paper, we introduce the weighted-average quantile regression framework, $\int_0^1 q_{Y|X}(u)\psi(u)du = X'\beta$, where $Y$ is a dependent variable, $X$ is a vector of covariates, $q_{Y|X}$ is the quantile function of the conditional distribution of $Y$ given $X$, $\psi$ is a weighting function, and $\beta$ is a vector of parameters. We argue that this framework is of interest in many applied settings and develop an estimator of the vector of parameters $\beta$. We show that our estimator is $\sqrt T$-consistent and asymptotically normal with mean zero and easily estimable covariance matrix, where $T$ is the size of available sample. We demonstrate the usefulness of our estimator by applying it in two empirical settings. In the first setting, we focus on financial data and study the factor structures of the expected shortfalls of the industry portfolios. In the second setting, we focus on wage data and study inequality and social welfare dependence on commonly used individual characteristics.  
\end{abstract}

\keywords{}

\maketitle

\section{Introduction}
Mean and quantile regression models are among the key elements of
the econometrics toolbox. However, there is a large set of functionals beyond the mean and quantiles of a distribution that are of interest in applied work. 
It is therefore important to study other regression models as well.
To do so, we consider in this paper a broad class of regression
models, which we refer to as the \textit{weighted-average quantile
regression:}
\begin{equation}\label{eq: model}
\int_{0}^{1}q_{Y|X}(u)\psi(u)du=X'\beta,
\end{equation}
where $Y$ is a dependent variable in $\mathbb R$, $X$ is a vector of covariates in $\mathbb R^p$,
$q_{Y|X}\colon[0,1]\to\mathbb R$ is the quantile function of the conditional distribution
of $Y$ given $X$, $\psi\colon[0,1]\to\mathbb R$ is a signed
weighting function, and $\beta$ is the parameter vector in $\mathbb R^p$ to be estimated.
Such regression models are of interest in a number of applications. First,
if $Y$ is the loss of a financial portfolio and $\psi(u)=\mathbb{I}\left\{ u\geq1-\alpha\right\} /\alpha$
for, say, $\alpha = 0.1$, we obtain an example of a \textit{risk
regression}, namely an expected shortfall regression, which is of
interest in finance \cite[e.g., ][]{AB16, APPR17}.  This regression lets us study how the risk, measured by the expected shortfall, of a financial portfolio comoves with various
financial/macro variables. Second, if $Y$ is the wage and $\psi(u)=\mathbb{I}\left\{ u\leq\alpha\right\} /\alpha$
for, say, $\alpha=0.2$, we obtain a \textit{lower  wage regression},
and in the same way, we can also define middle and upper
wage regressions. These regressions are similar to the mean regression but apply the mean to separate wage classes: lower, middle, and upper classes. They may be of interest in labor economics
as parsimonious alternatives to quantile regression models. Third, if $Y$ is the wage and $\psi(u)=(\mathbb{I}\left\{ u\geq1-\alpha\right\} -\mathbb{I}\left\{ u\leq\alpha\right\} )/\alpha$
for, say, $\alpha=0.1$, we obtain an \textit{inequality regression}. The difference between the high and low income groups captures the inequality of the income distribution \cite[e.g., ][]{ACF06, BPP08, AP16}, and therefore the inequality regression may help to identify important determinants of social inequality. Similarly, if $\psi$ is a general decreasing function, we obtain a {\em social welfare regression}. Finally, if the researcher is concerned about the effect of data contamination, we can define $\psi(u)=\mathbb I\left\{ \alpha\leq u\leq1-\alpha\right\} /(1-2\alpha)$ for some small $\alpha$ to obtain a \textit{robust regression,} which may be of interest as an alternative to the Huber regression \citep{HR09}.

We assume that we have a stationary time series dataset $(X_1,Y_1),\dots,(X_T,Y_T)$
with each $(X_{t},Y_{t})$ having the same distribution as that of
the pair $(X,Y)$, and develop a $\sqrt{T}$-consistent estimator
of $\beta$. Our estimator, which we refer to as the weighted-average quantile regression estimator, consists of three steps. First, we use
machine learning to estimate the distribution function of the conditional distribution
of $Y$ given $X$. Second, we use this distribution function to construct a simple
transformation of $Y_{t}$ for all $t$. Third,
we estimate $\beta$ by running OLS of this transformation on $X_{t}$. We
prove that our estimator is asymptotically normal with mean zero and
that its asymptotic covariance matrix can be consistently estimated
by the Newey-West method on the third step, which is carried out in
all commonly used statistical software. Our estimation and inference
procedures are thus straightforward to implement.

Importantly, our approach is semi-parametric: we assume that the weighted-average conditional quantile function $x\mapsto\int_{0}^{1}q_{Y|X=x}(u)\psi(u)du$
is linear, to obtain broadly applicable results, but we do not impose any other
parametric restrictions. The latter is useful as it minimizes the
possibility of misspecification and inconsistent estimation. A parametric
alternative to our methods would be to (i) assume that each quantile
function $x\mapsto q_{Y|X=x}(u)$ is linear,
\begin{equation}
q_{Y|X}(u)=X'\beta(u),\quad\text{for all }u\in(0,1),\label{eq: linear quantiles}
\end{equation}
(ii) calculate the quantile regression estimator $\widetilde{\beta}(u)$
of each $\beta(u)$, and (iii) given that (\ref{eq: linear quantiles})
yields $\beta=\int_{0}^{1}\beta(u)\psi(u)du$, estimate $\beta$ by
$\int_{0}^{1}\widetilde{\beta}(u)\psi(u)du$. This alternative approach,
however, can lead to potential misspecification and inconsistent estimation
due to extra assumptions (\ref{eq: linear quantiles}) and, moreover,
requires estimating extreme quantile regressions, which correspond to
values of $u$ in (\ref{eq: linear quantiles}) that are close to the boundary of the $[0,1]$ interval, and which are typically difficult to estimate. In contrast,
our estimator does not require estimation of extreme quantile regressions.
Furthermore, in contrast to classical semiparametric estimation theory, by applying double/debiased machine learning techniques
\cite[e.g., ][]{CCDDHNR18}, our estimator
is able to handle the case where $X$ is moderate- or large-dimensional,
which is particularly useful in a number of applications. It is important
to note, however, that our results do not follow from the standard
results on double/debiased machine learning techniques because we
allow general weighting functions $\psi$ that in particular may feature
discontinuities. 

We apply our method to two empirical settings: a setting in finance and a setting in labor economics. In the first application, we focus on financial data and study the factor structures of the expected shortfalls of the industry portfolios. We show that the expected shortfalls of the industry portfolios have significant time-varying exposures to the factor models developed in the asset pricing literature. Importantly, the factor structures of the expected shortfalls of the industry portfolios based on the weighted-average quantile regressions can differ significantly from those estimated based on the mean and quantile regressions or based on a parametric estimator. We show that the discrepancies stem from the fact that the quantiles are not linear in the factors in the financial data.

In the second application, we apply the inequality and social welfare regressions to wage data. Using the inequality regression, we study the relationship between wage inequality and individual characteristics that are common in labor economics. We compare the inequality regression results with those based on a parametric estimator and show that the results can differ in important ways. For example, based on the weighted-average quantile regression estimator, the wage inequality is estimated to be negatively related to family size in the recent sample, but the relationship is muted using the parametric estimator. Applying the social welfare regression, we study the dependence of the weighted average wage on individual characteristics, where the weights are exponential with higher weights on the lower income. We call this a social welfare regression as it is consistent with a variety of social welfare functions used in, say, public finance that place higher weights on poorer individuals. We find that the results based on the social welfare regression differ from those from the mean regression. For example,  for the early 2000s, the magnitude of the point estimates on education based on the social welfare regression is only half as large as those using the mean regression.

\subsection*{Related Literature.}
Overall, our results generalize the commonly used mean (OLS) and quantile
regression methods to allow for a much larger class of functionals -- the
weighted-average quantiles, which are of interest in a number of applications
as outlined above and as discussed in detail in the next section.

More broadly, we contribute to the literature by providing a general principle for obtaining a double/debiased machine learning estimator of linear regression models of the form $h(F_{Y|X}) = X'\beta$, where $F_{Y|X}$ is the distribution function of the conditional distribution of $Y$ given $X$ and $h$ is a functional of interest, the weighted-average quantile being one example. Specifically, we first calculate the influence function $y\mapsto a(y,F)$ for the functional $F\mapsto h(F)$ and then estimate the vector of parameters $\beta$ by running OLS of an estimated version of $h(F_{Y|X}) + a(Y,F_{Y|X})$ on $X$. We note that although the idea of using influence function adjustments itself is not new, as it can be traced back at least to \cite{B82} and \cite{S86} and was used recently in \cite{CEINR16}, our key finding is that the adjustment term $a$ in the regression context depends only on the functional $h$ and not on the joint distribution of the pair $(X,Y)$, making the approach broadly available in applied settings. In addition, although our procedure looks similar to that in \cite{FFL09}, who propose to run OLS of $h(F_{Y}) + a(Y,F_Y)$ on $X$, where $F_Y$ is the distribution function of the {\em marginal} distribution of the random variable $Y$, the similarities are superficial: two procedures aim at estimating fundamentally different quantities and also have completely different reasons for the influence function adjustments. The detailed explanations of how our procedure relates to the literature can be found in Section \ref{sec: motivation}, after we fully describe and explain the general principle.

Our paper is also related to the financial literature on estimation of conditional risk measures. Tail risk measures, such as expected shortfall, are an important class of risk measures in finance \cite[e.g., ][]{LL13, JLN15, AB16, APPR17}. A few parametric and nonparametric methods for estimating expected shortfall regressions were proposed and analyzed in \cite{S05}, \cite{CW08}, \cite{PT08}, \cite{LPT12}, \cite{K12}, and \cite{MYT18}. However, all of these papers either assume nonparametric expected shortfall, which makes interpretations in applied work difficult, or impose the linear quantile assumption \eqref{eq: linear quantiles}, which leads to potential misspecification. Moreover, these papers consider only one risk measure: expected shortfall, whereas our methods cover a broad class of risk measures; see Section \ref{sec: examples} for details. \cite{K12} provides a nice comparison of the existing methods.

Our paper is also related to the semi-parametric methods developed in \cite{CSU12}, who consider the problem of estimating $\beta$ in the model \eqref{eq: model} with the integral over $u\in(0,1)$ replaced by the sum over a grid of values of $u$ in $(0,1)$. Clearly, the sum over a fine grid can be used to approximate the integral but the variance of their estimator depends on the inverse of the density of the conditional distribution of $Y$ given $X$ in the tails and generally blows up as we take finer grids, which makes our methods quite different from those developed in \cite{CSU12}.

Moreover, our paper is seemingly related to the methods developed in \cite{RRM14} and \cite{RR15}, who develop super-quantile regression methods. In principle, super-quantile is just another name for the expected shortfall. However, the estimators proposed in these papers do not converge to $\beta$ appearing in \eqref{eq: model} when we set $\psi(u) = \mathbb I\{u\geq 1- \alpha\}/\alpha$ to obtain the expected shortfall regression. Therefore, from the perspective of our setting, the estimators proposed in these papers are not consistent, even though they do converge to some meaningful quantities, see \cite{RRM14} for details.

\subsection*{Outline of the Paper.}
The rest of the paper is organized as follows. In the next section, we provide several examples covered by our weighted-average quantile regression framework \eqref{eq: model}. In Section \ref{sec: motivation}, we derive a general principle for obtaining double/debiased machine learning estimators of linear regression models of the form $h(F_{Y|X})=X'\beta$ and apply it to the weighted-average quantile regression. In Section \ref{sec: estimation}, we describe our estimation and inference procedures in detail. In Section, \ref{sec: asymptotic theory}, we prove consistency and derive the asymptotic distribution for our estimators. In Section \ref{sec: monte carlo}, we provide results of a small-scale Monte Carlo simulation study confirming good statistical properties of our estimators in finite samples. In Section \ref{sec: empirical applications}, we apply our procedures in two empirical settings that are of interest in finance and labor economics. In the Online Appendix, we collect all proofs, additional discussions, and extra tables and figures for the empirical applications.

\section{Examples}\label{sec: examples}
In this section, we describe various regression models covered by our general regression framework  \eqref{eq: model}.

\subsection{Risk Regression}\label{ex: risk regression}
In finance, the concept of risk measures is used to quantify the risk of
financial positions, e.g. \cite{FS02}.  Formally, any risk measure is a functional $\rho$ that is defined on
a set of random variables and that has certain desirable properties,
so that for a random variable $Y\in\mathbb{R}$ representing a loss of some financial position, the value of $\rho(Y)$ measures
the risk associated with $Y$. Most commonly used risk measures belong to the class of {\em spectral risk measures}. These risk measures have many desirable properties (positive homogeneity, translation invariance, monotonicity, sub-additivity, etc.) and take the following form \citep{A02}:
\begin{equation}\label{eq: spectral risk measure}
\rho(Y) = \int_0^1 q_Y(u)\psi(u)du,
\end{equation}
where $q_Y\colon [0,1]\to\mathbb R$ is the quantile function of the random variable $Y$ and $\psi\colon[0,1]\to\mathbb R$ is a an increasing weighting function such that (i) $\psi(u)\geq 0$ for all $u\in(0,1)$ and (ii) $\int_0^1 \psi(u)du = 1$. Here, the function $\psi$ is called the spectrum function associated with the risk measure $\rho$ and different spectrum functions $\psi$ lead to different risk measures. For example, one of the most important spectral risk measures is the expected shortfall, also known as the average value at risk, which corresponds to the spectrum function $\psi(u) = \mathbb I\{u\geq 1-\alpha\}/\alpha$, where $\mathbb I\{\cdot\}$ denotes the indicator function, and $\alpha\in(0,1)$ is a user-specified parameter, typically taking some small value such as $5\%$ or $10\%$. Other examples are exponential and polynomial risk measures, which correspond to the spectrum functions $\psi(u) = a\exp(-a(1-u))/(1-\exp(-a))$ with $a>0$ and $\psi(u) = a u^{a-1}$ with $a>1$, correspondingly, where $a$ is a user-specified parameter, e.g. \cite{L15}. A textbook-level discussion of spectral risk measures can be found in \cite{MFE15}.

To study how the risk of one random variable, say $Y$, comoves with a vector of other variables, say $X$, we can consider a {\em risk regression}
$
\rho(Y|X) = X'\beta,
$
where $\rho(Y|X) = \int_0^1 q_{Y|X}(u)\psi(u)du$ is the risk measure of the conditional distribution of $Y$ given $X$. Substituting here various functions $\psi$, we obtain various risk regressions, e.g. the expected shortfall and exponential regressions. These regressions are covered by our general framework \eqref{eq: model}.

In addition, we note also that risk measures, under different names, appear also
in behavioral economics, where they are used to rank lotteries, e.g. \cite{KT79} and \cite{Y87}, and in actuarial science, where they are used to determine premium principles, e.g. \cite{KGDD08}. Moreover, our methods can be used to estimate an expected shortfall (or any other spectral risk measure) version of CoVar, a concept introduced in \cite{AB16} to study systemic risk.

\subsection{Wage Regression}
Quantile regression methods have been used to study conditional wage distributions since \cite{B94} and \cite{C94}. If $Y$ is individual's wage and $X$ is a vector of covariates including, for example, education, running a quantile regression of $Y$ on $X$ lets us estimate the effect of education on the conditional distribution of wages for any quantile index $u$ of this distribution. This is useful because the effect of education may vary substantially depending on the quantile index. In practice, however, we may often be interested in the average effect of education for a {\em group} of quantile indices. For example, we may define the middle-wage class as the set of individuals with quantile indices within the $[20\%,80\%]$ interval on the conditional wage distribution and we may be interested in the effect of education for this particular set of individuals. In turn, quantile regression methods may not be appropriate for such parameters. Indeed, providing the quantile regression estimate for the average quantile index (50\%, in our example) may not give a representative number for the the whole group and providing the quantile regression estimates for each quantile index within the $[20\%,80\%]$ interval may not be convenient as function-valued estimates are difficult to interpret.\footnote{Moreover, averaging quantile regression estimates over quantile indices in the $[20\%,80\%]$ interval may not be a good idea either, for the reasons explained in the Introduction.} Instead, such parameters can be easily estimated by our weighted-average quantile regression methods. Specifically, by setting $\psi(u) = \mathbb I\{\alpha\leq u\leq 1-\alpha\}/(1-2\alpha)$ with $\alpha = 0.2$ in \eqref{eq: model}, we obtain a {\em middle wage regression}, and the methods developed in our paper can be used to estimate parameters of this regression, yielding in particular the average effect of education on wages for the middle-wage class. Similarly, by setting $\psi(u) = \mathbb I\{u\leq \alpha\}/\alpha$ with $\alpha = 0.2$ in \eqref{eq: model}, we obtain a {\em lower wage regression}, corresponding to the lower-wage class, and by setting $\psi(u) = \mathbb I\{u\geq 1 - \alpha\}/\alpha$, again with $\alpha = 0.2$, we obtain an {\em upper wage regression}, corresponding to the upper-wage class. More generally, since the same techniques can be used with any dependent variable $Y$, we can refer to these types of regression models simply as the lower, middle, and upper regressions. In this case, the expected shortfall regression discussed above becomes an instance of the upper regression.


\subsection{Inequality and Social Welfare Regressions}
Related to our discussion in the previous example, another reason to study conditional wage distributions is that they help us understand the dynamics of the wage inequality over time, e.g. \cite{AP08}. Again assuming that $Y$ is individual's wage and $X$ is a vector of relevant covariates, we can study wage inequality by our weighted-average quantile regression methods. Indeed, by setting $\psi(u) = (\mathbb I\{u\geq 1-\alpha\} - \mathbb I\{u\leq \alpha\})/\alpha$ in \eqref{eq: model} for some small $\alpha$, say $0.1$, we obtain an {\em inequality regression}, which allows us to study how the difference between the average wage of 10\% individuals with highest wages and the average wage of 10\% individuals with lowest wages depend on covariates. Similarly, by considering any decreasing function $\psi$, e.g. polynomial or exponential from Section \ref{ex: risk regression} with $u$ replaced by $1-u$, we obtain a {\em social welfare regression}.
 Of course, the inequality regression remains meaningful with other dependent variables as well. More broadly, it is straightforward to generalize the weighted-average quantile regression framework to include other inequality measures such as Gini's coefficient, e.g. \cite{Co11}.


\subsection{Robust Regression}
Suppose that we are interested in estimating a linear mean regression model
$$
\Ep[Y|X] = X'\beta
$$
from a stationary time series $(X_1,Y_1),\dots,(X_T,Y_T)$, where each $(X_t,Y_t)$ has the same distribution as that of the pair $(X,Y)$. Typically, we would estimate $\beta$ in this model by OLS
$$
\widehat\beta = \arg\min_{b\in\mathbb R^p} \sum_{t=1}^T (Y_t - X_t'b)^2.
$$
Suppose, however, that for some observations $t$, the values of the dependent variable $Y_t$ are corrupted. These corrupted values may significantly bias the estimator $\widehat \beta$ rendering it unreliable. This problem attracted substantial attention in the literature and led to the field called Robust Statistics, which generated many alternatives to OLS, e.g. Least Median of Squares \citep{R84}, Least Trimmed Squares \citep{RL87}, and Random Sample Consensus \citep{FB81}; see also recent advances in computer science, e.g. \cite{LCR18}. However, one of the most important methods developed in this field is the Huber estimator \citep{HR09}, which can be viewed as a modification of the OLS estimator:
$$
\widetilde\beta = \arg\min_{b\in\mathbb R^p} \sum_{t=1}^T \rho_c(Y_t - X_t'b),
$$
where
$$
\rho_c(x) = \begin{cases}
x^{2}, & \text{if }|x|\leq c,\\
2|x|c, & \text{if }|x|>c,
\end{cases}
$$
and $c>0$ is a tuning parameter. Since the derivative of the criterion function $\rho_c$ in the Huber estimator is bounded, this estimator is much more robust with respect to data corruption in the dependent variable in comparison with the OLS estimator. However, implementing this estimator requires choosing the tuning parameter $c$, which is often unclear in practice: smaller values of $c$ yield more robust but also more biased estimator. We therefore propose to use our weighted-average quantile regression estimator as an alternative. Indeed, suppose that for each observation $t$, the probability of corruption in $Y_t$ does not exceed $\alpha$ for some small user-specified value $\alpha\in(0,1)$. In this case, we can consider a {\em robust regression} by setting $w(u) = \mathbb I\{\alpha \leq u\leq 1 - \alpha\}/(1-2\alpha)$ in \eqref{eq: model}. Running our estimator based on this regression also requires the choice of the tuning parameter, $\alpha$, but in contrast to the Huber estimator, this choice is rather intuitive: the user simply needs to provide an upper bound on the fraction of corrupted observations.

\section{Motivation For Estimation Procedure}\label{sec: general principle}\label{sec: motivation}
In this section, we develop a general principle for estimating regression models
\begin{equation}\label{eq: general regression}
h(F_{Y|X})=X'\beta,
\end{equation}
where $y\mapsto F_{Y|X}(y) = \Pr(Y\leq y|X)$ is the distribution function of the conditional distribution of $Y$ given $X$ and $h\colon \mathcal F\to \mathbb R$ is a functional defined on a convex set $\mathcal F$ of distribution functions on $\mathbb R$ that includes $F_{Y|X}$ almost surely. We then apply the general principle to the weighted-average quantile regression model by substituting $h(F_{Y|X}) = \int_0^1 q_{Y|X}(u)\psi(u)du$. For clarity of the section, we leave technical regularity conditions underlying our derivations for now.

\subsection{General Principle}

To develop the principle, we need the concept of influence functions. Following \cite{HRRS86}, we say that $h$ is Gateaux differentiable at the distribution function $F\in\mathcal F$ if there exists a function $a\colon\mathbb R\to\mathbb R$ such that for all $G\in\mathcal F$, we have
\begin{equation}\label{eq: influence function}
\frac{\partial}{\partial t}h((1-t)F+tG)\Big|_{t=0_+} = \lim_{t\downarrow 0}\frac{h((1-t)F+tG) - h(F)}{t} = \int a(y)dG(y).
\end{equation}
We refer to $a$ as the influence function of $h$ at $F$. To note its dependence on $F$, we will write $a(y,F)$ instead of $a(y)$.

The influence function has two important properties. First, by substituting $G = F$ in \eqref{eq: influence function}, we have $\int a(y,F)dF(y) = 0$ and since $F\in\mathcal F$ is arbitrary, we obtain
\begin{equation}\label{eq: influence function mean zero}
\int a(y,F)dF(y) = 0,\quad\text{for all }F\in\mathcal F.
\end{equation}
Second, by substituting $(1-t)F+tG$ instead of $F$ in \eqref{eq: influence function mean zero} and taking derivative with respect to $t$ on both sides, we have
$$
\frac{\partial}{\partial t}\int a(y,(1-t)F+tG)dF(y)\Big|_{t=0_+} + \int a(y,F)dG(y) - \int a(y,F)dF(y)=0
$$
and since $\int a(y,F)dF(y) = 0$ and $G$ is arbitrary, we obtain
\begin{equation}\label{eq: influence function second property}
\frac{\partial}{\partial t}\int a(y,(1-t)F+tG)dF(y)\Big|_{t=0_+} = - \int a(y,F)dG(y),\quad\text{for all }G\in\mathcal F
\end{equation}
We will use these two properties below.

Having the concept of influence functions in mind, we propose the following principle: estimate $\beta$ in \eqref{eq: general regression} by running the OLS estimator of $h(\widehat F_{Y|X}) + a(Y,\widehat F_{Y|X})$ on $X$, where $\widehat F_{Y|X}$ is a preliminary estimator of $F_{Y|X}$. We claim that this estimator is consistent and robust (in a sense to be made precise below) with respect to the estimation error in $\widehat F_{Y|X}$. To see why this is so, observe that, under certain regularity conditions, the probability limit of such an OLS estimator will be
$$
\bar\beta = (\Ep[XX'])^{-1}\Ep\Big[X\Big(h(F_{Y|X})+ a(Y,F_{Y|X})\Big)\Big],
$$
which can be equivalently written as a set of moment conditions
\begin{equation}\label{eq: moment conditions}
\Ep\Big[X\Big(h(F_{Y|X}) + a(Y,F_{Y|X})-X'\bar\beta\Big)\Big] = 0_p,
\end{equation}
where $0_p = (0,\dots,0)'\in\mathbb R^p$. Here, we note that by applying \eqref{eq: influence function mean zero} with $F = F_{Y|X}$, we have
$$
 \int a(y,F_{Y|X})dF_{Y|X}(y) = 0.
$$
In turn, the left-hand side of this identity is equal to $\Ep[a(Y,F_{Y|X})|X]$, and so, by the law of iterated expectations,
$$
\Ep[Xa(Y,F_{Y|X})] = \Ep[X \Ep[a(Y,F_{Y|X})|X]] = 0_p.
$$
Substituting this equality into \eqref{eq: moment conditions} and recalling \eqref{eq: general regression}, 
it follows that the set of moment conditions \eqref{eq: moment conditions} can be equivalently written as
$$
\Ep[X(X'\beta - X'\bar\beta)]=0_p.
$$
As long as $\Ep[XX']$ is non-singular, it thus follows that $\bar\beta = \beta$ is the unique solution to the set of moment conditions \eqref{eq: moment conditions}. This means that our OLS estimator is consistent.

Further, fix any $\widehat F_{Y|X}$ such that $\widehat F_{Y|X}\in\mathcal F$ almost surely and write $\widetilde F = \widehat F_{Y|X} - F_{Y|X}$. By applying \eqref{eq: influence function} and \eqref{eq: influence function second property} with $F = F_{Y|X}$ and $G = \widehat F_{Y|X}$ and noting that $(1-t)F+tG = F_{Y|X} + t\widetilde F_{Y|X}$ in this case, we have
$$
\frac{\partial }{\partial t} h(F_{Y|X} + t\widetilde F)\Big|_{t = 0_+} = \int a(y,F_{Y|X})d\widehat F_{Y|X}(y)
$$
and
$$
\frac{\partial}{\partial t}\int a(y,F_{Y|X} + t\widetilde F)dF_{Y|X}(y)\Big|_{t=0_+} = -\int a(y,F_{Y|X})d\widehat F_{Y|X}(y),
$$
respectively. Summing these two identities and observing that 
$$
\int a(y,F_{Y|X} + t\widetilde F)dF_{Y|X}(y) = \Ep[a(Y,F_{Y|X} + t\widetilde F)|X],
$$
we obtain
$$
\frac{\partial }{\partial t} h(F_{Y|X} + t\widetilde F)\Big|_{t=0_+} + \frac{\partial}{\partial t}\Ep[a(Y,F_{Y|X} + t\widetilde F)|X]\Big|_{t=0_+} = 0.
$$
The latter in turn implies, via the law of iterated expectations, that
$$
\frac{\partial}{\partial t}\Ep\Big[X\Big(h(F_{Y|X}+t\widetilde F) + a(Y,F_{Y|X} + t\widetilde F)-X'\bar\beta\Big)\Big]\Big|_{t=0_+} = 0,
$$
as long as integration and differentiation can be interchanged. This means that our OLS estimator solves a system of equations having the Neyman orthogonality property with respect to $F_{Y|X}$ \citep{CCDDHNR18} and is, in this sense, robust with respect to the estimation error in $\widehat F_{Y|X}$.

Intuitively, a simple approach to estimate $\beta$ in the model \eqref{eq: general regression} would be to run the OLS estimator of $h(\widehat F_{Y|X})$ on $X$. Such an estimator can be shown to be $\sqrt T$-consistent and asymptotically normal with mean zero as long as $X$ is low-dimensional, a sufficiently simple estimator $\widehat F_{Y|X}$ is used, and its tuning parameters are chosen in a delicate way. When $X$ is moderate- or even large-dimensional, however, we have to rely on machine learning methods to obtain an estimator $\widehat F_{Y|X}$. These methods in turn yield heavily biased estimators with relatively slow convergence rates. The estimation error in $\widehat F_{Y|X}$ may then propagate into the error of the OLS estimator, leading to estimates of $\beta$ with poor properties. We deal with this problem by adding the influence function $a(Y,\widehat F_{Y|X})$ to the functional $h(\widehat F_{Y|X})$. This allows us to obtain the OLS estimator of $\beta$ that is robust with respect to the estimation error in $\widehat F_{Y|X}$, as explained above.

There are several strands of literature that use influence function adjustments for estimation. First, the so-called one-step estimators, which adjust the plug-in estimators by adding the average value of the estimated influence function, have been used in statistics for a long time as a tool of achieving semiparametric efficiency; see \cite{B82} and \cite{S86} for early references and \cite{FK21} for a recent review. Second, \cite{CEINR16} introduced the idea of adding the influence functions to obtain robust estimators in the setting where the parameter of interest is the expectation of a functional of unknown nonparametric/high-dimensional object that has to be estimated on the first step. We show that in our context the adjustment function $a$ depends only on the functional $h$ and not on the joint distribution of underlying random variables, and thus has a simple form, broadly available in applied settings. We exemplify this last point in the next subsection, where we apply our procedure to the weighted-average quantile regression. Third, \cite{FFL09} proposed a procedure that, in its simplest form, consists of running the OLS estimator of $h(\widehat F_Y) + a(Y,\widehat F_Y)$ on $X$, where $\widehat F_Y$ is a preliminary estimator of the distribution function $F_Y$ of $Y$, in order to estimate the impact of $X$ on the functionals of the counterfactual distribution of $Y$ that appears as the distribution of $X$ changes keeping the conditional distribution of $Y$ given $X$ fixed. They use the {\em marginal} distribution of $Y$, whereas we use the {\em conditional} distribution of $Y$ given $X$. This seemingly minor change has substantial consequences: two procedures aim at estimating fundamentally different quantities and have different scopes of applicability. Using an example from the Introduction of \cite{FFL09}, one can say that their procedure estimates how increasing the fraction of unionized workers affects the distribution of wages whereas our procedure estimates how the distribution of wages of unionized workers differs from the distribution of wages of non-unionized workers. In addition, the reasons for adding the influence function in two procedures are completely different. As explained above, we use the influence function adjustment to obtain an OLS estimator that is robust with respect to the estimation error in $\widehat F_{Y|X}$, whereas they use the influence function because it directly measures the impact of changing the distribution on the value of the functional, see \eqref{eq: influence function}.\footnote{Also, as long as the intercept is included, the term $h(\widehat F_Y)$ plays no role in their procedure: the slope coefficients of the OLS estimator of $h(\widehat F_Y) + a(Y,\widehat F_Y)$ on $X$ coincide with the slope coefficients of the OLS estimator of $a(Y,\widehat F_Y)$ on $X$. In contrast, dropping $h(\widehat F_{Y|X})$ in our procedure would lead to a meaningless estimator that would converge in probability to the vector of zeros.}

More generally, our approach to estimation in this section is an instance of the double/debiased machine learning method \citep{CCDDHNR18}, which gives estimation procedures based on moment conditions with the Neyman orthogonality property. Our key innovation here is that we demonstrate that although we are interested in an object $h(F_{Y|X})$ that depends on {\em conditional} distributions, with dependence going through in a potentially complicated way, e.g. via conditional quantile functions, obtaining moment conditions with the Neyman orthogonality property is actually simple: we simply have to add the influence function for the functional $h$, which can be obtained by looking at the values $h(F)$ of the functional $h$ at {\em unconditional} distributions $F$. In addition, an important issue that arises in almost all of our applications is that the functionals we consider are not twice continuously differentiable (in a Gateaux sense), thus the results of  \cite{CCDDHNR18} can not be applied.


\subsection{Application to Weighted-Average Quantile Regression}
Here, we apply the general principle described in the previous subsection to the weighted-average quantile regression model \eqref{eq: model}. To do so, we set $h(F) = \int_0^1 q_F(u)\psi(u)du$, where $q_F(u)$ denotes the $u$th quantile of the distribution function $F\in\mathcal F$. The influence function for this functional is well-known:
\begin{equation}\label{eq: original influence function}
a(y,F) = \int_0^1\frac{u-\mathbb I\{y\leq q_F(u)\}}{f(q_F(u))}\psi(u)du,
\end{equation}
where $f = F'$ is the pdf corresponding to $F$. To see why, suppose first that we are interested in the individual quantile $q_F(u)$ for some $u\in(0,1)$. Let $F$ and $G$ be two distribution functions with strictly positive derivatives $f$ and $g$, respectively. Then for any $t\in[0,1]$, we have
$$
\int_{-\infty}^{q_{(1-t)F+tG}(u)}((1-t)f(y)+tg(y))dy = u.
$$
Taking derivative of both sides with respect to $t$ at $t = 0_+$, we then obtain
$$
f(q_F(u))\frac{\partial}{\partial t}q_{(1-t)F+tG}(u)\Big|_{t=0_+} + \int_{-\infty}^{q_F(u)} (g(y)-f(y))dy = 0.
$$
Hence,
\begin{align*}
\frac{\partial}{\partial t}q_{(1-t)F+tG}(u)\Big|_{t=0_+}
& = \int_{-\infty}^{q_F(u)}\frac{f(y)-g(y)}{f(q_F(u))}dy\\
& = \frac{u}{f(q_F(u))} - \int_{-\infty}^{+\infty}\frac{\mathbb I\{y\leq q_F(u)\}g(y)}{f(q_F(u))}dy\\
& = \int_{-\infty}^{+\infty} \frac{u - \mathbb I\{y\leq q_F(u)\}}{f(q_F(u))}g(y)dy.
\end{align*}
Comparing this expression with \eqref{eq: influence function}, we thus obtain the influence function for the individual quantile $q_F(u)$:
$$
y\mapsto \frac{u - \mathbb I\{y\leq q_F(u)\}}{f(q_F(u))}.
$$
This expression in turn immediately gives \eqref{eq: original influence function} in light of linear additivity of derivatives.

Applying our general principle from the previous subsection, we conclude that in order to estimate $\beta$ in the model \eqref{eq: model}, we can run OLS of an estimated version of
\begin{equation}\label{eq: non-transformed dependent variable}
\int_0^1\left(q_{Y|X}(u) + \frac{u-\mathbb I\{Y\leq q_{Y|X}(u)\}}{f_{Y|X}(q_{Y|X}(u))}\right)\psi(u)du
\end{equation}
on $X$, where $f_{Y|X}$ is the pdf of the conditional distribution of $Y$ given $X$.  This, however, is not convenient for two reasons. First, this approach requires estimating extreme quantiles, $q_{Y|X}(u)$ for $u$ close to the boundaries of the interval $[0,1]$, which are typically difficult to estimate. Second, this approach requires estimating the pdf $f_{Y|X}(q_{Y|X}(u))$, which appears in the denominator and which may take small values near the boundaries of the interval $[0,1]$, thus leading to large estimation errors. Note also that simply truncating the interval $[0,1]$ may not lead to good results as in some cases, such as the expected shortfall or inequality regressions, extreme quantiles are of particular importance. To deal with these problems, we rewrite the integral in \eqref{eq: non-transformed dependent variable} differently.

First, observe that for all $u\in[0,1]$, we have
$$
\int_{-\infty}^{q_{Y|X}(u)}f_{Y|X}(y)dy = u,
$$
and so $f_{Y|X}(q_{Y|X}(u))q'_{Y|X}(u) = 1$ almost surely. Therefore, by applying the change of variables $u\mapsto s(u) = q_{Y|X}(u)$ and recalling that $u = F_{Y|X}(s(u))$ in this case, it follows that the integral in \eqref{eq: non-transformed dependent variable} is equal to
\begin{multline}
\int_{-\infty}^{+\infty} \left( s + \frac{F_{Y|X}(s) - \mathbb I\{Y\leq s\}}{f_{Y|X}(s)} \right)f_{Y|X}(s)\psi(F_{Y|X}(s))ds \\
=\int_{-\infty}^{+\infty} sf_{Y|X}(s)\psi(F_{Y|X}(s))ds + \int_{-\infty}^{+\infty}(F_{Y|X}(s) - \mathbb I\{Y\leq s\})\psi(F_{Y|X}(s))ds.\label{eq: two integrals}
\end{multline}
Second, using integration by parts, we can further rewrite the first integral in \eqref{eq: two integrals} as
\begin{align*}
& \int_{-\infty}^{+\infty} sf_{Y|X}(s)\psi(F_{Y|X}(s))ds \\
&\qquad = \int_{-\infty}^{0} sf_{Y|X}(s)\psi(F_{Y|X}(s))ds + \int_{0}^{+\infty} sf_{Y|X}(s)\psi(F_{Y|X}(s))ds \\
&\qquad = - \int_{-\infty}^0 \Psi(F_{Y|X}(s))ds + \int_0^{+\infty}(\bar\Psi - \Psi(F_{Y|X}(s)))ds,
\end{align*}
where $\Psi\colon[0,1]\to\mathbb R$ is the function defined by $\Psi(s) = \int_0^s \psi(u)du$ for all $s\in[0,1]$ and $\bar\Psi = \int_0^1 \psi(u)du$.
Combining these results, it follows that the integral in \eqref{eq: non-transformed dependent variable} is equal to
\begin{align}
R = & - \int_{-\infty}^0 \Psi(F_{Y|X}(s))ds + \int_0^{+\infty}(\bar\Psi - \Psi(F_{Y|X}(s)))ds  \nonumber\\
&\qquad \qquad + \int_{-\infty}^{+\infty}(F_{Y|X}(s) - \mathbb I\{Y\leq s\})\psi(F_{Y|X}(s))ds.\label{eq: definition r new}
\end{align}
We thus propose estimating the vector of parameters $\beta$ in the weighted-average quantile regression model \eqref{eq: model} by running OLS of an estimated version of $R$ on $X$. In comparison with the integral in \eqref{eq: non-transformed dependent variable}, the advantage of using $R$ is that it depends only on the distribution function $F_{Y|X}$, which is easy to estimate even in the tails.

\section{Estimation and Inference}\label{sec: estimation}
In this section, we provide a detailed discussion of our estimation and inference procedures for the vector of parameters $\beta$ in the weighted-average quantile regression model \eqref{eq: model}. We assume, throughout the rest of the paper, that we have a strictly stationary time series dataset $(X_1,Y_1),\dots,(X_T,Y_T)$, where each $(X_t,Y_t)$ has the same distribution as that of the pair $(X,Y)$. 

For all $s\in\mathbb R$ and $x$ in the support of $X$, denote $F(s|x) = \Pr(Y\leq s|X=x)$, so that $F(\cdot|X) = F_{Y|X}(\cdot)$ is the distribution function of the conditional distribution of $Y$ given $X$. Also, as in the previous section, denote $\Psi(s) = \int_0^s \psi(u)du$ for all $s\in[0,1]$ and $\bar\Psi = \int_0^1 \psi(u)du$. In addition, define $R$ as in \eqref{eq: definition r new} and
\begin{equation}\label{eq: residual definition}
e = \int_{-\infty}^{+\infty}\left(F(s|X) - \mathbb I\{Y\leq s\}\right)\psi(F(s|X))ds.
\end{equation}
By the law of iterated expectations, we then have $\Ep[e|X] = 0$. In addition,
by discussion at the end of the previous section, we also have
$$
\int_0^1 q_{Y|X}(u)\psi(u)du = - \int_{-\infty}^0 \Psi(F(s|X))ds + \int_0^{+\infty}(\bar\Psi - \Psi(F(s|X)))ds.
$$
Thus, it follows from \eqref{eq: model} that
\begin{equation}\label{eq: definition R}
R = X'\beta + e,\quad\text{where}\quad \Ep[e|X]=0.
\end{equation}
This equation reinforces our proposal in the previous section to estimate $\beta$ by the OLS method, regressing $R$ on $X$, where the distribution function $F$ appearing in $R$ is replaced by a suitable nonparametric/machine learning estimator, to be discussed later. For technical reasons, we also rely on sample splitting, so that the function $F$ and the vector $\beta$ are estimated on different subsamples of the whole sample. Formally, we define our estimator of $\beta$ as follows:



\begin{enumerate}
\item split the full sample $(X_1,Y_1),\dots,(X_T,Y_T)$ into two consequtive subsamples, say, $(X_1,Y_1),\dots,(X_{T_1},Y_{T_1})$ and $(X_{T_1+1},Y_{T_1+1}),\dots,(X_{T_1+T_2},Y_{T_1+T_2})$, where $T_1 + T_2 = T$;
\item use the first subsample, $(X_1,Y_1),\dots,(X_{T_1},Y_{T_1})$, to obtain a nonparametric/ machine learning estimator $\widehat F(s|x)$ of $F(s|x)$ for all $x\in\{X_{T_1+1},\dots,X_{T_1+T_2}\}$ and $s\in\mathbb R$;
\item calculate
\begin{align}
\widehat R_t = & \int_0^{+\infty}(\bar\Psi - \Psi(\widehat F(s|X_t)))ds - \int_{-\infty}^0 \Psi(\widehat F(s|X_t))ds \nonumber\\
&\qquad + \int_{-\infty}^{+\infty}\left(\widehat F(s|X_t) - \mathbb I\{Y_t\leq s\}\right)\psi(\widehat F(s|X_t))ds,\label{eq: rhat def}
\end{align}
for all $t = T_1+1,\dots,T_1 + T_2$;
\item calculate the OLS estimator
\begin{equation}\label{eq: risk regression estimator}
\widehat\beta = \left(\sum_{t = T_1+1}^{T_1+T_2} X_tX_t'\right)^{-1}\left(\sum_{t=T_1+1}^{T_1+T_2} X_t \widehat R_t\right).
\end{equation}
\end{enumerate}
\noindent
In this procedure, $T_1$ and $T_2$ should be chosen to be of the same order, which we assume for the rest of the paper.  In our simulations, we choose $T_1 \approx 2T_2$.


By analogy with mean and quantile regression estimators, we refer to $\widehat\beta$ as the {\em weighted-average quantile regression estimator}. By substituting various weighting functions $\psi$ (and the corresponding $\Psi$), we obtain various regression estimators. For instance, if $Y$ is the loss of a financial portfolio, by setting $\psi(u) = \mathbb I\{u \geq 1 - \alpha\}/\alpha$, we obtain an expected shortfall regression estimator, as discussed in Section \ref{sec: examples}.

We will prove in the next section that, under suitable regularity conditions, the estimator $\widehat\beta$ is $\sqrt T$-consistent and asymptotically normal with mean zero:
\begin{equation}\label{eq: main convergence}
\sqrt{T_2}(\widehat\beta - \beta) = \left(\frac{1}{T_2}\sum_{t=T_1+1}^{T_1+T_2}X_tX_t'\right)^{-1}\left(\frac{1}{\sqrt{T_2}}\sum_{t=T_1+1}^{T_1+T_2}X_te_t\right) + o_P(1)\to_d N(0,\Sigma),
\end{equation}
where 
$$
e_t = \int_{-\infty}^{+\infty}\left(F(s|X_t) - \mathbb I\{Y_t\leq s\}\right)\psi(F(s|X_t))ds,\quad\text{for all }t=T_1+1,\dots,T_1+T_2,
$$
and $\Sigma$ is the asymptotic covariance matrix. Moreover, $\Sigma$ can be consistently estimated, for example, by the Newey-West method, where each $e_t$ is replaced by the corresponding residual $\widehat e_t = \widehat R_t - X_t'\widehat\beta$; see the next section for details. This implies that once an estimator of the function $F$ is obtained using data from the first subsample, any standard statistical software can be used to obtain the estimator $\widehat\beta$ and corresponding standard errors by simply running the OLS regression of $\widehat R_t$ on $X_t$ using data from the second subsample and reporting the Newey-West standard errors. 

Next, we discuss estimation of the function $F$. To do so, fix any $s\in\mathbb R$ and observe that $F(s|x) = \Pr(Y\leq s | X=x) = \Ep[\mathbb I\{Y\leq s\}| X = x]$ for all $x\in\mathbb R^p$. Therefore, to obtain an estimator of the function $x\mapsto F(s|x)$, we can apply any standard nonparametric/machine learning estimator regressing $\mathbb I\{Y_t\leq s\}$ on $X_t$ using data from the first subsample. Applying the estimator separately for each value of $s$, we obtain an estimator $(s,x)\mapsto \widehat F(s|x)$ of the function $(s,x)\mapsto F(s|x)$. For example, for our empirical results, we use a version of the random forest method described below.\footnote{We also tried boosting and $\ell_1$-penalized methods but they did not perform as well as random forests: both methods turned out slower and the latter method suffered from potentially substantial non-linearities in the functions $x\mapsto F(s|x)$.}

Further, nonparametric/machine learning estimators will produce numerically identical results for any pair of values of $s$, say $s_1$ and $s_2$, such that there is no $Y_t$ between $s_1$ and $s_2$ since the datasets $\{(X_t,\mathbb I\{Y_t\leq s_1\})\}_{t=1}^{T_1}$ and $\{(X_t,\mathbb I\{Y_t\leq s_2\})\}_{t=1}^{T_1}$ are identical in this case. This in turn implies that there is no need to apply the nonparametric/machine learning estimator for all values of $s\in\mathbb R$, and it suffices to only consider $s\in\{Y_1,\dots,Y_{T_1}\}$ since the estimator $s\mapsto \widehat F(s|x)$ will be piecewise constant and the integrals in \eqref{eq: rhat def} will be given by the corresponding sums. More precisely, letting $s_1\leq\dots\leq s_{T_1}$ be the sequence of values of $Y_1,\dots,Y_{T_1}$ arranged in the increasing order and imposing a mild condition that $\widehat F(s|X) = 0$ for all $s < \min_{1\leq t\leq T_1} Y_{t}$ and $\widehat F(s|X) = 1$ for all $s \geq \max_{1\leq t\leq T_1} Y_{t}$ a.s., which is satisfied for any reasonable nonparametric/machine learning estimator, it follows that
\begin{equation}\label{eq: alternative r}
\widehat R_t = s_{T_1}\bar\Psi +  \sum_{j=1}^{T_1-1} (s_{j+1} - s_{j})M_{t}(s_{j}, s_{j+1}),\quad\text{for all }t=T_1+1,\dots,T_1+T_2
\end{equation}
where we denoted
\begin{equation}\label{eq: function m}
M_{t}(s_j,s_{j+1}) = - \Psi(\widehat F(s_j|X_t)) + (\widehat F(s_j|X_t)  - \widetilde {\mathbb I}\{Y_t\leq s_j,s_{j+1}\})\psi(\widehat F(s_j|X_t))
\end{equation}
and 
$$
\widetilde {\mathbb I}\{Y_t\leq s_j,s_{j+1}\} = \max\left(\min\left(\frac{s_{j+1} - Y_t}{s_{j+1} - s_j},1\right),0\right)
$$ 
for all $j = 1,\dots,T_1 - 1$ and $t = T_1+1,\dots,T_1+T_2$.

Moreover, in large samples, where calculating $\widehat R_t$ in \eqref{eq: alternative r} is computationally costly, the grid $s_1\leq\dots\leq s_{T_1}$ used in \eqref{eq: alternative r} can be replaced by a much coarser grid $\min_{1\leq t\leq T_1}Y_t = s_1^*\leq\dots\leq s_k^* = \max_{1\leq t\leq T_1}Y_t$, so that
$$
\widehat R_t = s_{k}^* +  \sum_{j=1}^{k-1} (s^*_{j+1} - s^*_{j}) M_{t}(s_j^*,s_{j+1}^*),\quad\text{for all }t=T_1+1,\dots,T_1+T_2,
$$ 
where $k$ is much smaller than $T_1$. 





\section{Asymptotic Theory}\label{sec: asymptotic theory}

In this section, we derive an asymptotic theory for the weighted-average quantile regression estimator $\widehat\beta$. To do so, we denote $D_t = (X_t',Y_t)'$ for all $t=1,\dots,T$ and $D_1^{T_1} = (D_1,\dots,D_{T_1})$. We will assume that the dataset $D_1,\dots,D_T$ is a subset of a strictly stationary time series $\{D_t\}_{t\in\mathbb Z}$. Further, for all $j\in\mathbb N$, let $\mathcal I_{-\infty}^0$ and $\mathcal I_{j}^{+\infty}$ be $\sigma$-algebras generated by $\{D_s\}_{s\leq 0}$ and $\{D_s\}_{s\geq j}$, respectively, and let
$$
\beta_j = \Ep\left[\sup\left\{ |\Pr(B\mid \mathcal I_{-\infty}^0) - \Pr(B)| \colon B\in\mathcal I_{j}^{+\infty}\right\}\right]
$$
be the $\beta$-mixing coefficients. In addition, let $\mathcal X$ be the support of $X$ and for all $x\in\mathcal X$, denote
\begin{equation}\label{eq: Delta definition}
\Delta(x) = \sup_{s\in\mathbb R}\left| \widehat F(s|x) - F(s|x) \right| + \int_{-\infty}^{+\infty} \left| \widehat F(s|x) - F(s|x)  \right| ds.
\end{equation}
Moreover, let $0 < u_0 < 1/2$, $0<c<\infty$, and $-\infty < s_1 < s_2< +\infty$ be some constants. We will use the following assumptions.

\begin{assumption}\label{as: beta mixing}
The strictly stationary time series $\{D_t\}_{t\in\mathbb Z}$ has summable $\beta$-mixing coefficients: $\sum_{j=1}^{\infty}\beta_j < \infty$.
\end{assumption}
Assumption \ref{as: beta mixing} implies that the time series $\{D_t\}_{t\in\mathbb Z}$ is absolutely regular. As explained in \cite{C11}, many econometric time series models satisfy this assumption. In fact, it is common practice to impose stronger mixing conditions. For example, \cite{CL13} require that $\beta_j \leq \beta_0 j^{-\omega}$ for all $j\geq 1$ and some $\beta_0>0$ and $\omega>2$, which clearly implies that $\sum_{j=1}^\infty \beta_j < \infty$. \cite{FHLV16} require that $\phi_j \leq \phi_0 j^{-\omega}$ for all $j\geq 1$ and some $\phi_0>0$ and $\omega > 1$, where $\phi_j$'s are $\phi$-mixing coefficients. Since $\phi_j\geq \beta_j$ for all $j\geq 1$, such a condition also implies our Assumption \ref{as: beta mixing}. Note also that Assumption \ref{as: beta mixing} holds trivially if the random vectors $D_t$ are independent across $t$, which means that our results apply for i.i.d. data settings as well. We refer an interested reader to \cite{FY05} and \cite{B05} for detailed explanations on various mixing conditions and their plausibility.
\begin{assumption}\label{as: standard convergence}
(i) Components of the random vector $X$ as well as the random variable $e$ have finite fourth moments: $\Ep[\|X\|^4] < \infty$ and $\Ep[e^4]<\infty$.
(ii) In addition, the matrix $\Ep[XX']$ is positive-definite. (iii) Moreover,
$$
\frac{1}{\sqrt{T_2}}\sum_{t=T_1+1}^{T_1+T_2}e_tX_t\to_d N(0,\Omega)
$$
for a positive-definite matrix $\Omega$.
\end{assumption}
Assumption \ref{as: standard convergence} is standard in time series econometrics.  Assumption \ref{as: standard convergence}(i) is a mild moment condition. Assumption \ref{as: standard convergence}(ii) is an identification condition. Assumption \ref{as: standard convergence}(iii) follows from a combination of mixing and moment conditions. For example, since $\beta$-mixing coefficients dominate $\alpha$-mixing coefficients, under Assumption \ref{as: beta mixing}, Assumption \ref{as: standard convergence}(iii) holds as long as the random variables $\|e_tX_t\|$ are bounded; see Theorem 2.21(ii) in \cite{FY05}. More generally, when the random variables $\|e_tX_t\|$ satisfy $\Ep[\|e_tX_t\|^{\eta}]<\infty$ for some $\eta > 2$, Assumption \ref{as: standard convergence}(iii) holds as long as $\sum_{j=1}^{\infty}\beta_j^{1-2/\eta} <\infty$; see Theorem 2.21(i) in \cite{FY05}.

\begin{assumption}\label{as: function psi}
(i) The weighting function $\psi$ has bounded variation. (ii) In addition, $\psi$ is continuously differentiable on $(0,u_0)$ and $(1-u_0,1)$ with bounded derivative.
\end{assumption}

Assumption \ref{as: function psi}(i) means that the function $\psi$ can be decomposed as $\psi = \psi_1 - \psi_2$, where both $\psi_1$ and $\psi_2$ are bounded and increasing functions. Assumption \ref{as: function psi} is thus satisfied in all our examples from Section \ref{sec: examples}.



\begin{assumption}\label{as: function F}
(i) The function $F$ is such that $F(s_1|x)< u_0/2$ and $F(s_2|x)> 1-u_0/2$ for all $x\in\mathcal X$. (ii) In addition, the function $u\mapsto F(u|x)$ is continuously differentiable on $u\in(s_1,s_2)$ with derivative $u\mapsto f(u|x)$ satisfying $f(u|x)\geq c$ for all $u\in(s_1,s_2)$ and $x\in\mathcal X$.
\end{assumption}

This assumption imposes mild regularity conditions on the conditional distribution of $Y$ given $X$. This assumption can be avoided if the function $\psi$ is smooth.

\begin{assumption}\label{as: estimator Fhat}
(i) The estimator $\widehat F$ is such that 
$$
\Pr\left(\sup_{x\in\mathcal X}\Delta(x) > u_0 / 2\right) \to 0.
$$
(ii) In addition,
\begin{equation}\label{eq: estimator Fhat - 1}
\sum_{t=T_1+1}^{T_1+T_2}\Ep\left[ (1 + \|X_t\|)^4\Delta(X_t)^4\mid D_1^{T_1}\right] = o_P(1).
\end{equation}
\end{assumption}
Assumption \ref{as: estimator Fhat} means that the estimator $\widehat F$ converges to the function $F$ sufficiently fast. To obtain some intuition about this assumption, note that under Assumptions \ref{as: beta mixing} and \ref{as: standard convergence}(i), it holds as long as $\sup_{x\in\mathcal X}\Delta(x) = o_P(T^{-1/4})$, which is plausible for nonparametric/machine learning estimators $\widehat F$. Note, however, that \eqref{eq: estimator Fhat - 1} does not actually require a bound on the supremum of the function $\Delta$ and instead uses a suitable weighted average value of this function, which is typically easier to bound.

We are now ready to state the main result of this section:

\begin{theorem}\label{thm: main result} Under Assumptions \ref{as: beta mixing}--\ref{as: estimator Fhat},
\begin{equation}\label{eq: main convergence 2}
\sqrt{T_2}(\widehat\beta - \beta) = \left(\frac{1}{T_2}\sum_{t=T_1+1}^{T_1+T_2}X_tX_t'\right)^{-1}\left(\frac{1}{\sqrt{T_2}}\sum_{t=T_1+1}^{T_1+T_2}X_te_t\right) + o_P(1)\to_d N(0,\Sigma),
\end{equation}
where $\Sigma = (\Ep[XX'])^{-1}\Omega(\Ep[XX'])^{-1}$.
\end{theorem}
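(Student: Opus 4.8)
The plan is to establish the asymptotically linear representation first and then pass to the limit distribution. Write $\widehat Q = T_2^{-1}\sum_{t=T_1+1}^{T_1+T_2}X_tX_t'$ and let $R_t$ denote the population analogue of $\widehat R_t$ in \eqref{eq: rhat def}, i.e. \eqref{eq: definition r new} evaluated at the true $F(\cdot|X_t)$, so that $R_t = X_t'\beta + e_t$ by \eqref{eq: definition R}. Since $\widehat\beta - \beta = \widehat Q^{-1}T_2^{-1}\sum_t X_t(\widehat R_t - X_t'\beta)$ and $\widehat R_t - X_t'\beta = e_t + (\widehat R_t - R_t)$,
$$
\sqrt{T_2}(\widehat\beta - \beta) = \widehat Q^{-1}\Big(\frac{1}{\sqrt{T_2}}\sum_t X_t e_t\Big) + \widehat Q^{-1}\Big(\frac{1}{\sqrt{T_2}}\sum_t X_t(\widehat R_t - R_t)\Big).
$$
First I would show, by the law of large numbers for stationary $\beta$-mixing sequences (Assumption \ref{as: beta mixing}) together with $\Ep[\|X\|^4]<\infty$ and positive-definiteness of $\Ep[XX']$ (Assumption \ref{as: standard convergence}(i),(ii)), that $\widehat Q\to_P\Ep[XX']$ and hence $\widehat Q^{-1}\to_P(\Ep[XX'])^{-1}=O_P(1)$. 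The whole theorem then reduces to proving that the correction term $A_1 := T_2^{-1/2}\sum_t X_t(\widehat R_t - R_t)$ is $o_P(1)$; given this, Assumption \ref{as: standard convergence}(iii) and Slutsky's lemma yield \eqref{eq: main convergence 2} with $\Sigma=(\Ep[XX'])^{-1}\Omega(\Ep[XX'])^{-1}$.

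To control $A_1$ I would use the sample split and condition on $D_1^{T_1}$, under which $\widehat F$ is a fixed function. Since $\widehat R_t - R_t$ depends on $Y_t$ only through the indicators $\mathbb I\{Y_t\le s\}$, its conditional mean given $(X_t,D_1^{T_1})$ is
$$
B(X_t) := \int_{-\infty}^{+\infty}\Big\{(\widehat F(s|X_t)-F(s|X_t))\,\psi(\widehat F(s|X_t)) - \big(\Psi(\widehat F(s|X_t))-\Psi(F(s|X_t))\big)\Big\}\,ds,
$$
and I would split $A_1 = A_{1a}+A_{1b}$ into a conditionally mean-zero ``empirical process'' part $A_{1a}=T_2^{-1/2}\sum_t X_t\{(\widehat R_t-R_t)-B(X_t)\}$ and a ``bias'' part $A_{1b}=T_2^{-1/2}\sum_t X_t B(X_t)$.

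The crux, and the step I expect to be the main obstacle, is turning the Neyman-orthogonality heuristic of Section \ref{sec: motivation} into the quadratic bound $|B(x)|\le C\,\Delta(x)^2$. With $a=F(s|x)$, $b=\widehat F(s|x)$ the integrand equals $\int_a^b(\psi(b)-\psi(u))\,du$, which I would bound by splitting $\mathbb R$ into the tails $\{s<s_1\}\cup\{s>s_2\}$ and the middle $[s_1,s_2]$. On the tails, Assumptions \ref{as: function F}(i) and \ref{as: estimator Fhat}(i) force both $a$ and $b$ into $(0,u_0)\cup(1-u_0,1)$, where $\psi$ is Lipschitz (Assumption \ref{as: function psi}(ii)); this gives the pointwise bound $\tfrac{L}{2}|b-a|^2$, whose integral is at most $\tfrac{L}{2}(\sup_s|b-a|)\int|b-a|\,ds\le\tfrac{L}{2}\Delta(x)^2$, the two factors being exactly the sup- and $L^1$-parts of $\Delta(x)$ in \eqref{eq: Delta definition}. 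On $[s_1,s_2]$, where $\psi$ may jump, I would only use the bounded-variation bound $|\int_a^b(\psi(b)-\psi(u))\,du|\le|b-a|\,\mathrm{TV}_\psi([a\wedge b,a\vee b])$; then the change of variables $w=F(s|x)$ (legitimate since $f(\cdot|x)\ge c$ by Assumption \ref{as: function F}(ii), so $ds\le dw/c$) together with $\mathrm{TV}_\psi([a\wedge b,a\vee b])\le\mathrm{TV}_\psi([w-\Delta(x),w+\Delta(x)])$ and the Fubini identity $\int\mathrm{TV}_\psi([w-\Delta(x),w+\Delta(x)])\,dw\le2\Delta(x)\,\mathrm{TV}_\psi([0,1])$ yields a middle-region contribution of order $\Delta(x)^2$ as well. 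Hence $|B(x)|\le C\Delta(x)^2$, and bounding crudely $\|X_t\|\le(1+\|X_t\|)^2$ followed by Cauchy--Schwarz over $t$ gives $\|A_{1b}\|\le C\,T_2^{-1/2}\sum_t\|X_t\|\Delta(X_t)^2\le C\big(\sum_t(1+\|X_t\|)^4\Delta(X_t)^4\big)^{1/2}$, which is $o_P(1)$ by Assumption \ref{as: estimator Fhat}(ii) and conditional Markov.

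For the mean-zero part $A_{1a}$, the centered summand is $\eta_t=\int(\mathbb I\{Y_t\le s\}-F(s|X_t))(\psi(F(s|X_t))-\psi(\widehat F(s|X_t)))\,ds$, and the same tail/middle analysis (now \emph{without} the extra factor $|b-a|$) gives the first-order bound $|\eta_t|\le\int|\psi(F(s|X_t))-\psi(\widehat F(s|X_t))|\,ds\le C\,\Delta(X_t)$, valid pointwise in $Y_t$. I would then bound the conditional second moment $\Ep[\|A_{1a}\|^2\mid D_1^{T_1}]$: the diagonal terms contribute at most $C^2 T_2^{-1}\sum_t\Ep[\|X_t\|^2\Delta(X_t)^2\mid D_1^{T_1}]=o_P(1)$ by the same Cauchy--Schwarz bound and Assumption \ref{as: estimator Fhat}(ii), while the off-diagonal terms are asymptotically negligible because the summable $\beta$-mixing of Assumption \ref{as: beta mixing} permits a coupling of the second block to a copy independent of $D_1^{T_1}$ at a cost controlled by $\sum_j\beta_j$, leaving only weakly dependent cross terms of smaller order. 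Conditional Markov then gives $A_{1a}=o_P(1)$, so $A_1=o_P(1)$ and the representation follows; the final convergence in distribution is immediate from $\widehat Q^{-1}\to_P(\Ep[XX'])^{-1}$, Assumption \ref{as: standard convergence}(iii), and Slutsky. The difficulty throughout is the non-smoothness of $\psi$ on $[s_1,s_2]$, which is precisely why $\Delta(x)$ combines a sup-norm with an $L^1$-norm, why Assumption \ref{as: function F} confines $F$ and bounds its density below, and why the twice-Gateaux-differentiability-based results of \cite{CCDDHNR18} do not apply directly.
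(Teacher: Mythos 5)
Your skeleton matches the paper's proof exactly: the same decomposition $\widehat R_t - R_t = r_{t1}+r_{t2}$ (your $B(X_t)$ and $\eta_t$ are the paper's $r_{t1}$ and $r_{t2}$), the same reduction via the law of large numbers for $\widehat Q$ and Slutsky, and the same two key deterministic bounds $|r_{t1}|\leq C\Delta(X_t)^2$ and $|r_{t2}|\leq C\Delta(X_t)$. Your derivation of these bounds via the total-variation measure of $\psi$ on the middle region plus the Lipschitz bound on the tails is correct and is a clean alternative to the paper's Lemma A.1 (which instead uses the Jordan decomposition $\psi=\psi_1-\psi_2$ and a monotonicity argument, plus convexity of $\Psi$ for the quadratic bound). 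Your treatment of the bias term $A_{1b}$ is also essentially identical to the paper's treatment of $E_1$ and is complete.

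The gap is in the centered term $A_{1a}$, and it is exactly where the paper invests its hardest work. First, your starting claim that $B(X_t)$ is the conditional mean of $\widehat R_t - R_t$ given $(X_t,D_1^{T_1})$ is false under serial dependence: it requires $\Pr(Y_t\leq s\mid X_t, D_1^{T_1}) = F(s\mid X_t)$, i.e. $Y_t \indep D_1^{T_1}$ given $X_t$, which stationarity and $\beta$-mixing do not deliver (think of an AR process; the second subsample is the continuation of the same series used to build $\widehat F$). So $A_{1a}$ is \emph{not} conditionally mean-zero; its conditional mean is a sum of $T_2$ terms involving $\Pr(Y_t\leq s\mid X_t,D_1^{T_1})-F(s\mid X_t)$, scaled only by $T_2^{-1/2}$, and must itself be shown to be $o_P(1)$. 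This is why the paper assumes an independent copy $D=(X',Y)'$ (for which the mean-zero property does hold conditionally on $D_1^{T_1}$) and proves Lemma A.3 to transfer from the independent copy to the actual observations. Second, your dismissal of the off-diagonal terms by ``coupling at a cost controlled by $\sum_j\beta_j$'' does not close: that coupling budget is a \emph{constant}, not $o(1)$, so a union-bound over coupling failures leaves an uncontrolled $O(1)$ remainder. What actually makes both the conditional mean and the conditional covariances vanish is the interplay between the summable mixing coefficients and the $o_P(1)$ conditional moment bounds of Assumption 5.5(ii): the paper's Lemma A.3 optimizes a threshold trading a $(s_0/\gamma_T)\sum_t\beta_t$ mixing bound against a $(\gamma_T A_T)^{1+\delta}/(\delta s_0^\delta)$ moment bound, and Lemma A.2 runs an analogous layered-indicator argument for the conditional variance (standard mixing covariance inequalities cannot be invoked conditionally on $D_1^{T_1}$, since the conditioning set is the past of the very process whose mixing is assumed). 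Your proposal would be essentially complete in the i.i.d. case, but as written it does not prove the theorem in the stated time-series setting.
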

\begin{remark}[Relation to Double/Debiased Machine Learning]
As discussed in Section \ref{sec: motivation}, our approach to estimation of weighted-average quantile regressions is an instance of the double/debiased machine learning method; e.g. \cite{CCDDHNR18}. In particular, we constructed the random variable $R$ in \eqref{eq: definition r new} so that (i) $\Ep[XR] = \Ep[XX']\beta$, meaning that the least squares projection of $R$ on $X$ correctly identifies $\beta$, and (ii) $\Ep[XR]$ is first-order insensitive with respect to perturbations of the function $F$, appearing in the definition of $R$. The latter condition, commonly referred to as the Neyman orthogonality, means that if we define
\begin{align*}
& R(\eta_1,\eta_2) = \int_0^{+\infty}(1-\Psi(\eta_1(s|X)))ds - \int_{-\infty}^0\Psi(\eta_1(s|X))ds \\
&\qquad\qquad\qquad + \int_{-\infty}^{+\infty}(\eta_1(s|X) - \mathbb I\{Y\leq s\})\eta_2(s|X)ds,
\end{align*}
for all functions $(s,x)\mapsto \eta_1(s|x)$ and $(s,x)\mapsto \eta_2(s|x)$, then $R(F,\psi(F)) = R$ and first-order Gateaux derivatives of the functions $\eta_1\mapsto \Ep[XR(\eta_1,\psi(F))]$ and $\eta_2\mapsto \Ep[XR(F,\eta_2)]$ at $\eta_1 = F$ and $\eta_2 = \psi(F)$, respectively, both vanish. It is this last condition that allows us to derive asymptotic normality of our estimator $\widehat\beta$ under weak conditions on the estimator $\widehat F$ of the function $F$, as specified in Assumption \ref{as: estimator Fhat} (in particular, we do not need to impose the common small bias condition). Our results do not follow from those in \cite{CCDDHNR18} because the function $\Psi$ is not necessarily continuously differentiable under our assumptions (and in fact has kinks in most examples from Section \ref{sec: examples}), and so the function $\eta_1 \mapsto \Ep[XR(\eta_1,\psi(F))]$ does not necessarily have the second-order Gateaux derivative, which is assumed to exist and is required to be suitably bounded in \cite{CCDDHNR18}.\footnote{Gateaux differentiability can be retained by assuming that the function $s\mapsto \widehat F(s|X)$ is increasing almost surely but this is unattractive because machine learning methods may or may not give increasing estimators and applying monotonization procedure may be computationally costly since the procedure would have to be carried out for each observation separately.} Instead, our results employ the smoothing properties of the integrals in the definition of $R$ in \eqref{eq: definition R}.
\qed
\end{remark}

\begin{remark}[Cross-Fitting and I.I.D. Setting]
Throughout this paper, we are assuming that the observations $D_1,\dots,D_T$ are coming from a time series under mixing conditions. Of course, this setting covers the case of i.i.d. observations as well. However, we can construct a more efficient estimator in the latter case via cross-fitting, e.g. \cite{CCDDHNR18}. Indeed, let $\widehat\beta_1$ be the estimator $\widehat\beta$ defined in \eqref{eq: risk regression estimator}. In addition, let
$$
\widehat\beta_2 = \left(\sum_{t=1}^{T_1}X_tX_t'\right)^{-1}\left( \sum_{t=1}^{T_1} X_t \widehat R_t\right),
$$
where $\widehat R_t$ is defined by \eqref{eq: rhat def} with the estimator $\widehat F$ being constructed using data from the second subsample, $D_{T_1+1},\dots,D_{T_1+T_2}$. It is then straightforward to show that the estimator $\widehat\beta = (\widehat\beta_1 + \widehat\beta_2)/2$ will satisfy 
$$
\sqrt{T}(\widehat\beta - \beta) = \left(\frac{1}{T_2}\sum_{t=1}^{T}X_tX_t'\right)^{-1}\left(\frac{1}{\sqrt{T}}\sum_{t=1}^{T}X_te_t\right) + o_P(1)\to_d N(0,\overline\Sigma),
$$
where
$$
\overline\Sigma = (\Ep[XX'])^{-1}\left(\Ep[e^2XX']\right)(\Ep[XX'])^{-1}.
$$
For estimation of $\overline\Sigma$ and construction of standard errors and confidence intervals, it is then possible to use the conventional Eicker-Huber-White formula.\qed
\end{remark}

Next, we consider consistent estimation of the covariance matrix $\Sigma$ appearing in Theorem \ref{thm: main result}. As discussed in the previous section, we focus on the Newey-West estimator:
\begin{equation}\label{eq: sigma estimation}
\widehat \Sigma = \left(\frac{1}{T_2}\sum_{t=T_1+1}^{T_1+T_2}X_tX_t'\right)^{-1}\left(\widehat\Omega_0 + \sum_{j=1}^m w(j,m)(\widehat\Omega_j + \widehat\Omega_j')\right)\left(\frac{1}{T_2}\sum_{t=T_1+1}^{T_1+T_2}X_tX_t'\right)^{-1},
\end{equation}
where $m$ is a tuning parameter, $w$ is a weighting function, and
\begin{equation}\label{eq: omega estimation}
\widehat\Omega_j = \frac{1}{T_2-T_1}\sum_{t=T_1+j+1}^{T_1+T_2} \widehat e_t\widehat e_{t-j}X_tX_{t-j}',\quad\text{for all }j=0,\dots,m.
\end{equation}
The tuning parameter $m$ is often chosen so that $m = m(T)\to\infty$ as $T\to\infty$ and the weighting function is typically defined by $w(j,m) = 1 - j/(m+1)$ for all $j=1,\dots,m$.

To prove consistency of the estimator $\widehat\Sigma$, we will need the following additional notation:
$$
\bar\Omega =  \bar\Omega_0 + \sum_{j=1}^m w(j,m)(\bar\Omega_j + \bar\Omega_j'),
$$
where
$$
\bar\Omega_j = \frac{1}{T_2-T_1}\sum_{t=T_1+j+1}^{T_1+T_2} e_t e_{t-j}X_tX_{t-j}',\quad\text{for all }j=0,\dots,m.
$$
We will impose the following assumptions.

\begin{assumption}\label{as: extra conditions}
(i) The matrix $\bar\Omega$ is consistent for $\Omega$: $\bar\Omega\to_P\Omega$. (ii) In addition, the weighting function $w$ is such that $0\leq w(j,m)\leq 1$ for all $j=1,\dots,m$. (iii) Moreover, the smoothing parameter $m$ is such that $m = o(T^{1/4})$.
\end{assumption}

Assumption \ref{as: extra conditions}(i) is a high-level condition that is familiar from the literature. Primitive conditions ensuring that this assumption is satisfied can be found in \cite{NW87}. Assumption \ref{as: extra conditions}(ii) is satisfied if we set $w(j,m) = 1 - j/(m+1)$, for example, which is a typical choice for the weighting function. Assumption \ref{as: extra conditions}(iii) is a mild growth condition meaning that the tuning parameter $m$ should not grow too fast as $T$ gets large.

In the next result, we prove consistency of the estimator $\widehat\Sigma$.

\begin{theorem}\label{thm: variance estimation}
Under Assumptions \ref{as: beta mixing}--\ref{as: extra conditions}, the estimator $\widehat \Sigma$ is consistent:
$$
\widehat\Sigma\to_P\Sigma.
$$
\end{theorem}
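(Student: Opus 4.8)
The plan is to prove $\widehat\Sigma\to_P\Sigma$ by isolating the three factors of $\widehat\Sigma$ and treating them separately. Write $Q = T_2^{-1}\sum_{t=T_1+1}^{T_1+T_2}X_tX_t'$ for the outer factor and $\widehat\Omega_{\mathrm{mid}} = \widehat\Omega_0 + \sum_{j=1}^m w(j,m)(\widehat\Omega_j + \widehat\Omega_j')$ for the inner sandwiched matrix, so that $\widehat\Sigma = Q^{-1}\widehat\Omega_{\mathrm{mid}}Q^{-1}$ (normalizations being of order $T_2$ throughout, hence immaterial for the stochastic orders). Since $\{D_t\}$ is strictly stationary and $\beta$-mixing with summable coefficients (Assumption \ref{as: beta mixing}), it is ergodic, so the ergodic theorem together with the finite fourth-moment condition (Assumption \ref{as: standard convergence}(i)) gives $Q\to_P\Ep[XX']$; positive-definiteness of $\Ep[XX']$ (Assumption \ref{as: standard convergence}(ii)) then yields $Q^{-1}\to_P(\Ep[XX'])^{-1}$ by continuity of matrix inversion. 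By Slutsky's lemma it therefore suffices to show $\widehat\Omega_{\mathrm{mid}}\to_P\Omega$, after which the continuous mapping theorem delivers $\widehat\Sigma\to_P(\Ep[XX'])^{-1}\Omega(\Ep[XX'])^{-1}=\Sigma$.

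For the inner matrix I would write $\widehat\Omega_{\mathrm{mid}} = \bar\Omega + (\widehat\Omega_{\mathrm{mid}} - \bar\Omega)$, where $\bar\Omega$ is the infeasible analogue built from the true residuals $e_t$. Assumption \ref{as: extra conditions}(i) gives $\bar\Omega\to_P\Omega$ directly, so the entire problem reduces to proving $\widehat\Omega_{\mathrm{mid}} - \bar\Omega = o_P(1)$. Setting $\delta_t = \widehat e_t - e_t = (\widehat R_t - R_t) - X_t'(\widehat\beta - \beta)$ and using the bilinear identity $\widehat e_t\widehat e_{t-j} - e_te_{t-j} = \delta_t e_{t-j} + e_t\delta_{t-j} + \delta_t\delta_{t-j}$, each $\widehat\Omega_j - \bar\Omega_j$ becomes a normalized sum of these three terms weighted by $X_tX_{t-j}'$. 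Applying the Cauchy--Schwarz inequality to each term, I expect the bound $\|\widehat\Omega_j - \bar\Omega_j\| \lesssim \sqrt{AB} + A$, where $A = T_2^{-1}\sum_t\delta_t^2\|X_t\|^2$ and $B = T_2^{-1}\sum_t e_t^2\|X_t\|^2$. By Cauchy--Schwarz on the moments (so that $\Ep[e^2\|X\|^2]\le(\Ep[e^4])^{1/2}(\Ep[\|X\|^4])^{1/2}<\infty$) and the ergodic theorem, $B\to_P\Ep[e^2\|X\|^2]<\infty$, whence $B=O_P(1)$. Since the weights satisfy $0\le w(j,m)\le1$ (Assumption \ref{as: extra conditions}(ii)) and there are $m+1$ lags, summing yields $\|\widehat\Omega_{\mathrm{mid}} - \bar\Omega\| \lesssim m(\sqrt{AB}+A)$.

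The heart of the argument is then to show $A = o_P(T^{-1/2})$, which I would obtain by splitting $\delta_t$. For the $\beta$-component, $T_2^{-1}\sum_t (X_t'(\widehat\beta-\beta))^2\|X_t\|^2 \le \|\widehat\beta-\beta\|^2\,T_2^{-1}\sum_t\|X_t\|^4 = O_P(T^{-1})$, using $\sqrt T$-consistency of $\widehat\beta$ from Theorem \ref{thm: main result} and the ergodic theorem. For the $\widehat R_t - R_t$ component, I would first establish $|\widehat R_t - R_t| \lesssim \Delta(X_t)$ on the event $\{\sup_{x}\Delta(x)\le u_0/2\}$, which has probability tending to one by Assumption \ref{as: estimator Fhat}(i); this is precisely the smoothing bound underlying Theorem \ref{thm: main result}, obtained from the Lipschitz property of $\Psi$, the bounded-variation decomposition of $\psi$ (Assumption \ref{as: function psi}), and a change of variables that exploits the density lower bound of Assumption \ref{as: function F} to control the $s$-measure of the region where $\psi(\widehat F)$ and $\psi(F)$ straddle a jump of $\psi$. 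Then $(\widehat R_t - R_t)^2\|X_t\|^2 \lesssim (1+\|X_t\|)^2\Delta(X_t)^2$, and Cauchy--Schwarz over $t$ gives $T_2^{-1}\sum_t(1+\|X_t\|)^2\Delta(X_t)^2 \le (T_2^{-1}\sum_t(1+\|X_t\|)^4\Delta(X_t)^4)^{1/2}$. Assumption \ref{as: estimator Fhat}(ii), which bounds the summed (not averaged) weighted fourth moments of $\Delta$ by $o_P(1)$, forces this inner average to be $o_P(T^{-1})$ (via the conditional Markov inequality), hence the whole expression to be $o_P(T^{-1/2})$. Combining the two components gives $A = o_P(T^{-1/2})$.

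Finally, the growth condition $m=o(T^{1/4})$ (Assumption \ref{as: extra conditions}(iii)) closes the argument: since $A=o_P(T^{-1/2})$ implies $\sqrt A = o_P(T^{-1/4})$, we have $m\sqrt A = o(T^{1/4})\cdot o_P(T^{-1/4}) = o_P(1)$, hence $m\sqrt{AB} = m\sqrt A\,\sqrt B = o_P(1)$ because $B=O_P(1)$; likewise $mA = o(T^{1/4})\cdot o_P(T^{-1/2}) = o_P(1)$. Therefore $\|\widehat\Omega_{\mathrm{mid}}-\bar\Omega\| = o_P(1)$, and the reduction of the first paragraph completes the proof. The main obstacle I anticipate is exactly this balancing act: the Newey--West truncation accumulates the residual estimation error across $m$ lags, amplifying it by a factor of order $m$, and it is only because the residual error is controlled at the fast rate $o_P(T^{-1/2})$ -- itself a consequence of the strong averaging in Assumption \ref{as: estimator Fhat}(ii) together with the jump-robust bound $|\widehat R_t - R_t|\lesssim\Delta(X_t)$ -- that the product still vanishes under $m=o(T^{1/4})$. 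Getting the bookkeeping of these rates exactly right, rather than any individual inequality, is where the care is required.
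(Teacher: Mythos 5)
Your proposal is correct and follows essentially the same route as the paper's own proof: reduce to $\widehat\Omega - \bar\Omega \to_P 0$ via Assumption \ref{as: extra conditions}(i) and the law of large numbers for $T_2^{-1}\sum_t X_tX_t'$, apply Cauchy--Schwarz lag-by-lag with the same quantities $A = T_2^{-1}\sum_t (\widehat e_t - e_t)^2\|X_t\|^2$ and $B = T_2^{-1}\sum_t e_t^2\|X_t\|^2$, establish $A = o_P(T^{-1/2})$ by splitting $\widehat e_t - e_t$ into the $(\widehat R_t - R_t)$-part (controlled by $\Delta(X_t)$ through the smoothing bound of Lemma \ref{lem: auxiliary 1} and Assumption \ref{as: estimator Fhat}(ii)) and the $X_t'(\widehat\beta-\beta)$-part (controlled by Theorem \ref{thm: main result}), and close the $m$-fold accumulation with $m = o(T^{1/4})$. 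The only cosmetic differences are your exact three-term expansion $\delta_t e_{t-j} + e_t\delta_{t-j} + \delta_t\delta_{t-j}$ versus the paper's two-term identity $e_{t-j}(\widehat e_t - e_t) + \widehat e_t(\widehat e_{t-j} - e_{t-j})$, and your appeal to the ergodic theorem where the paper cites a mixing-sequence law of large numbers.
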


\begin{remark}[Transformations of $X$]
Although we focused on the case of linear weighted-average quantile regressions $\int_0^1 q_{Y|X}(u)\psi(u)du = X'\beta$ throughout the paper, inspecting the proofs reveals that our results equally apply to the more general case where we include transformations of $X$, such as interactions and other higher-order polynomial terms, on the right-hand side of the regression: $\int_0^1 q_{Y|X}(u)\psi(u)du = p(X)'\beta$, where $x\mapsto p(x) = (p_1(x),\dots,p_k(x))'$ is a vector of transformations. In this case, one should simply replace all $X_t$'s in \eqref{eq: risk regression estimator}, \eqref{eq: sigma estimation}, and \eqref{eq: omega estimation} by the corresponding $p(X_t)$'s. Theorems \ref{thm: main result} and \ref{thm: variance estimation} still apply in this case modulo obvious modifications.\qed
\end{remark}

\begin{remark}[Weighted-Average Quantile Regression Estimator as Best Linear Predictor]
When the weighted-average quantile function $x\mapsto \int_0^1 q_{Y|X=x}(u)\psi(u)du$ is not linear, i.e. there exists no $\beta$ such that $\int_0^1 q_{Y|X}(u)\psi(u)du = X'\beta$, it is straightforward to check that Theorems \ref{thm: main result} and \ref{thm: variance estimation} continue to hold with 
$$
\beta = \arg\min_{b\in\mathbb R^p} \Ep\left[\left(\int_0^1 q_{Y|X}(u)\psi(u)du - X'b\right)^2\right].
$$
It thus follows that our weighted-average quantile regression method consistently estimate parameters of the best linear approximation to $x\mapsto \int_0^1 q_{Y|X=x}(u)\psi(u)du$. In this sense, running our estimator makes sense even if it is believed that the linearity assumption of the regression model \eqref{eq: model} is satisfied not exactly but only approximately.
\qed
\end{remark}


\section{Monte Carlo Simulation Study}\label{sec: monte carlo}
In this section, we present results of a small-scale Monte Carlo simulation study that sheds light on finite-sample properties of the weighted-average quantile regression estimator.

We consider the following data-generating processes:
\begin{align*}
& \text{DGP1:}\qquad Y = \varepsilon - X'\bar\beta,\\
& \text{DGP2:}\qquad Y = (1 + 0.2X^1)\varepsilon - X'\bar\beta.
\end{align*}
Depending on the experiment, $X = (X^1,\dots,X^p)'$ and $\bar\beta = (\bar\beta_1,\dots,\bar\beta_p)'$ are vectors either in $\mathbb R^2$ or in $\mathbb R^5$, so that $p = 2$ or $5$, respectively. in the former case, we set $\bar\beta_2 = 0.5$ and vary $\bar\beta_1$ over $\{0, 0.3, 0.6, 0.9\}$. In the latter case, we set $\bar\beta_2 = 0.5$, $\bar\beta_3 = \bar\beta_4 = \bar\beta_5 = 0$, and again vary $\bar\beta_1$ over $\{0, 0.3, 0.6, 0.9\}$. We consider cases with $\varepsilon\sim N(0,1)$ and $\varepsilon \sim t(4)$. In both cases, $\varepsilon$ is independent of $X$. Note that in the latter case, Assumption \ref{as: standard convergence} is actually not satisfied, as random variables with the $t(4)$ distribution have finite moments up-to the 4th order but excluding the 4th order itself, and so this case serves as a way to check whether our methods continue to work if our asymptotic theory assumptions are slightly violated. Finally, we set $X = (X^1,\dots,X^p)'$ so that $X^1 = |\widetilde X^1|$ and $X^j = \widetilde X^j$ for all $j = 2,\dots,p$, where $\widetilde X = (\widetilde X^1,\dots,\widetilde X^p)'$ is a standard normal random vector in $\mathbb R^p$. For simplicity, we assume that the data $(X_1,Y_1),\dots,(X_T,Y_T)$ consists of $T$ i.i.d. realizations of the pair $(X,Y)$. We consider samples of size $T = 1000$ and $2000$.

As a machine learning estimator of the function $F$, we use a version of a random forest. Specifically, recall that any random forest estimator takes the weighted-average form, i.e. an estimator of $\Ep[V|Z = z]$ based on the data $(Z_1,V_1),\dots,(Z_T,V_T)$ will take the form $\widehat \Ep[V|Z = z] = \sum_{t=1}^T w_t V_t$. We do two changes to this estimator. First, once we have the weights $w_t$ from the random forest, we replace the weighted-average estimator by a local linear estimator:
$$
\widehat \Ep[V|Z = z] = z'\arg\min_b \sum_{t=1}^T w_t(V_t - Z_t'b)^2.
$$
This helps to improve the accuracy of the estimator, e.g. \cite{FTAW21}. Second, recall that we need an estimator of $x\mapsto \widehat F(s|x)$ for all $s\in\{Y_1,\dots,Y_{T_1}\}$, which is computationally straightforward but costly when $T_1$ is large. We therefore first split the interval $[\min_{1\leq t\leq T_1}Y_t,\max_{1\leq t\leq T_1}Y_t]$ into $\log T_1$ equal intervals, calculate random forest weights with $s$ being the center of each interval, and then apply the same weights for all $s$ in the same interval. This substantially reduces computational costs as we now need to calculate only $\log T_1$ random forests rather than $T_1$ of them. Closely related ideas were previously used in \cite{M06} who constructed a quantile random forest by applying a local linear quantile estimator based on weights obtained from a (mean) random forest. The main reason we rely on random forest estimators in this paper is that they are easy to train and allow for computational simplifications as explained here.

Also, for our simulations, we set $T_1 = 2T_2$, so that the random forest estimator uses twice as many observations as the OLS estimator. This is meaningful because random forest estimator, being a nonparametric estimator, has a much slower rate of convergence than that of OLS. In addition, to choose the number of leaves in each tree of the random forest estimator, we use sample splitting, namely we use $T/2$ observations to build random forest estimators corresponding to different number of leaves and we use remaining $T_1 - T/2 = T/6$ observations to choose the best random forest estimator according to the mean squared error criterion.

Note that both DGP1 and DGP2 satisfy our linear weighted-average quantile regression model \eqref{eq: model}. DGP1 corresponds to the homoscedastic case and yields $\beta$ in \eqref{eq: model} equal to $-\bar\beta\int_0^1 \psi(u)du$. For this data-generating process, we thus have $\beta = \bar\beta$ for the lower, middle, upper as well as exponential and polynomial regression models\footnote{Following Section \ref{ex: risk regression}, we define the exponential and polynomial regressions by \eqref{eq: model} with $\psi(u) = a\exp(-a(1-u))/(1-\exp(-a))$ for $a>0$ and $\psi(u) = a u^{a-1}$ for $a>1$, respectively.} and $\beta = 0_p$ for the inequality regression model. DGP2 corresponds to the heteroscedastic case and yields $\beta$ in \eqref{eq: model} such that $\beta_1 = 0.2\int_0^1 q_{\varepsilon}(u)\psi(u)du-\bar\beta_1\int_0^1\psi(u)du$ and $(\beta_2, \dots,  \beta_p)' = -(\bar\beta_2,\dots,\bar\beta)'\int_0^1\psi(u)du$, where $q_{\varepsilon}\colon[0,1]\to\mathbb R$ is the quantile function of the random variable $\varepsilon$. For this data-generating process, $(\beta_2,\dots,\beta_p)'$ thus coincides with the same vector under DGP1 but $\beta_1$ can only be calculated numerically.

For each specification of parameter values and each DGP, we repeat the experiment 500 times and estimate the coverage probability for the 90\% confidence interval for $\beta_1$ constructed using $t$-statistics. In addition, we estimate the mean absolute error $\Ep[|\widehat\beta_1 - \beta_1|]$. We present results for the coverage probability and the mean absolute error in Tables \ref{table: simulation results 1} and \ref{table: simulation results 2} of the Appendix, respectively. For each case, we give results for 4 regression models: upper, inequality, middle, and exponential regressions, which are denoted in the tables by $\psi$-type 1, 2, 3, and 4, respectively.

Overall, Table \ref{table: simulation results 1} shows that asymptotic theory from the previous section yields good approximation to the finite sample situation. In particular, the empirical coverage probability of 90\% confidence intervals is close to the nominal coverage probability. The only exception perhaps is the case of the upper and exponential regressions $(\psi = 1,4)$ with heteroscedastic noise, $T = 1000$, $p = 5$, and the $t(4)$ distribution, in which case the asymptotic confidence intervals undercover the true parameter values. However, the coverage improves substantially as we increase the sample size from $T = 1000$ to $T = 2000$. Table \ref{table: simulation results 2} also shows that the mean absolute error for the case with $p=5$ is similar to that for the case with $p = 2$, especially when $T = 2000$. This reinforces the conclusion that the asymptotic theory provides a good approximation to the finite-sample situation.

\section{Empirical Applications}\label{sec: empirical applications}

In this section, we apply our weighted-average quantile regression (WAQR) estimator in two empirical settings. In the first one, we focus on financial market data and study the expected shortfall regression. In the second one, we focus on wage data and study the inequality and social welfare regressions.

\subsection{Financial Market Data}

In this subsection, we apply the WAQR estimator in the asset pricing setting. We investigate the factor loadings of the risk measures of the industry returns. Although
our method is general, we focus on the expected shortfall
(ES), which is one of the most used risk measures in finance \cite[e.g., ][]{GL14, AB16, APPR17}. In this case, our WAQR can be referred to as the expected shortfall regression estimator.

\subsubsection{Expected Shortfall Regression Estimator}

We investigate the factor structures of the 10\% expected shortfalls
of the Fama-French 5 industries. We use the Fama-French 5-factor model
standard in the literature to capture industries' expected shortfalls \cite[e.g., ][]{FF15, FF16}.
Table \ref{tab:Industry-Factor-Loadings} reports the factor exposure
results for the 10\% ES regressions. For comparison, the table
also shows the results based on the mean regression and the 10\%
quantile regression.\footnote{The estimates in the mean and 10\% quantile regressions are multiplied
by -1 to be consistent with the risk regressions.}

The point estimates to the market excess returns based on the 10\%
ES regression are negative and statistically significant at -1.196,
-1.293, -1.400, -1.153, and -1.330 for the consumer, manufacturing,
high tech, health, and other industries, respectively. The negative
coefficients imply that the 10\% ES of the industry returns are
lower when the market excess returns are high. The point estimates for the 10\% ES regression are slightly larger in magnitude
than those of mean regressions and 10\% quantile regressions.

The exposures to the other four factors can substantially differ across
the mean, quantile, and risk measures. For the size factor, the magnitude
of the exposure for the manufacturing  industry is similar across
the mean, quantile, and risk measures. The magnitudes of the exposures
for the other industries are consistently larger for the risk measures relative to those from the mean and quantile regressions. For the value factor, the magnitudes of the exposures for the consumer,
high tech, and health industries are similar across the mean, quantile,
and risk measures. However, the magnitudes of exposures are markedly
higher for the risk measures than for the means and 10\% quantiles
for the manufacturing and the other industries. For example, for the
manufacturing industry, the coefficient estimate to the value factor
is -0.514 for the 10\% ES regression, while the coefficient estimates
are -0.110 and -0.140 for the mean and 10\% quantile, respectively.
For the profitability factor, the magnitudes of the coefficient estimates
for the 10\% ES regressions are generally larger to those of
the mean and 10\% quantile regressions. For example, for the consumer
industry, the coefficient estimate to the profitability factor is
-0.779 based on the 10\% ES regression, while the estimates are
-0.197 and -0.195 for the mean and 10\% quintile, respectively. For
the investment factor, the magnitudes of the coefficient estimates
for the 10\% ES are consistently higher than those of the mean and
10\% quantile regressions for the consumer, the high tech, and the
other industries.\footnote{Table \ref{tab:newey-west} in the Online Appendix further shows the
results of Table \ref{tab:Industry-Factor-Loadings} and uses the
Newey-West procedure to adjust the standard errors. The significance
levels are largely unchanged with the adjustment.}

In summary, the baseline results show that the 10\% ES of the industry
returns are highly exposed to the factors that are designed to explain
the mean returns. The results highlight the different dynamics between
the risk measures and the means or quantiles of returns and the importance
of the factors in capturing the variations of the risk measures
of the portfolio returns.

We further study the time-varying exposures of the 10\% ES of the
industry returns to the Fama-French 5 factors. The period used for
estimation is the past twenty years and we roll the estimation period
every year. The results are summarized in Figure \ref{fig:Rolling-10-Year}.
The exposures of the 10\% ES to the market factor have a downward
spike for the industries around the internet bubble period, implying
that the exposures of the 10\% ES to the market factor increase during
this period.

The exposures of the 10\% ES to the other factors vary substantially
during the sample period. We discuss several examples. For the health
industry, the exposures of the 10\% ES to the value factor are consistently
positive, suggesting that the risk of the health industry as measured
by 10\% ES increases when the value premium is high. The magnitude
of the coefficient estimate increases substantially to around three
from 1990 to early 2000. For the profitability factor, the exposures
of the 10\% ES of the high tech industry increase drastically in the
200s, but decrease to the pre-2000 level since 2010. For the investment
factor, the health industry tends to have negative exposures of its
10\% ES, while the high tech industry tends to have positive exposures
of their 10\% ES. In other words, the risk measured by the 10\% ES
of the health industry decreases when the investment premium is high,
while the risk measured by 10\% ES of the high tech industry increases
when the investment premium is high. The magnitudes of the 10\% ES
of the industries all increased during the early 2000s or the burst
of the internet bubble period.

The time-series results suggest that exposures of the 10\% ES to the
factors varied substantially during the 1990s to the early 2000 period,
coinciding with the beginning and the subsequent burst of the internet
bubble. However, the exposures were relatively stable during the 2008-2009
financial crisis period, although the market experienced drastic volatility
during the crisis period.

\subsubsection{Parametric Estimator}

As discussed above, a (potentially inconsistent) alternative to our WAQR estimator is a parametric estimator. This alternative method estimates exposures of risk measures to a set of covariates by taking the weighted average of the point estimates from the individual quantile regressions. Here, we compare our estimation results with those based on the parametric estimator. The results are documented in
Table \ref{tab:naive_method}.\footnote{The individual quantile regression results from 1\% to 10\% quantiles
are reported in Figure \ref{fig:Coefficient-by-Percentiles}.} Table \ref{tab:naive_method} shows the estimation results for the
10\% ES using the parametric estimator alongside those from our WAQR estimator.

For the exposures to the market factor, the point estimates of the
10\% ES to the market factor from the WAQR estimator are slightly larger than those based on the parametric estimators. However, the
coefficient estimates of the 10\% ES based on the parametric estimator
and the WAQR estimator can differ significantly for the other factors.
We provide several examples of the differences. For the high tech
industry, the coefficient estimate of the 10\% ES to the size factor
is 0.119 based on the WAQR estimator but is 0.062 based on the parametric estimator, which is close in magnitude to that of the
mean and 10\% quantile regressions. For the manufacturing industry,
the coefficient estimate of the 10\% ES to the value factor is -0.514
based on the WAQR estimator but is only -0.084 based on the parametric estimator. Again, the coefficient estimate based on the
parametric estimator is close to those based on the mean and 10\%
quantile regressions.

Overall, we find that the WAQR and parametric
estimators can differ substantially. In particular, when the exposures of the risk
measures differ from those of the mean and quantiles, the parametric
estimator tends to underestimate the exposures. The magnitudes of the coefficient
estimates based on the parametric estimator tend to fall between those
based on the WAQR estimator and those from mean and quantile regressions.

Furthermore, we investigate the underlying reasons behind this discrepancy
between our WAQR estimator and the simple parametric estimator. Relative to the parametric estimator, an important assumption our WAQR estimator relaxed is the assumption that the quantiles are linear in the covariates. So far, we have shown that the WAQR estimator and the parametric estimator tend to provide different coefficient estimates in financial data. We directly test whether the differences stem from the violation
of the linearity assumption the parametric method imposes.

We test whether the 10\% quantile and the 5\% quantile of the industry
returns are significantly exposed to the higher moments and interactions
of the Fama-French 5 factors. We document the results in Table \ref{tab:Quantile-Regression-higher-order}
in the Online Appendix. Panel A of the table shows the quantile regression
results of regressing the industry returns to the first, second, and
third moments of the Fama-French 5 factors. Inconsistent with the
assumption that the quantiles are linear with the covariates, we find
that the 10\% and 5\% quantiles of the industry returns are significantly
exposed to many of the second and third moments of the Fama-French
5 factors. Panel B of the table reports the quantile regression results
of regressing the industry returns to the standalone and interactions
of the Fama-French 5 factors. Again, inconsistent with the assumption
that the quantiles are linear with the covariates, we show that the
10\% and 5\% quantiles of the industry returns are significantly exposed
to a number of the interaction terms of the Fama-French 5 factors.

Overall, we conclude that the discrepancies of the results between
the WAQR estimator and the parametric estimator stem
from the fact that the quantiles are not linear in the covariates, and that estimates from the parametric estimator method are not reliable
in the financial setting.

\subsection{Wage Data}

In this subsection, we apply our method to study wage inequality and social welfare. We start with the wage inequality.

\subsubsection{Inequality Regression}
 We consider
several standard individual characteristics in the literature when
wage or consumption is studied \cite[e.g., ][]{ACF06, BPP08, AP16}, including family size, an indicator
variable for no children, age, and education. The sample goes from
2001 to 2018. The sample and variables are discussed in detail in
Appendix \ref{sec: data}. We apply the WAQR estimator for the inequality regression each year using all the independent variables, and show coefficient estimates in Figure \ref{fig:Coefficient-by-Year-Inequality-1}. For comparison, we also show the estimates based on a parametric estimator which is the differences
of the coefficient estimates for the 90\% quantile and the 10\% quantile
regressions.

We discuss the time trends of the coefficient estimates based on the WAQR estimator. The coefficient estimates for the family size
decrease over time, going from positive to significantly negative
since 2011. That is, the inequality, or the average wage difference
between the top and bottom of the distribution, decreases in family
size in the latter part of the sample. When the parametric estimator
is used, the point estimates slightly decrease over time but stay
positive even towards the end of the sample. The WAQR coefficient estimates
for the indicator variable of no children increase over time, going
from significantly negative to insignificantly positive. The coefficient
estimates for age are relatively stable over time. The point estimates
based on the WAQR estimator are relatively similar to those
based on the parametric estimator for these two variables. 

The point WAQR estimates of education stay significantly positive over
time, suggesting that inequality
increases with education. When the parametric estimator is used the
point estimates are also positive for all years. However, the point
estimates based on the parametric estimator are markedly lower than those
based on the WAQR estimator before 2010. For example, the
magnitude of the point estimate based on the parametric estimator is only
half of that based on the WAQR estimator  for year 2001. In
the latter part of the sample, the point estimates based on the two
methods are relatively similar.

\subsubsection{Social Welfare Regression}

We now apply our WAQR estimator to study the relationship
between the weighted average wage and the individual characteristics. We assume that the weights
are exponential with more weights being placed to the lower income (specifically, we use the exponential weighting function $\psi$ from Section \ref{ex: risk regression} with $u$ replaced by $1-u$ and $a = 10$).

We report the WAQR estimation results in Figure \ref{fig:Coefficient-by-Year-Lowincome} below. For comparison, we also provide the mean regression (OLS) results. The blue line shows the point estimates based on the WAQR estimator over time, while the red line documents the point estimates for the mean regression (OLS).

We discuss the time trends of the coefficient estimates based on the WAQR. The point estimates for the family
size variable increase over time, going from insignificantly negative to positive.
For the mean regressions, the point estimates also increase over time
but stay negative even towards the end of the sample. The point
estimates for the indicator variable of no children decrease over
time, and are lower than those based on the mean regression
in the latter part of the sample. The point estimates based on the WAQR are similar to those based on the mean regressions for the age variable.

The point estimates for the education variable stay significantly positive over
time based on the WAQR, suggesting that the average
wage of the low income group increases with education. When OLS is
used, the point estimates are also positive for all years. However,
the point estimates based on the WAQR are significantly
lower than those based on the mean regression before 2010.
The point estimates converge towards the latter part of the sample.

\section{Conclusion}
We introduce the weighted-average quantile regression that significantly generalizes the commonly used mean and quantile regressions. We develop estimators of such regressions that are straightforward to apply in a variety of empirical settings. In the examples of risk, inequality, and social welfare regressions, the weighted-average quantile regression estimators yield results that are different from those based on both mean and quantile regression methods.

\newpage

\begin{landscape}
\thispagestyle{empty}
\begin{table}[H]
{\caption{\small{}Industry Factor Loadings\label{tab:Industry-Factor-Loadings}}
}

\caption*{
This table shows the industry factor loadings for the Fama-French
5 industries. The ``Mean'' rows reports the results for mean regressions.
The ``10\% VAR'' rows report the results for 10\% quantile regressions.
The ``10\% ES'' rows report results for the 10\% ES risk regressions.
The estimates in the mean and 10\% quantile regressions are multiplied
by -1 to be consistent with the risk regressions. The Fama-French
5-factor model is used. Standard errors are reported in parentheses.
}

\begin{tabular}{lccccccccccccc}
\hline 
{\scriptsize{}Cnsmr} & {\scriptsize{}MKTRF} &  & {\scriptsize{}SMB} &  & {\scriptsize{}HML} &  & {\scriptsize{}RMW} &  & {\scriptsize{}CMA} &  & {\scriptsize{}Constant} &  & {\scriptsize{}$R^{2}$}\tabularnewline
\hline 
{\scriptsize{}Mean} & {\scriptsize{}-0.869} & {\scriptsize{}(0.004)} & {\scriptsize{}-0.059} & {\scriptsize{}(0.008)} & {\scriptsize{}0.165} & {\scriptsize{}(0.007)} & {\scriptsize{}-0.278} & {\scriptsize{}(0.012)} & {\scriptsize{}-0.197} & {\scriptsize{}(0.015)} & {\scriptsize{}-0.003} & {\scriptsize{}(0.005)} & {\scriptsize{}0.904}\tabularnewline
{\scriptsize{}10\% VAR} & {\scriptsize{}-0.873} & {\scriptsize{}(0.008)} & {\scriptsize{}-0.074} & {\scriptsize{}(0.016)} & {\scriptsize{}0.177} & {\scriptsize{}(0.014)} & {\scriptsize{}-0.294} & {\scriptsize{}(0.023)} & {\scriptsize{}-0.195} & {\scriptsize{}(0.030)} & {\scriptsize{}0.374} & {\scriptsize{}(0.009)} & \tabularnewline
{\scriptsize{}10\% ES} & {\scriptsize{}-1.196} & {\scriptsize{}(0.022)} & {\scriptsize{}-0.122} & {\scriptsize{}(0.043)} & {\scriptsize{}0.214} & {\scriptsize{}(0.039)} & {\scriptsize{}-0.779} & {\scriptsize{}(0.061)} & {\scriptsize{}-0.250} & {\scriptsize{}(0.081)} & {\scriptsize{}0.909} & {\scriptsize{}(0.025)} & {\scriptsize{}0.388}\tabularnewline
\hline 
{\scriptsize{}Manuf} & {\scriptsize{}MKTRF} &  & {\scriptsize{}SMB} &  & {\scriptsize{}HML} &  & {\scriptsize{}RMW} &  & {\scriptsize{}CMA} &  & {\scriptsize{}Constant} &  & {\scriptsize{}$R^{2}$}\tabularnewline
\hline 
{\scriptsize{}Mean} & {\scriptsize{}-1.026} & {\scriptsize{}(0.006)} & {\scriptsize{}-0.075} & {\scriptsize{}(0.011)} & {\scriptsize{}-0.110} & {\scriptsize{}(0.010)} & {\scriptsize{}-0.440} & {\scriptsize{}(0.016)} & {\scriptsize{}-0.182} & {\scriptsize{}(0.021)} & {\scriptsize{}0.008} & {\scriptsize{}(0.007)} & {\scriptsize{}0.879}\tabularnewline
{\scriptsize{}10\% VAR} & {\scriptsize{}-0.997} & {\scriptsize{}(0.012)} & {\scriptsize{}-0.101} & {\scriptsize{}(0.023)} & {\scriptsize{}-0.140} & {\scriptsize{}(0.021)} & {\scriptsize{}-0.406} & {\scriptsize{}(0.033)} & {\scriptsize{}-0.099} & {\scriptsize{}(0.043)} & {\scriptsize{}0.527} & {\scriptsize{}(0.013)} & \tabularnewline
{\scriptsize{}10\% ES} & {\scriptsize{}-1.293} & {\scriptsize{}(0.029)} & {\scriptsize{}-0.066} & {\scriptsize{}(0.057)} & {\scriptsize{}-0.514} & {\scriptsize{}(0.051)} & {\scriptsize{}-0.832} & {\scriptsize{}(0.081)} & {\scriptsize{}0.122} & {\scriptsize{}(0.106)} & {\scriptsize{}1.188} & {\scriptsize{}(0.033)} & {\scriptsize{}0.345}\tabularnewline
\hline 
{\scriptsize{}Hitec} & {\scriptsize{}MKTRF} &  & {\scriptsize{}SMB} &  & {\scriptsize{}HML} &  & {\scriptsize{}RMW} &  & {\scriptsize{}CMA} &  & {\scriptsize{}Constant} &  & {\scriptsize{}$R^{2}$}\tabularnewline
\hline 
{\scriptsize{}Mean} & {\scriptsize{}-1.074} & {\scriptsize{}(0.005)} & {\scriptsize{}0.068} & {\scriptsize{}(0.010)} & {\scriptsize{}0.316} & {\scriptsize{}(0.009)} & {\scriptsize{}0.348} & {\scriptsize{}(0.014)} & {\scriptsize{}-0.045} & {\scriptsize{}(0.018)} & {\scriptsize{}-0.008} & {\scriptsize{}(0.006)} & {\scriptsize{}0.922}\tabularnewline
{\scriptsize{}10\% VAR} & {\scriptsize{}-1.075} & {\scriptsize{}(0.011)} & {\scriptsize{}0.056} & {\scriptsize{}(0.021)} & {\scriptsize{}0.339} & {\scriptsize{}(0.019)} & {\scriptsize{}0.370} & {\scriptsize{}(0.030)} & {\scriptsize{}-0.049} & {\scriptsize{}(0.040)} & {\scriptsize{}0.421} & {\scriptsize{}(0.012)} & \tabularnewline
{\scriptsize{}10\% ES} & {\scriptsize{}-1.400} & {\scriptsize{}(0.027)} & {\scriptsize{}0.119} & {\scriptsize{}(0.052)} & {\scriptsize{}0.297} & {\scriptsize{}(0.046)} & {\scriptsize{}1.125} & {\scriptsize{}(0.074)} & {\scriptsize{}-0.280} & {\scriptsize{}(0.097)} & {\scriptsize{}1.082} & {\scriptsize{}(0.030)} & {\scriptsize{}0.445}\tabularnewline
\hline 
{\scriptsize{}Hlth} & {\scriptsize{}MKTRF} &  & {\scriptsize{}SMB} &  & {\scriptsize{}HML} &  & {\scriptsize{}RMW} &  & {\scriptsize{}CMA} &  & {\scriptsize{}Constant} &  & {\scriptsize{}$R^{2}$}\tabularnewline
\hline 
{\scriptsize{}Mean} & {\scriptsize{}-0.816} & {\scriptsize{}(0.007)} & {\scriptsize{}0.021} & {\scriptsize{}(0.013)} & {\scriptsize{}0.319} & {\scriptsize{}(0.012)} & {\scriptsize{}0.084} & {\scriptsize{}(0.019)} & {\scriptsize{}-0.124} & {\scriptsize{}(0.025)} & {\scriptsize{}-0.003} & {\scriptsize{}(0.008)} & {\scriptsize{}0.765}\tabularnewline
{\scriptsize{}10\% VAR} & {\scriptsize{}-0.803} & {\scriptsize{}(0.013)} & {\scriptsize{}0.016} & {\scriptsize{}(0.025)} & {\scriptsize{}0.327} & {\scriptsize{}(0.022)} & {\scriptsize{}0.114} & {\scriptsize{}(0.035)} & {\scriptsize{}-0.098} & {\scriptsize{}(0.046)} & {\scriptsize{}0.592} & {\scriptsize{}(0.014)} & \tabularnewline
{\scriptsize{}10\% ES} & {\scriptsize{}-1.153} & {\scriptsize{}(0.026)} & {\scriptsize{}0.096} & {\scriptsize{}(0.051)} & {\scriptsize{}0.251} & {\scriptsize{}(0.045)} & {\scriptsize{}0.190} & {\scriptsize{}(0.072)} & {\scriptsize{}-0.045} & {\scriptsize{}(0.094)} & {\scriptsize{}1.207} & {\scriptsize{}(0.029)} & {\scriptsize{}0.322}\tabularnewline
\hline 
{\scriptsize{}Other} & {\scriptsize{}MKTRF} &  & {\scriptsize{}SMB} &  & {\scriptsize{}HML} &  & {\scriptsize{}RMW} &  & {\scriptsize{}CMA} &  & {\scriptsize{}Constant} &  & {\scriptsize{}$R^{2}$}\tabularnewline
\hline 
{\scriptsize{}Mean} & {\scriptsize{}-1.060} & {\scriptsize{}(0.004)} & {\scriptsize{}0.005} & {\scriptsize{}(0.007)} & {\scriptsize{}-0.632} & {\scriptsize{}(0.007)} & {\scriptsize{}0.153} & {\scriptsize{}(0.011)} & {\scriptsize{}0.311} & {\scriptsize{}(0.014)} & {\scriptsize{}0.002} & {\scriptsize{}(0.004)} & {\scriptsize{}0.962}\tabularnewline
{\scriptsize{}10\% VAR} & {\scriptsize{}-1.059} & {\scriptsize{}(0.007)} & {\scriptsize{}0.012} & {\scriptsize{}(0.013)} & {\scriptsize{}-0.604} & {\scriptsize{}(0.012)} & {\scriptsize{}0.175} & {\scriptsize{}(0.018)} & {\scriptsize{}0.308} & {\scriptsize{}(0.024)} & {\scriptsize{}0.338} & {\scriptsize{}(0.007)} & \tabularnewline
{\scriptsize{}10\% ES} & {\scriptsize{}-1.330} & {\scriptsize{}(0.033)} & {\scriptsize{}0.349} & {\scriptsize{}(0.064)} & {\scriptsize{}-1.685} & {\scriptsize{}(0.058)} & {\scriptsize{}0.278} & {\scriptsize{}(0.092)} & {\scriptsize{}0.765} & {\scriptsize{}(0.120)} & {\scriptsize{}1.043} & {\scriptsize{}(0.037)} & {\scriptsize{}0.427}\tabularnewline
\hline 
\end{tabular}{\scriptsize\par}
\end{table}
\end{landscape}

\begin{landscape}
\thispagestyle{empty}
\begin{table}[H]
{\caption{\small{}Compare with Parametric Estimator\label{tab:naive_method}}
}

\caption*{
This table shows the industry factor loadings for the 10\% ES regression for the Fama-French 5 industries based on the parametric estimator and the WAQR estimator. The ``10\% Parametric'' rows report results using the parametric
estimator. The ``10\% ES'' rows report results using the WAQR estimator. The Fama-French 5-factor model is used.  Standard errors are reported in parentheses.
}

\begin{tabular}{lccccccccccccc}
\hline 
{\scriptsize{}Cnsmr} & {\scriptsize{}MKTRF} &  & {\scriptsize{}SMB} &  & {\scriptsize{}HML} &  & {\scriptsize{}RMW} &  & {\scriptsize{}CMA} &  & {\scriptsize{}Constant} &  & {\scriptsize{}$R^{2}$}\tabularnewline
\hline 
{\scriptsize{}10\% Parametric} & {\scriptsize{}-0.862} &  & {\scriptsize{}-0.094} &  & {\scriptsize{}0.179} &  & {\scriptsize{}-0.294} &  & {\scriptsize{}-0.181} &  & {\scriptsize{}0.548} &  & \tabularnewline
{\scriptsize{}10\% ES} & {\scriptsize{}-1.196} & {\scriptsize{}(0.022)} & {\scriptsize{}-0.122} & {\scriptsize{}(0.043)} & {\scriptsize{}0.214} & {\scriptsize{}(0.039)} & {\scriptsize{}-0.779} & {\scriptsize{}(0.061)} & {\scriptsize{}-0.250} & {\scriptsize{}(0.081)} & {\scriptsize{}0.909} & {\scriptsize{}(0.025)} & {\scriptsize{}0.388}\tabularnewline
\hline 
{\scriptsize{}Manuf} & {\scriptsize{}MKTRF} &  & {\scriptsize{}SMB} &  & {\scriptsize{}HML} &  & {\scriptsize{}RMW} &  & {\scriptsize{}CMA} &  & {\scriptsize{}Constant} &  & {\scriptsize{}$R^{2}$}\tabularnewline
\hline 
{\scriptsize{}10\% Parametric} & {\scriptsize{}-1.016} &  & {\scriptsize{}-0.111} &  & {\scriptsize{}-0.084} &  & {\scriptsize{}-0.440} &  & {\scriptsize{}-0.038} &  & {\scriptsize{}0.782} &  & \tabularnewline
{\scriptsize{}10\% ES} & {\scriptsize{}-1.293} & {\scriptsize{}(0.029)} & {\scriptsize{}-0.066} & {\scriptsize{}(0.057)} & {\scriptsize{}-0.514} & {\scriptsize{}(0.051)} & {\scriptsize{}-0.832} & {\scriptsize{}(0.081)} & {\scriptsize{}0.122} & {\scriptsize{}(0.106)} & {\scriptsize{}1.188} & {\scriptsize{}(0.033)} & {\scriptsize{}0.345}\tabularnewline
\hline 
{\scriptsize{}Hitec} & {\scriptsize{}MKTRF} &  & {\scriptsize{}SMB} &  & {\scriptsize{}HML} &  & {\scriptsize{}RMW} &  & {\scriptsize{}CMA} &  & {\scriptsize{}Constant} &  & {\scriptsize{}$R^{2}$}\tabularnewline
\hline 
{\scriptsize{}10\% Parametric} & {\scriptsize{}-1.077} &  & {\scriptsize{}0.062} &  & {\scriptsize{}0.360} &  & {\scriptsize{}0.440} &  & {\scriptsize{}-0.070} &  & {\scriptsize{}0.639} &  & \tabularnewline
{\scriptsize{}10\% ES} & {\scriptsize{}-1.400} & {\scriptsize{}(0.027)} & {\scriptsize{}0.119} & {\scriptsize{}(0.052)} & {\scriptsize{}0.297} & {\scriptsize{}(0.046)} & {\scriptsize{}1.125} & {\scriptsize{}(0.074)} & {\scriptsize{}-0.280} & {\scriptsize{}(0.097)} & {\scriptsize{}1.082} & {\scriptsize{}(0.030)} & {\scriptsize{}0.445}\tabularnewline
\hline 
{\scriptsize{}Hlth} & {\scriptsize{}MKTRF} &  & {\scriptsize{}SMB} &  & {\scriptsize{}HML} &  & {\scriptsize{}RMW} &  & {\scriptsize{}CMA} &  & {\scriptsize{}Constant} &  & {\scriptsize{}$R^{2}$}\tabularnewline
\hline 
{\scriptsize{}10\% Parametric} & {\scriptsize{}-0.813} &  & {\scriptsize{}0.009} &  & {\scriptsize{}0.342} &  & {\scriptsize{}0.110} &  & {\scriptsize{}-0.128} &  & {\scriptsize{}0.900} &  & \tabularnewline
{\scriptsize{}10\% ES} & {\scriptsize{}-1.153} & {\scriptsize{}(0.026)} & {\scriptsize{}0.096} & {\scriptsize{}(0.051)} & {\scriptsize{}0.251} & {\scriptsize{}(0.045)} & {\scriptsize{}0.190} & {\scriptsize{}(0.072)} & {\scriptsize{}-0.045} & {\scriptsize{}(0.094)} & {\scriptsize{}1.207} & {\scriptsize{}(0.029)} & {\scriptsize{}0.322}\tabularnewline
\hline 
{\scriptsize{}Other} & {\scriptsize{}MKTRF} &  & {\scriptsize{}SMB} &  & {\scriptsize{}HML} &  & {\scriptsize{}RMW} &  & {\scriptsize{}CMA} &  & {\scriptsize{}Constant} &  & {\scriptsize{}$R^{2}$}\tabularnewline
\hline 
{\scriptsize{}10\% Parametric} & {\scriptsize{}-1.050} &  & {\scriptsize{}0.000} &  & {\scriptsize{}-0.609} &  & {\scriptsize{}0.182} &  & {\scriptsize{}0.284} &  & {\scriptsize{}0.486} &  & \tabularnewline
{\scriptsize{}10\% ES} & {\scriptsize{}-1.330} & {\scriptsize{}(0.033)} & {\scriptsize{}0.349} & {\scriptsize{}(0.064)} & {\scriptsize{}-1.685} & {\scriptsize{}(0.058)} & {\scriptsize{}0.278} & {\scriptsize{}(0.092)} & {\scriptsize{}0.765} & {\scriptsize{}(0.120)} & {\scriptsize{}1.043} & {\scriptsize{}(0.037)} & {\scriptsize{}0.427}\tabularnewline
\hline 
\end{tabular}{\scriptsize\par}
\end{table}
\end{landscape}

\begin{figure}[H]
\caption{Rolling 20 Year Coefficient\label{fig:Rolling-10-Year}}

{\tiny{}\medskip{}}
\caption*{
This figure shows the WAQR estimates for the time-varying exposures of the 10\% ES of the industry returns to the Fama-French 5 factors. The industries include
the Fama-French 5 industries: consumer, manufacturing, high tech,
health, and other. The period used for estimation is the past 20 years and the
estimation period is rolled over every year.
}
{\tiny{}\medskip{}}
\begin{centering}
\subfloat{\includegraphics[width=7cm]{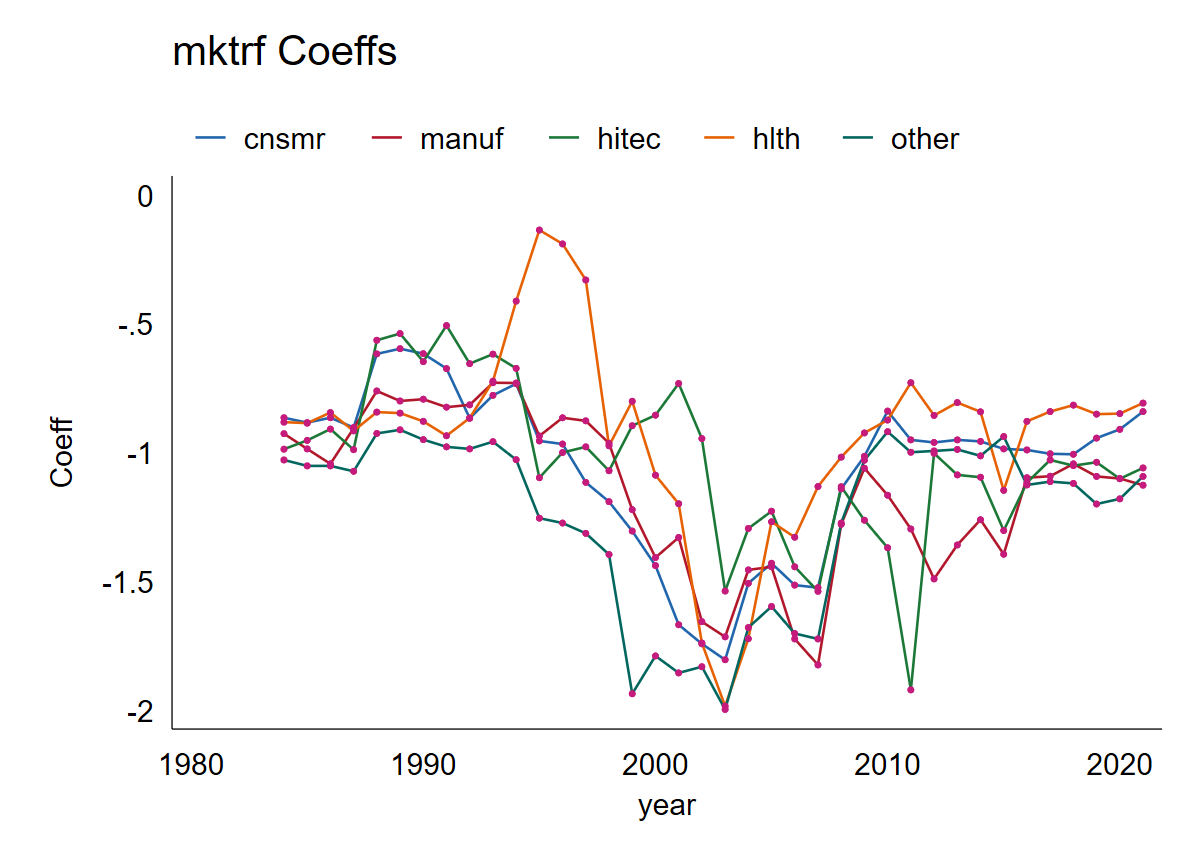}}\subfloat{\includegraphics[width=7cm]{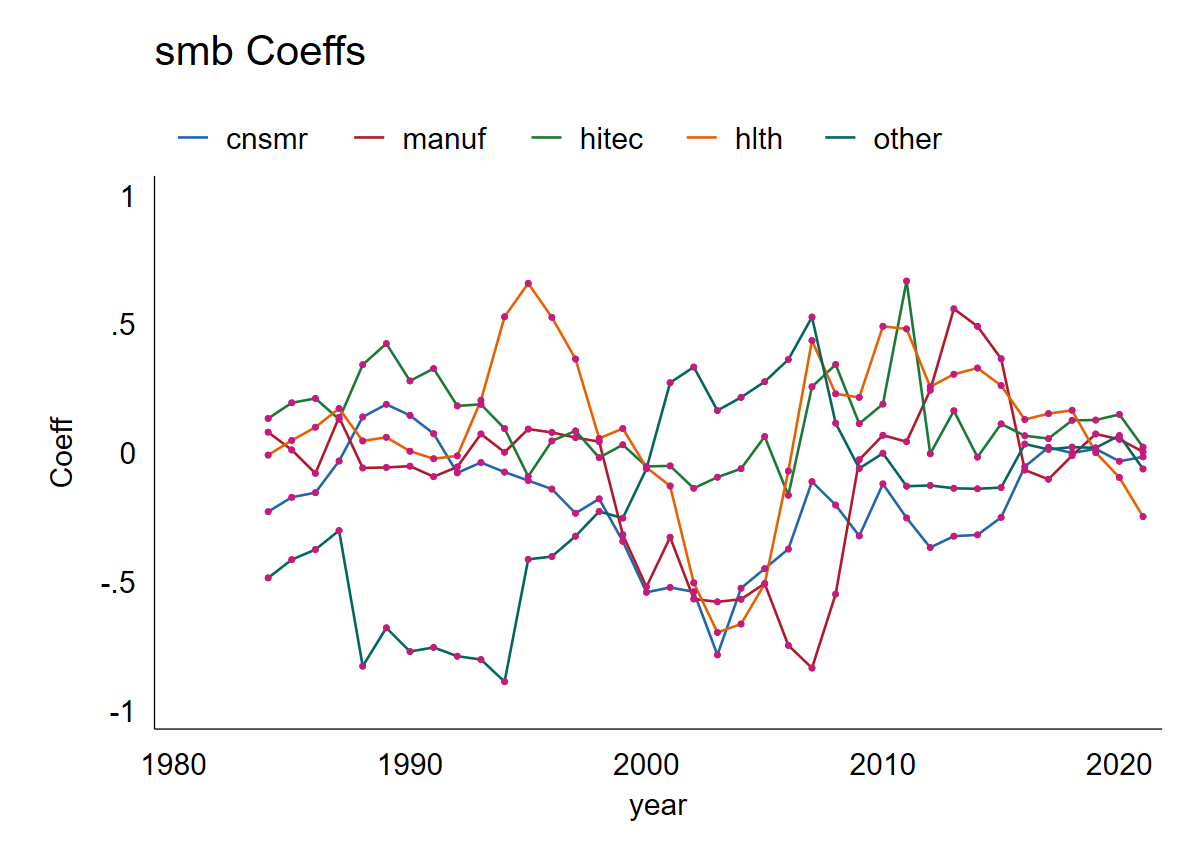}}
\end{centering}
\begin{centering}
\subfloat{\includegraphics[width=7cm]{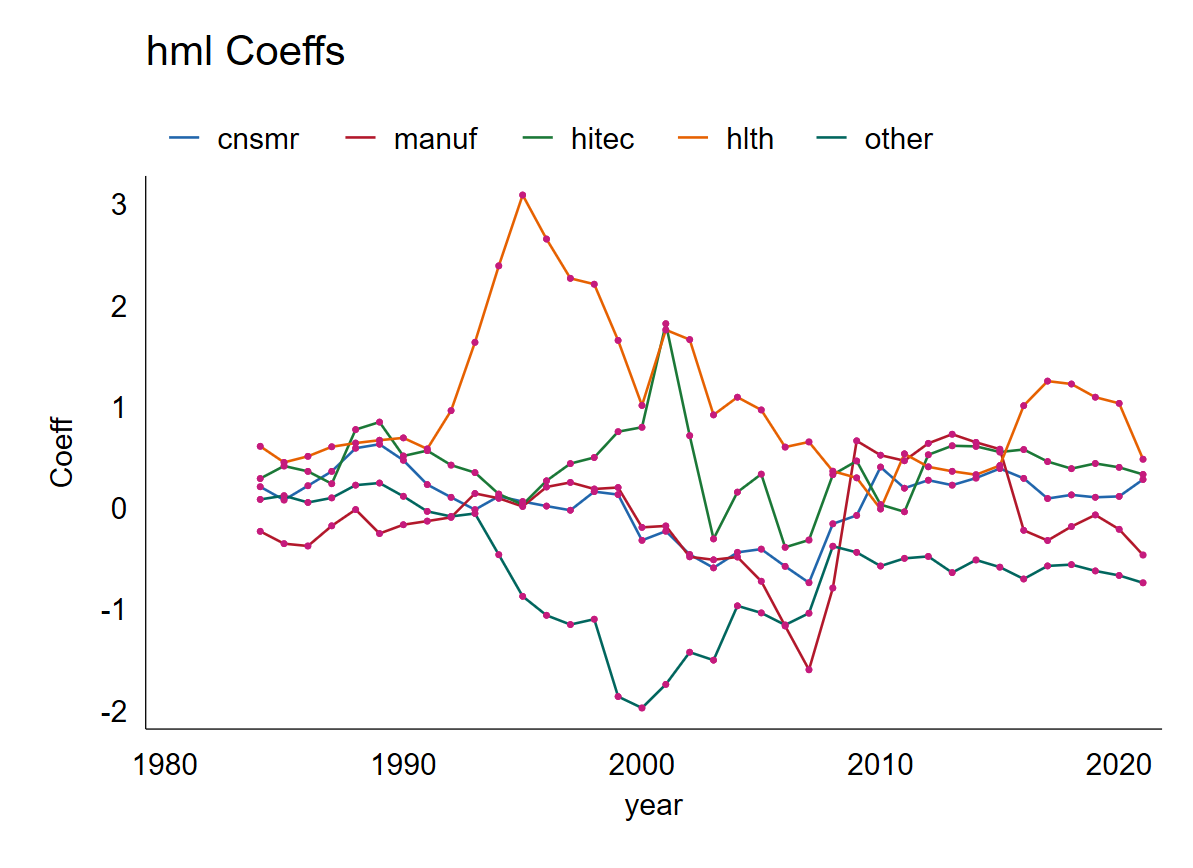}}\subfloat{\includegraphics[width=7cm]{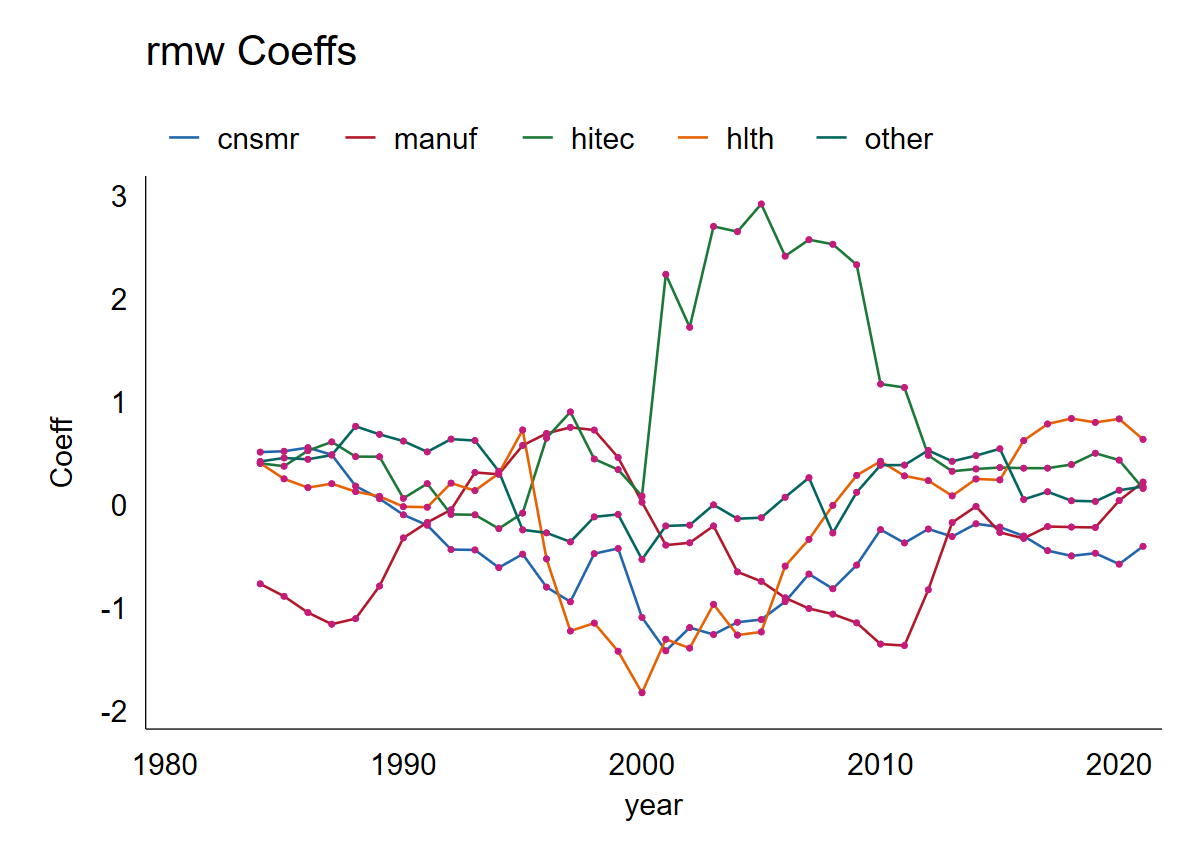}}
\end{centering}
\begin{centering}
\subfloat{\includegraphics[width=7cm]{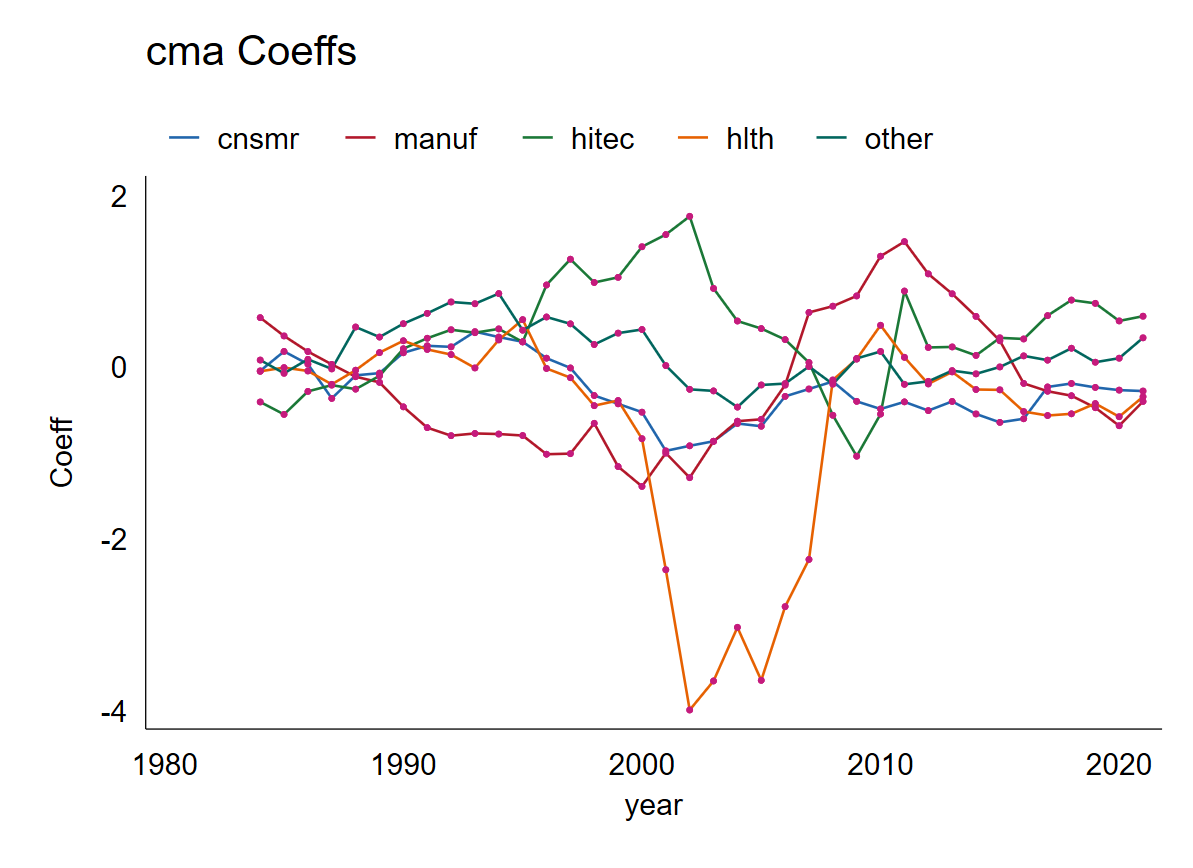}}
\subfloat{\includegraphics[width=7cm]{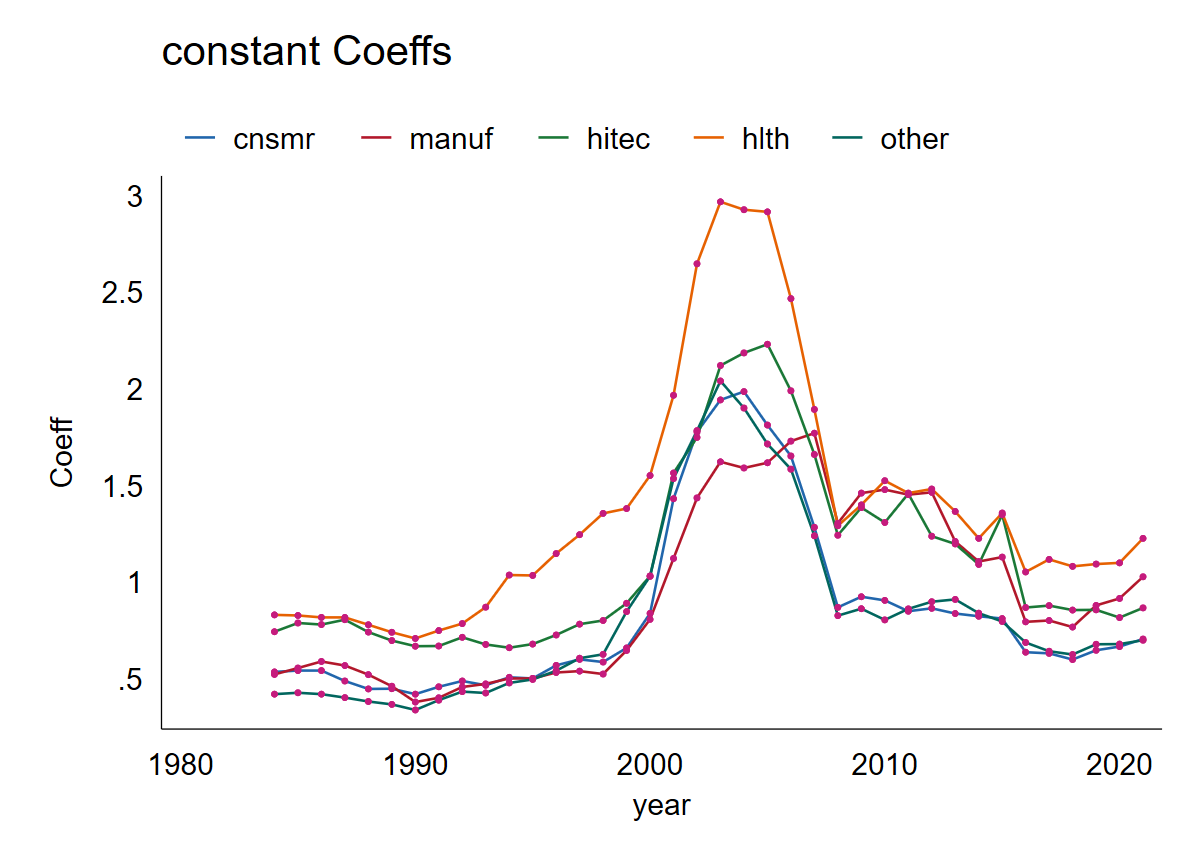}}
\end{centering}
\end{figure}

\begin{figure}[H]
\caption{Inequality Regression Coefficients\label{fig:Coefficient-by-Year-Inequality-1}}
\caption*{
This figure shows the time-varying WAQR coefficient estimates of the difference
between the average logged weekly wages of the top and bottom 10\%
to several standard individual characteristics, including family size,
an indicator variable of no children, age, and education. Each year,
one inequality regression is conducted with all the independent variables.
The blue lines represent coefficient estimates based on the inequality
regression and red lines represent coefficient estimates based on
a native method. The 95\% confidence intervals for the point estimates
based on the inequality regressions are plotted in dash lines. 
}

\begin{centering}
\subfloat{\includegraphics[width=7cm]{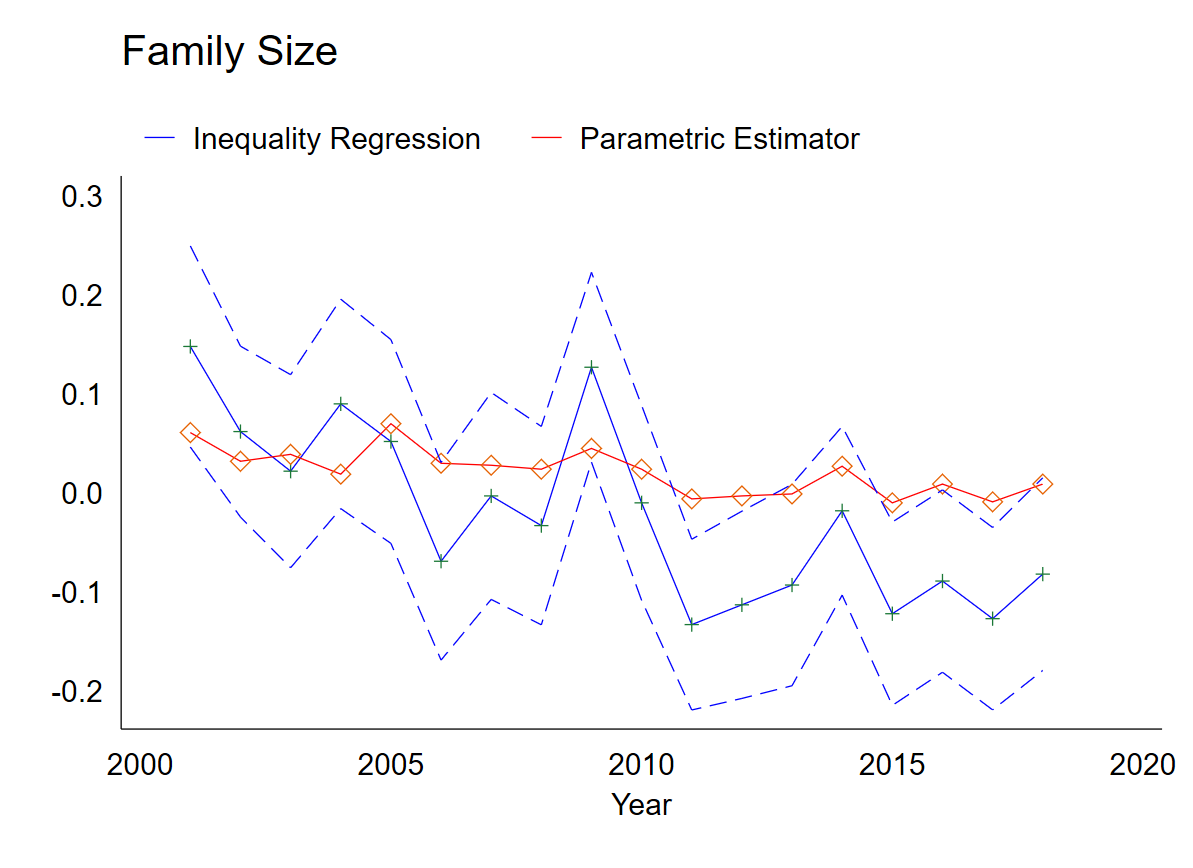}}\subfloat{\includegraphics[width=7cm]{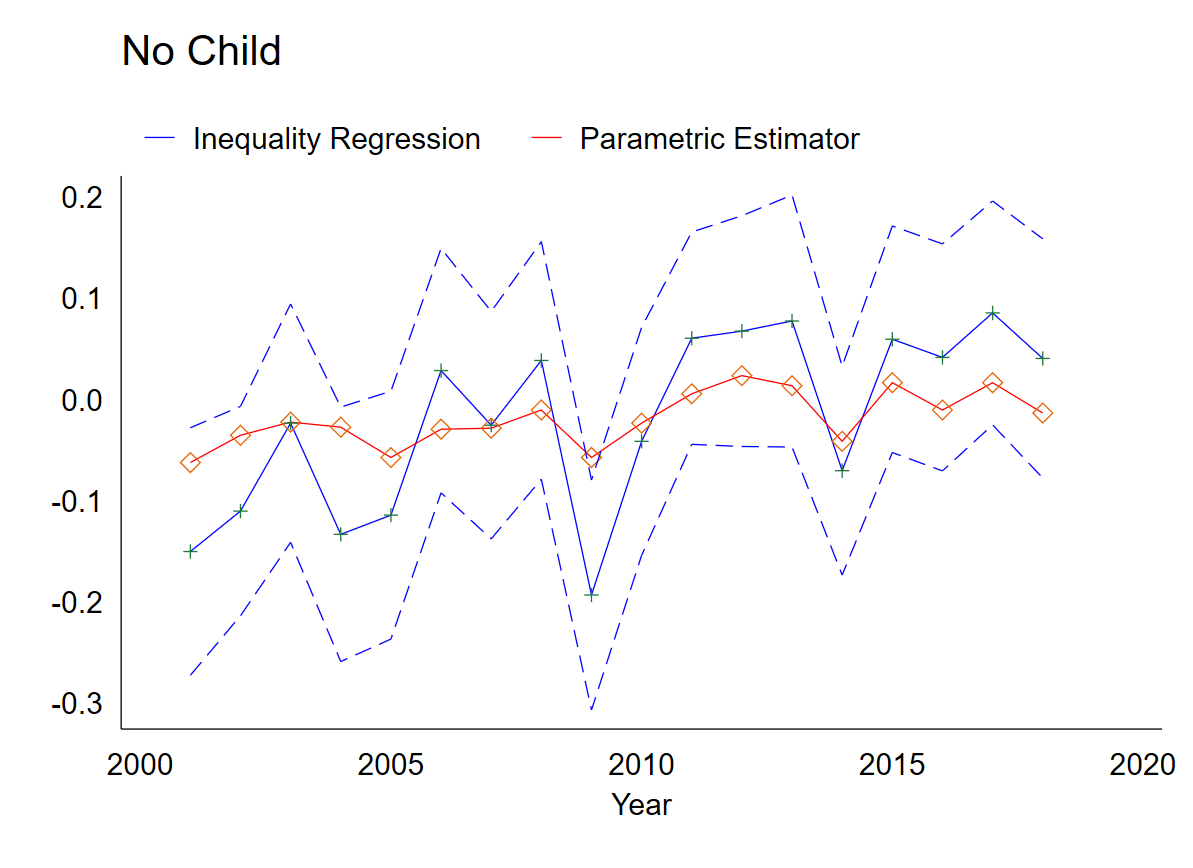}}
\end{centering}
\begin{centering}
\subfloat{\includegraphics[width=7cm]{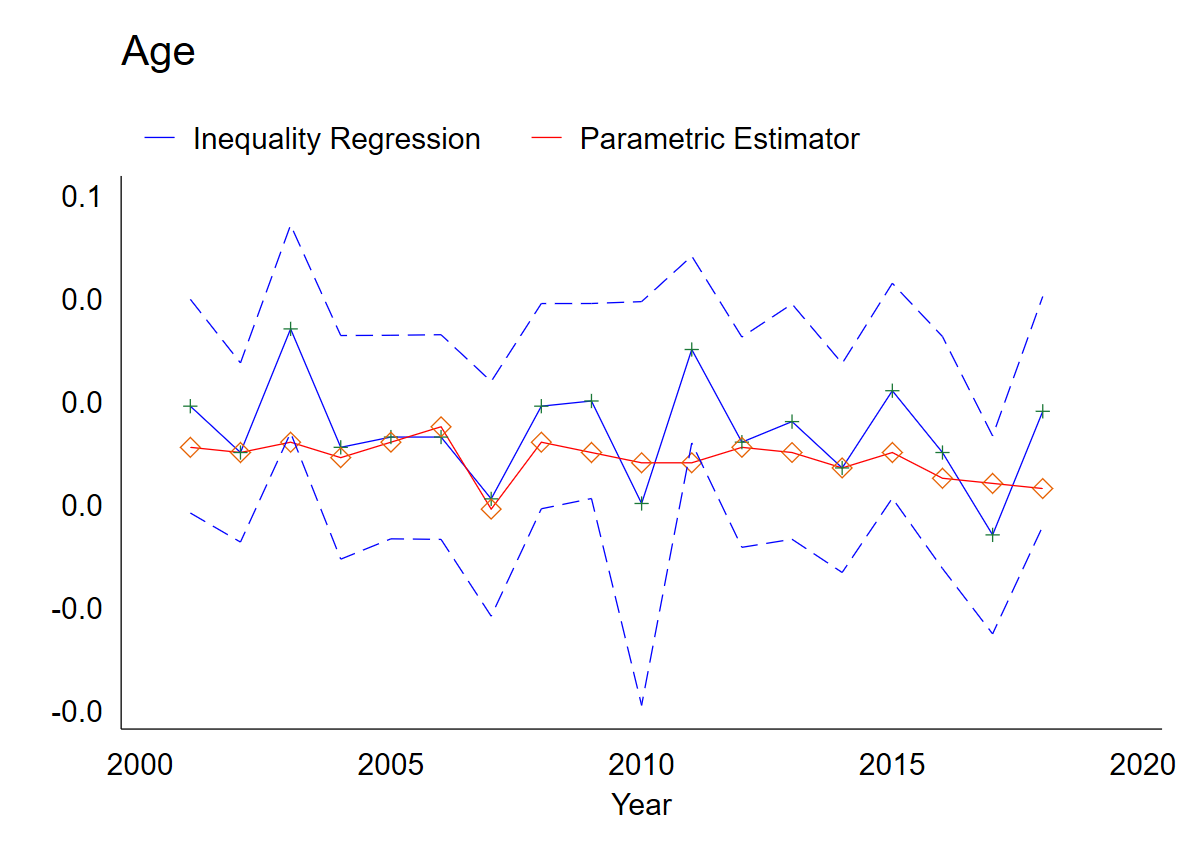}}\subfloat{\includegraphics[width=7cm]{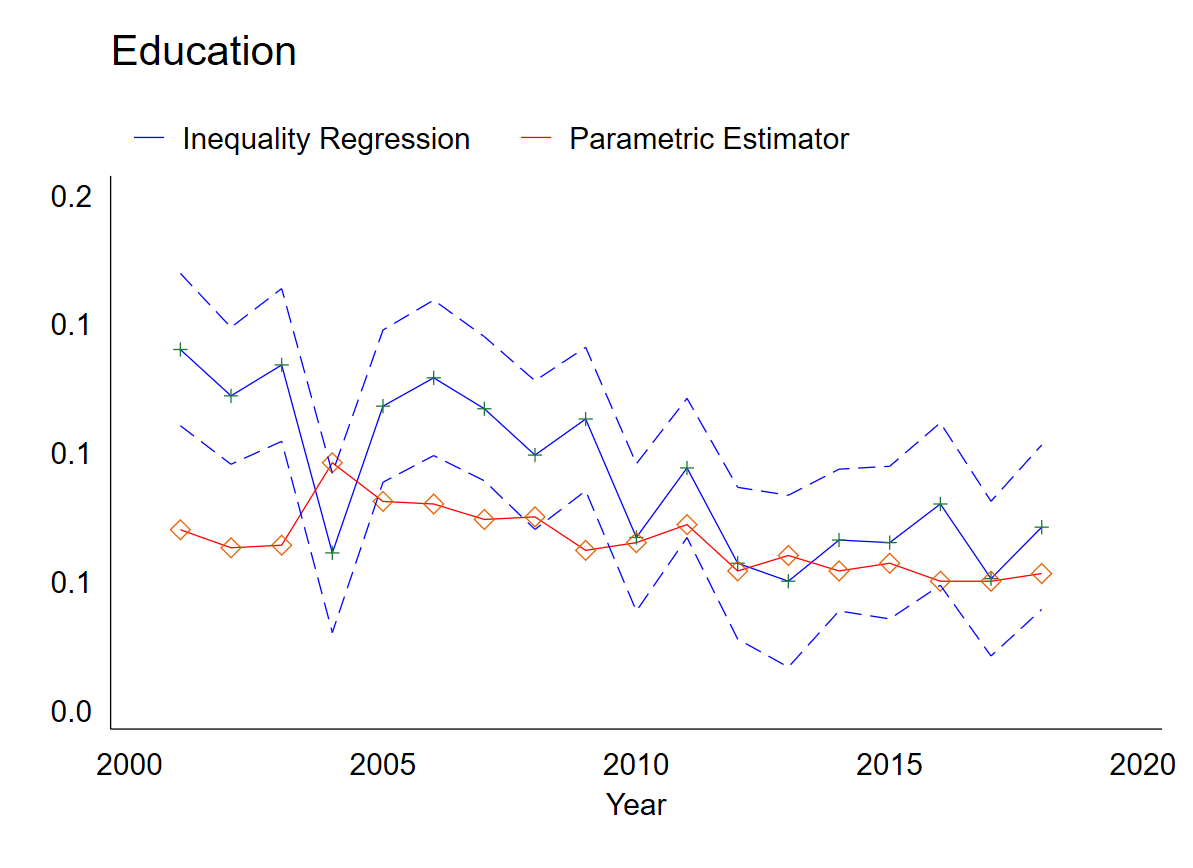}}
\end{centering}
\end{figure}

\begin{figure}[H]
\caption{Social Welfare Regression Coefficients\label{fig:Coefficient-by-Year-Lowincome}}

\caption*{
This figure shows the time-varying WAQR coefficient estimates for the social welfare (exponential) regression, with dependent variable being wage and the vector of independent variables including individual characteristics (family size,
an indicator variable of no children, age, and education). Each year,
one social welfare (exponential) regression is conducted with all the independent variables.
The blue lines represent coefficient estimates based on the social welfare (exponential)
regression and red lines represent coefficient estimates based on
the mean regression. The 95\% confidence intervals for the point estimates
based on the social welfare (exponential) regressions are plotted in dash lines. 
}

\begin{centering}
\subfloat{\includegraphics[width=7cm]{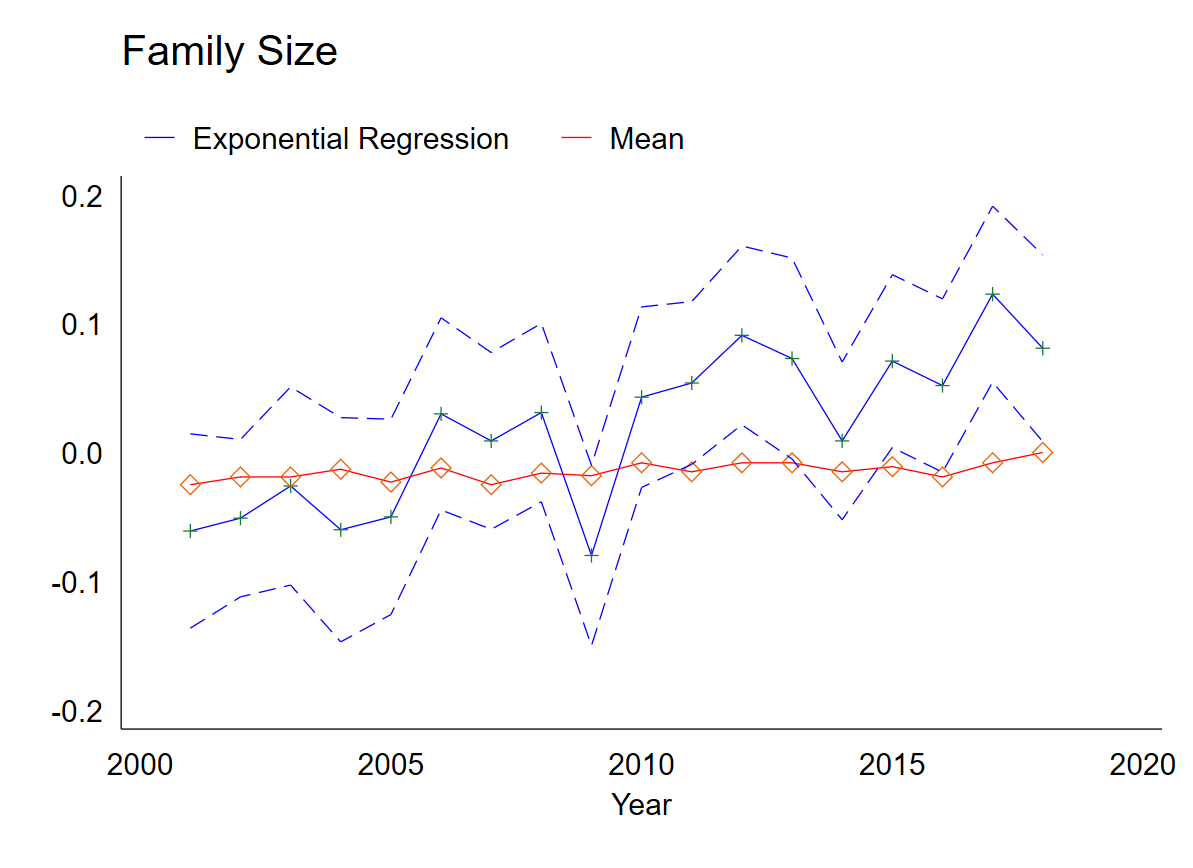}}\subfloat{\includegraphics[width=7cm]{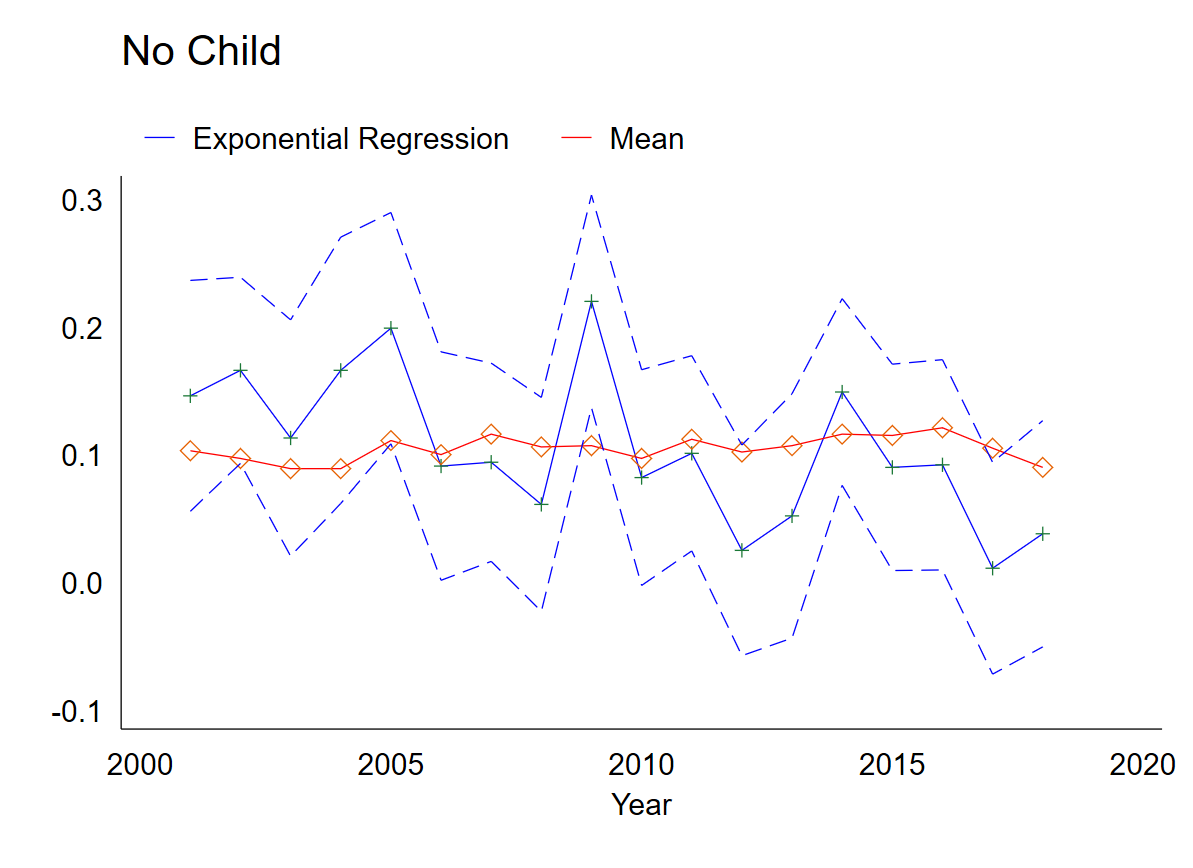}}
\end{centering}
\begin{centering}
\subfloat{\includegraphics[width=7cm]{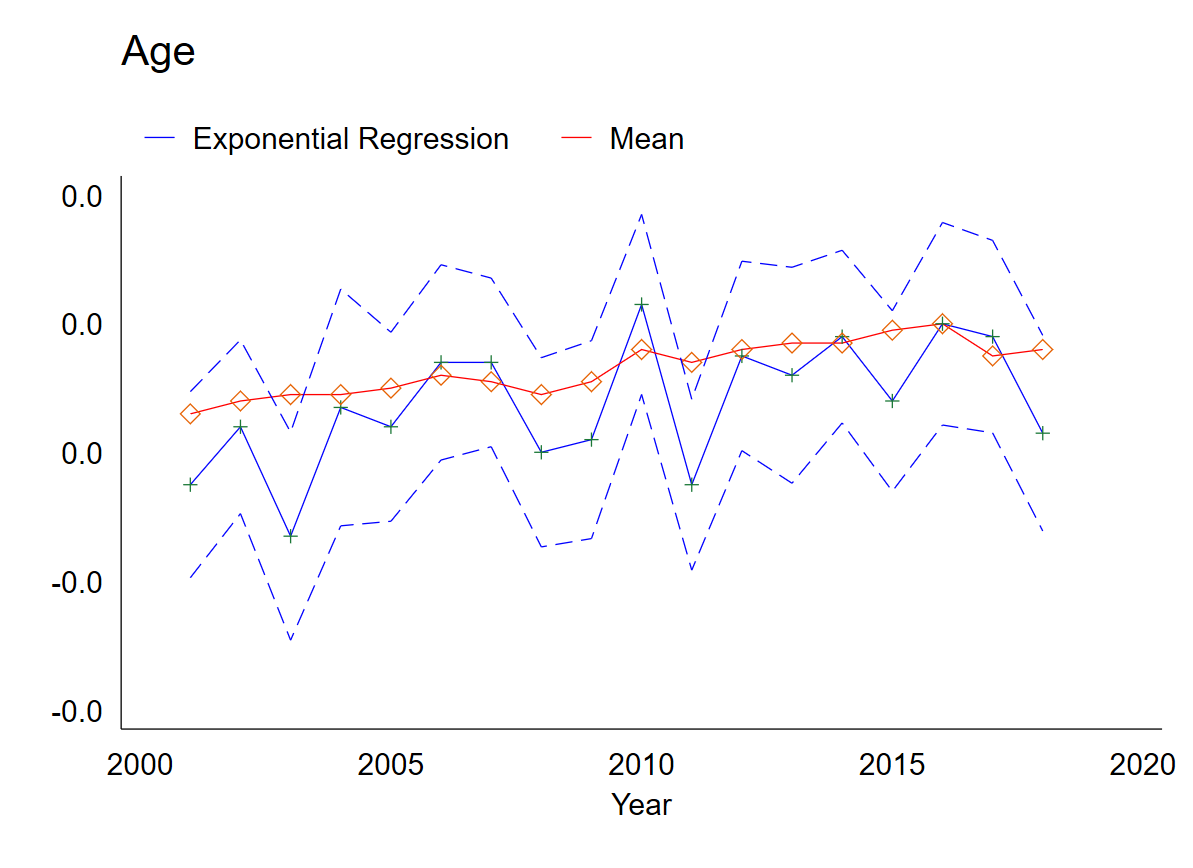}}\subfloat{\includegraphics[width=7cm]{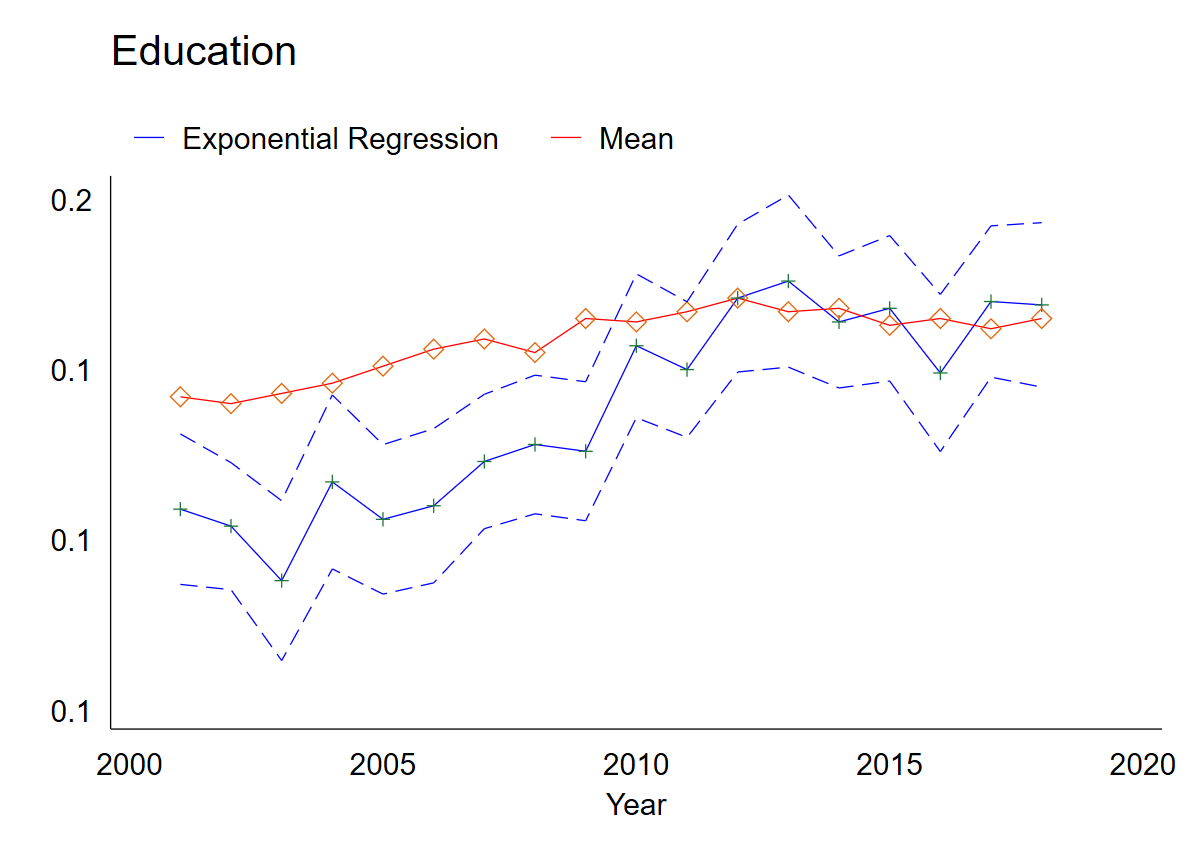}}
\end{centering}
\end{figure}

\clearpage

\appendix

\begin{center}
\textbf{{\Large For Online Publication}}
\end{center}

\section{Proof of Theorem \ref{thm: main result}}
The proof of the theorem is long. We therefore start with a sequence of useful lemmas. Throughout this section, we will assume, without loss of generality, that $D = (X',Y)'$ is independent of $\{D_t\}_{t\in\mathbb Z}$.

\begin{lemma}\label{lem: auxiliary 1}
Under Assumptions \ref{as: function psi}, \ref{as: function F}, and \ref{as: estimator Fhat}(i), we have
\begin{equation}\label{eq: psi-norm bound lem 1}
\int_{-\infty}^{+\infty}\left| \psi(\widehat F(s|x)) - \psi(F(s|x)) \right| ds \leq C\Delta(x)
\end{equation}
for all $x\in\mathcal X$ with probability approaching one, where $C>0$ is some constant.
\end{lemma}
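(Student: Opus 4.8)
The plan is to split the integral over $s$ at the two fixed points $s_1<s_2$ and treat the two tails and the middle separately, exploiting that Assumption \ref{as: function psi} gives two different kinds of control on $\psi$ in these regions. I would work throughout on the event $\{\sup_{x\in\mathcal X}\Delta(x)\le u_0/2\}$, which has probability approaching one by Assumption \ref{as: estimator Fhat}(i); on this event both $F(s|x)$ and $\widehat F(s|x)$ stay inside the smoothness windows $(0,u_0)$ and $(1-u_0,1)$ in the two tails, where $\psi$ is Lipschitz, while in the middle window $(s_1,s_2)$ the density lower bound from Assumption \ref{as: function F}(ii) will let me convert the merely bounded-variation fluctuations of $\psi$ into an $O(\Delta(x))$ bound.

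For the left tail $s\le s_1$, Assumption \ref{as: function F}(i) gives $F(s|x)\le F(s_1|x)<u_0/2$, and since the supremum part of $\Delta(x)$ is at most $u_0/2$ on the chosen event, also $\widehat F(s|x)<u_0$; hence both arguments lie in $[0,u_0)$, where Assumption \ref{as: function psi}(ii) makes $\psi$ Lipschitz with some constant $L$. Therefore $\int_{-\infty}^{s_1}|\psi(\widehat F(s|x))-\psi(F(s|x))|\,ds\le L\int_{-\infty}^{+\infty}|\widehat F(s|x)-F(s|x)|\,ds\le L\,\Delta(x)$, using the integral part of the definition of $\Delta(x)$. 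The right tail $s\ge s_2$ is symmetric, now with both arguments in $(1-u_0,1]$, and yields the same bound $L\,\Delta(x)$.

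The main obstacle is the middle region $(s_1,s_2)$, where $\psi$ is only of bounded variation and may jump, so no pointwise Lipschitz bound is available. Here I would represent the increment of $\psi$ through its total-variation measure $\mu$, of total mass $V<\infty$ (the total variation of $\psi$, finite by Assumption \ref{as: function psi}(i)), writing $|\psi(\widehat F(s|x))-\psi(F(s|x))|\le\mu\big([F(s|x)\wedge\widehat F(s|x),\,F(s|x)\vee\widehat F(s|x)]\big)$. Applying Fubini to exchange the $ds$-integral with $d\mu(w)$ reduces the task to bounding, for each fixed level $w$, the Lebesgue measure of the set of $s\in(s_1,s_2)$ for which $w$ lies between $F(s|x)$ and $\widehat F(s|x)$. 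Since the supremum part of $\Delta(x)$ controls $|\widehat F(s|x)-F(s|x)|$, this set is contained in $\{s\in(s_1,s_2):|F(s|x)-w|\le\Delta(x)\}$, and because $f(\cdot|x)\ge c$ on $(s_1,s_2)$ by Assumption \ref{as: function F}(ii), this level set has Lebesgue measure at most $2\Delta(x)/c$. Integrating in $w$ against $\mu$ then gives $\int_{s_1}^{s_2}|\psi(\widehat F(s|x))-\psi(F(s|x))|\,ds\le (2V/c)\,\Delta(x)$.

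Combining the three pieces yields the claim with $C=2L+2V/c$. The only genuinely delicate step is the middle region; the tail estimates are routine once the smoothness windows are identified. I expect the measure-theoretic bookkeeping for the total-variation measure (handling jumps and endpoint conventions) to be harmless for the integral bound, as it affects $\psi$ only on a countable set.
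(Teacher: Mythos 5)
Your proof is correct, and the tail estimates coincide with the paper's: both arguments work on the event $\{\sup_{x\in\mathcal X}\Delta(x)\leq u_0/2\}$ supplied by Assumption \ref{as: estimator Fhat}(i), note that $F(s|x)$ and $\widehat F(s|x)$ then remain inside the windows where Assumption \ref{as: function psi}(ii) makes $\psi$ Lipschitz, and charge the resulting bound to the integral component of $\Delta(x)$. The genuine difference is how you handle the middle region $(s_1,s_2)$, which is the heart of the lemma. The paper's proof starts from the Jordan decomposition $\psi=\psi_1-\psi_2$ into bounded, increasing, non-negative pieces (this is precisely how Assumption \ref{as: function psi}(i) enters), uses monotonicity to sandwich $|\psi_1(\widehat F(s|x))-\psi_1(F(s|x))|\leq \psi_1(F(s|x)+\Delta_1(x))-\psi_1(F(s|x)-\Delta_1(x))$, and then changes variables $z=F(s|x)$ via the density lower bound $f(\cdot|x)\geq c$ from Assumption \ref{as: function F}(ii), so that the integral telescopes into two short integrals of $\psi_1$ over intervals of length $2\Delta_1(x)$, giving the bound $2\Delta_1(x)\psi_1(1)/c$. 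You instead keep $\psi$ whole, dominate each increment by the total-variation measure $\mu$ of the interval between $F(s|x)$ and $\widehat F(s|x)$, and apply Tonelli: for each level $w$, the set of $s\in(s_1,s_2)$ at which the pair $(F,\widehat F)$ straddles $w$ is contained in $\{s:|F(s|x)-w|\leq\Delta(x)\}$, whose Lebesgue measure is at most $2\Delta(x)/c$ by the same density bound, and integrating in $w$ yields $(2V/c)\Delta(x)$. The two devices are close cousins---both count, level by level of $\psi$, how long $F$ and $\widehat F$ can straddle that level, and both constants are of the form (size of $\psi$)$/c$---but your version avoids the decomposition and the attendant normalizations (non-negativity, extension beyond $[0,1]$) and gives a constant in terms of the total variation $V$ rather than sup-norms of the monotone pieces; in that sense it is more direct. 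What the paper's route buys is reuse downstream: the decomposition $\psi=\psi_1-\psi_2$ and the monotone-sandwich device introduced in this lemma are recycled in the proof of Theorem \ref{thm: main result} (where, for instance, convexity of $\Psi$ for increasing $\psi$ is exploited), whereas your level-set argument is self-contained to the lemma.
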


\begin{proof}
By Assumption \ref{as: function psi}(i), the function $\psi$ can be decomposed as $\psi = \psi_1 - \psi_2$, where both $\psi_1\colon[0,1]\to\mathbb R$ and $\psi_2\colon[0,1]\to\mathbb R$ are bounded and increasing functions. Moreover, by Assumption \ref{as: function psi}(ii), we can choose $\psi_1$ and $\psi_2$ such that they are both continuously differentiable on $(0,u_0)$ and $(1-u_0,1)$ with bounded derivatives. Moreover, by suitable shifting these functions, we can assume, without loss of generality, that they are both non-negative. We will show how to prove that \eqref{eq: psi-norm bound lem 1} holds with $\psi$ replaced by $\psi_1$. We will then note that the same argument applies in the case of $\psi_2$ as well, and so \eqref{eq: psi-norm bound lem 1} will follow from the triangle inequality.

For all $x\in\mathcal X$, denote
$$
\Delta_1(x) = \sup_{s\in\mathbb R}\left| \widehat F(s|x) - F(s|x) \right|\text{ and }\Delta_2(x) = \int_{-\infty}^{+\infty} \left| \widehat F(s|x) - F(s|x)  \right| ds
$$
so that $\Delta(x) = \Delta_1(x) + \Delta_2(x)$. In addition, extend the function $\psi_1$ from $[0,1]$ to $\mathbb R$ by setting $\psi_1(u) = \psi_1(0)$ for all $u < 0$ and $\psi_1(u) = \psi_1(1)$ for all $u>1$. Defined this way, the function $\psi_1$ is bounded, increasing, and non-negative on $\mathbb R$.

Now, since $\psi_1$ is increasing, we have for all $x\in\mathcal X$ and $s\in\mathbb R$ that
$$
\psi_1(\widehat F(s|x)) - \psi_1(F(s|x)) \leq \psi_1(F(s|x)+\Delta_{1}(x)) - \psi_1(F(s|x))
$$
and
$$
\psi_1(F(s|x)) - \psi_1( \widehat  F(s|x)) \leq \psi_1(F(s|x)) - \psi_1(F(s|x)-\Delta_{1}(x)).
$$
Hence,
\begin{align*}
&\left| \psi_1(\widehat F(s|x)) - \psi_1(F(s|x)) \right| \\
&\qquad \leq \max\Big(\psi_1(F(s|x)+\Delta_{1}(x)) - \psi_1(F(s|x)),\psi_1(F(s|x)) - \psi_1(F(s|x)-\Delta_{1}(x))\Big)\\
&\qquad \leq \psi_1(F(s|x)+\Delta_{1}(x)) -  \psi_1(F(s|x)-\Delta_{1}(x)).
\end{align*}
Therefore,
\begin{align}
&\int_{s_1}^{s_2}|\psi_1(\widehat F(s|x)) - \psi_1(F(s|x))|ds \nonumber \\
&\qquad \leq\int_{s_1}^{s_2}\left( \psi_1(F(s|x) + \Delta_{1}(x)) - \psi_1(F(s|x)-\Delta_{1}(x)) \right) ds \nonumber\\
& \qquad \leq \frac{1}{c} \int_{F(s_1|x)}^{F(s_2|x)}(\psi_1(z+\Delta_{1}(x)) - \psi_1(z-\Delta_{1}(x)))dz\nonumber\\
& \qquad = \frac{1}{c}\int_{F(s_2|x)-\Delta_{1}(x)}^{F(s_2|x)+\Delta_{1}(x)}\psi_1(z)dz - \frac{1}{c}\int_{F(s_1|x)-\Delta_{1}(x)}^{F(s_1|x)+\Delta_{1}(x)}\psi_1(z)dz   \leq \frac{2\Delta_{1}(x)\psi_1(1)}{c}, \label{eq: sup-integral bound 1}
\end{align}
where the second inequality follows from Assumption \ref{as: function F}(ii) by carrying out the change of variables $s\mapsto F(s|x) = z$, and the third from the fact that $\psi_1$ is non-negative. 

Moreover, by Assumptions \ref{as: function F}(i) and \ref{as: estimator Fhat}(i), we have
$$
\sup_{s\leq s_1} \widehat F(s|x) \leq \sup_{s\leq s_1} F(s|x) + \Delta_{1}(x) \leq F(s_1|x) + \Delta_{1}(x) < u_0/2 + u_0/2 \leq u_0
$$
and
$$
\inf_{s\geq s_2} \widehat F(s|x) \geq \inf_{s\geq s_2} F(s|x) - \Delta_{1}(x) \geq F(s_2|x) - \Delta_{1}(x) > 1 - u_0/2 - u_0/2 = 1 - u_0
$$
with probability approaching one uniformly over $x\in\mathcal X$. Therefore, by Assumptions \ref{as: function psi}(iii) and \ref{as: function F}(i), for some constant $C_{\psi}>0$,
\begin{equation}\label{eq: sup-integral bound 2}
\int_{-\infty}^{s_1}\left| \psi_1(\widehat F(s|x)) - \psi_1(F(s|x)) \right| ds \leq C_{\psi}\int_{-\infty}^{s_1}\left|\widehat F(s|x) - F(s|x) \right| ds \leq C_{\psi}\Delta_{2}(x)
\end{equation}
and
\begin{equation}\label{eq: sup-integral bound 3}
\int_{s_2}^{\infty}\left| \psi_1(\widehat F(s|x)) - \psi_1(F(s|x)) \right| ds \leq C_{\psi}\int_{s_2}^{\infty}\left|\widehat F(s|x) - F(s|x) \right| ds \leq C_{\psi}\Delta_{2}(x)
\end{equation}
with probability approaching one uniformly over $x\in\mathcal X$. Combining \eqref{eq: sup-integral bound 1}, \eqref{eq: sup-integral bound 2}, and \eqref{eq: sup-integral bound 3} gives \eqref{eq: psi-norm bound lem 1} with $\psi$ replaced by $\psi_1$. In addition, we can prove by the same argument that \eqref{eq: psi-norm bound lem 1} holds with $\psi$ replaced by $\psi_2$ as well. The asserted claim now follows from the triangle inequality.
\end{proof}

\begin{lemma}\label{lem: conditional variance}
Consider a sequence of functions $\{f_T\}_{T\geq 2}$ such that for all $T\geq 2$, the function $f_T$ is mapping $\mathcal D\times\mathcal D^{T_1}$ into $\mathbb R$, where $\mathcal D$ is the support of $D$. Then
$$
\textrm{{\em Var}}\left( \sum_{t=T_1+1}^{T_1+T_2} f_T(D_t,D_1^{T_1}) \mid D_1^{T_1} \right) = o_P(T)
$$
as long as 
\begin{equation}\label{eq: fourth moment convergence}
\sum_{t=T_1+1}^{T_1+T_2}\Ep\left[|f_T(D_t,D_1^{T_1})|^4 \mid D_1^{T_1}\right] = o_P(1)
\end{equation}
and Assumption \ref{as: beta mixing} is satisfied.
\end{lemma}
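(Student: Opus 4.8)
The plan is to split the conditional variance into diagonal and off-diagonal parts and to control the latter through a $\beta$-mixing coupling, using a truncation device to keep the coupling error summable. Write $g_t=f_T(D_t,D_1^{T_1})$ and $S=\sum_{t=T_1+1}^{T_1+T_2}g_t$, abbreviate $\Ep_1[\cdot]=\Ep[\cdot\mid D_1^{T_1}]$, $\Var_1=\Var(\cdot\mid D_1^{T_1})$, $\Cov_1=\Cov(\cdot\mid D_1^{T_1})$, and set $Q=\sum_{t=T_1+1}^{T_1+T_2}\Ep_1[g_t^4]=o_P(1)$. Conditionally on $D_1^{T_1}$ each $g_t=h(D_t)$ for the fixed data-dependent function $h=f_T(\cdot,D_1^{T_1})$. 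The main obstacle dictates the whole argument: conditioning on the first subsample destroys any clean pathwise mixing bound on $\Cov_1(g_t,g_{t'})$, since the relevant conditional mixing quantities are random and only their expectations are governed by $\beta_j$; moreover, with only fourth moments a Davydov-type inequality would produce the factor $\beta_j^{1/2}$, which is not assumed summable. Truncation resolves both difficulties.

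Fix $\tau>0$ and set $g_t^{(1)}=g_t\mathbb{I}\{|g_t|\le\tau\}$, $g_t^{(2)}=g_t-g_t^{(1)}$, $S^{(i)}=\sum_t g_t^{(i)}$, so that $\Var_1(S)\le 2\Var_1(S^{(1)})+2\Var_1(S^{(2)})$. For the tail part, $(g_t^{(2)})^2\le g_t^4/\tau^2$ pointwise, and bounding each conditional covariance by conditional Cauchy-Schwarz gives $\Var_1(S^{(2)})\le\big(\sum_t(\Ep_1[(g_t^{(2)})^2])^{1/2}\big)^2\le T_2\sum_t\Ep_1[(g_t^{(2)})^2]\le T_2\tau^{-2}Q$. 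For the bounded part, the diagonal obeys $\sum_t\Var_1(g_t^{(1)})\le\sum_t\Ep_1[g_t^2]\le\sqrt{T_2}\,Q^{1/2}$, where I use $\Ep_1[g_t^2]\le(\Ep_1[g_t^4])^{1/2}$ followed by Cauchy-Schwarz over $t$.

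The crux is the off-diagonal sum for the bounded part. For a pair $t'=t+j$ with $j\ge 1$ I would invoke Berbee's coupling to obtain, on an enriched space, a copy $D_{t+j}^{*}$ that is distributed as $D_{t+j}$, is independent of $\sigma(D_s:s\le t)$ (hence of both $D_t$ and $D_1^{T_1}$), and satisfies $\Pr(D_{t+j}^{*}\ne D_{t+j})\le\beta_j$. With $g_{t+j}^{(1)*}=h^{(1)}(D_{t+j}^{*})$, where $h^{(1)}$ is the $\tau$-truncation of $h$, conditional independence yields $\Cov_1(g_t^{(1)},g_{t+j}^{(1)*})=0$, so $\Cov_1(g_t^{(1)},g_{t+j}^{(1)})=\Cov_1(g_t^{(1)},g_{t+j}^{(1)}-g_{t+j}^{(1)*})$. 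Since both factors are bounded by $\tau$ and the difference vanishes off $E_{t,j}=\{D_{t+j}\ne D_{t+j}^{*}\}$, this gives $|\Cov_1(g_t^{(1)},g_{t+j}^{(1)})|\le 4\tau^2 P_{t,j}$ with $P_{t,j}=\Pr(E_{t,j}\mid D_1^{T_1})$. The point of truncating is precisely that the coupling error now enters to \emph{first} power, and $\Ep[P_{t,j}]=\Pr(E_{t,j})\le\beta_j$ because $\sigma(D_1^{T_1})\subseteq\sigma(D_s:s\le t)$. Hence $\Xi:=\sum_{j\ge1}\sum_t P_{t,j}$ satisfies $\Ep[\Xi]\le T_2\sum_{j\ge1}\beta_j=O(T)$ by Assumption \ref{as: beta mixing}, so $\Xi=O_P(T)$ and the off-diagonal contribution is at most $8\tau^2\Xi=O_P(\tau^2 T)$.

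Collecting the three estimates, $\Var_1(S)/T\le C\big(\sqrt{T_2}\,Q^{1/2}/T+\tau^{-2}Q+\tau^2\,\Xi/T\big)$. For each fixed $\tau$ the first two terms are $o_P(1)$ since $Q=o_P(1)$, while the third is $O_P(\tau^2)$ because $\Xi/T$ is tight; letting $T\to\infty$ and then $\tau\to0$ forces $\Var_1(S)/T\to_P 0$, that is $\Var_1(S)=o_P(T)$, as claimed. The only step requiring care is the coupling together with the bound $\Ep[P_{t,j}]\le\beta_j$, which relies on the coupling characterization of absolute regularity and on the nesting of the conditioning $\sigma$-algebra inside the past.
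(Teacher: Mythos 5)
Your proof is correct, and it takes a genuinely different route from the paper's. The paper avoids truncation and coupling altogether: it writes each $f_t$ through the layer-cake identity $f_t=\int_0^\infty(\mathbb I\{f_{t,+}>s\}-\mathbb I\{f_{t,-}>s\})\,ds$, so the conditional variance becomes a double integral over $(s_1,s_2)$ of a covariance kernel of bounded indicators; it then derives three pointwise bounds on that kernel --- one of order $T\gamma_T^{-1}\sum_j\beta_j$ obtained directly from the definition of the $\beta$-coefficients (handling the conditioning on $D_1^{T_1}$ by a triangle inequality on conditional total-variation distances plus Markov), and two of order $\gamma_T^2T/s_i^4$ from the fourth-moment hypothesis via Chebyshev --- and integrates the minimum of the three, which yields an explicit quantitative rate of order $\sqrt{\gamma_T}\,T$. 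Your fixed-$\tau$ truncation plays the role that the paper's $s$-integration plays implicitly: it gets the mixing coefficient into the bound at \emph{first} power (your observation that a Davydov-type inequality under fourth moments alone would require $\sum_j\beta_j^{1/2}<\infty$ is exactly the right diagnosis), and Berbee's coupling replaces the paper's hands-on manipulation of conditional probabilities; the nesting $\sigma(D_1^{T_1})\subseteq\sigma(D_s:s\le t)$ that you flag is the same structural fact the paper exploits, since it is what makes the coupled variable conditionally independent given the first subsample. The trade-off: the paper's argument is self-contained (only the definition of $\beta$-mixing and Markov's inequality) and quantitative, while yours imports a standard external theorem and the ``$T\to\infty$ then $\tau\to0$'' scheme delivers only the qualitative $o_P(T)$ --- which is all the lemma asserts, and your tightness bound $\Ep[\Xi/T]\le\sum_j\beta_j$ uniform in $T$ makes that double limit rigorous. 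Two small points to tighten, neither a gap in substance: (i) you construct a separate coupling for each pair $(t,j)$, so for $\Xi=\sum_{j}\sum_t P_{t,j}$ to be a well-defined random variable you should either build all couplings on one common enlargement of the probability space or, more simply, take expectations pair by pair, $\Ep\bigl[\lvert\Cov(g_t^{(1)},g_{t+j}^{(1)})\mid D_1^{T_1}\rvert\bigr]\le 4\tau^2\beta_j$, and apply Markov to the off-diagonal sum on the original space; (ii) Berbee's theorem is stated for the total-variation characterization of absolute regularity, so you should note that this coincides with the paper's definition of $\beta_j$ (this equivalence is standard; see the Bradley survey the paper cites).
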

\begin{proof}
For brevity of notations, for all $t = T_1+1,\dots,T_1 + T_2$, we will write $f_t$ instead of $f_T(D_t,D_1^{T_1})$ throughout the proof. Then it follows from \eqref{eq: fourth moment convergence} that there exists $\gamma_T\to0$ as $T\to\infty$ such that 
\begin{equation}\label{eq: fourth moment probability bound}
\Pr\left(\sum_{t=T_1+1}^{T_1+T_2}\Ep\left[|f_t|^4 \mid D_1^{T_1}\right] \leq \gamma_T^2\right) \geq 1 - \gamma_T.
\end{equation}
Next, for all $t = T_1+1,\dots,T_1+T_2$, denote $f_{t,+} = f_t \mathbb I\{f_t \geq 0\}$ and $f_{t,-} = - f_t \mathbb I\{f_t < 0\}$, so that $f_t = f_{t,+} - f_{t,-}$. Then
$$
f_t = \int_{0}^{+\infty} (\mathbb I\{f_{t,+} > s\} - \mathbb I\{f_{t,-} > s\})ds,
$$
and so
\begin{align}
& \textrm{Var}\left( \sum_{t=T_1+1}^{T_1+T_2} f_t \mid D_1^{T_1} \right) 
 = \sum_{t_1,t_2=T_1+1}^{T_1+T_2}\textrm{Cov}(f_{t_1},f_{t_2}\mid D_1^{T_1}) \nonumber\\
& \qquad  = \int_0^{+\infty}\int_0^{+\infty} \sum_{t_1,t_2=T_1+1}^{T_1+T_2}\textrm{Cov}\Big( \mathbb I\{f_{t_1,+} > s_1\} - \mathbb I\{f_{t_1,-} > s_1\}, \nonumber \\
&\qquad\qquad\qquad\qquad\qquad\qquad\qquad \mathbb I\{f_{t_2,+} > s_2\} - \mathbb I\{f_{t_2,-} > s_2\} \mid D_1^{T_1}  \Big)ds_1ds_2. \label{eq: conditional variance formula}
\end{align}
Denoting the integrand here by $R(s_1,s_2)$, we now derive three different bounds on it. 

First, observe that for any $t_1<t_2$ and any random variables $Z_1$ and $Z_2$ such that $Z_1$ depends only on $D_{t_1}$ and $D_1^{T_1}$ and $Z_2$ depends only on $D_{t_2}$ and $D_1^{T_1}$, we have
\begin{align*}
\textrm{Cov}(Z_1,Z_2\mid D_1^{T_1})
& = \Ep[Z_1Z_2\mid D_1^{T_1}] - \Ep[Z_1\mid D_1^{T_1}]\Ep[Z_2\mid D_1^{T_1}] \\
& = \Ep\left[Z_1(\Ep[Z_2\mid D_{t_1}, D_1^{T_1}] - \Ep[Z_2\mid D_1^{T_1}]) \mid D_1^{T_1} \right],
\end{align*}
and so if $|Z_1|\leq 1$ a.s., then
\begin{align*}
|\textrm{Cov}(Z_1,Z_2\mid D_1^{T_1})|
& \leq \Ep[|\Ep[Z_2\mid D_{t_1}, D_1^{T_1}] - \Ep[Z_2\mid D_1^{T_1}]| \mid D_1^{T_1} ].
\end{align*}
Substituting here $Z_1 = \mathbb I\{f_{t_1,+} > s_1\} - \mathbb I\{f_{t_1,-} > s_1\}$ and $Z_2 = \mathbb I\{f_{t_2,+} > s_2\} - \mathbb I\{f_{t_2,-} > s_2\}$, we obtain
\begin{align*}
&\left| \textrm{Cov}\Big( \mathbb I\{f_{t_1,+} > s_1\} - \mathbb I\{f_{t_1,-} > s_1\}, \mathbb I\{f_{t_2,+} > s_2\} - \mathbb I\{f_{t_2,-} > s_2\} \mid D_1^{T_1}  \Big) \right|\\
&\quad \leq  \Ep\left[ \left| \Pr( f_T(D_{t_2},D_1^{T_1}) > s_2 \mid D_{t_1},D_1^{T_1}) - \Pr( f_T(D_{t_2},D_1^{T_1}) > s_2 \mid D_1^{T_1}) \right| \mid D_1^{T_1} \right] \\
&\quad\quad +  \Ep\left[ \left| \Pr( f_T(D_{t_2},D_1^{T_1}) < - s_2 \mid D_{t_1},D_1^{T_1}) - \Pr( f_T(D_{t_2},D_1^{T_1}) < - s_2 \mid D_1^{T_1}) \right| \mid D_1^{T_1} \right] \\
& \quad \leq 2\Ep\left[\sup_{B}\left| \Pr(D_{t_2} \in B \mid D_{t_1},D_1^{T_1}) - \Pr(D_{t_2} \in B \mid D_1^{T_1}) \right| \mid D_1^{T_1}\right],
\end{align*}
where the penultimate inequality follows from the definition of the $\beta$-mixing coefficients. Therefore,
\begin{align*}
\Ep\left[\sup_{s_1,s_2\in(0,\infty)}|R(s_1,s_2)|\right] 
&\leq \sum_{t = T_1+1}^{T_1 + T_2}1 + 2\sum_{t_1 = T_1 + 1}^{T_1+T_2 - 1}\sum_{t_2 = t_1+1}^{T_1 + T_2}2(\beta_{t_2-t_1}+\beta_{t_2 - T_1}) \\
& \leq (T_2 - T_1)\left(1 + 8\sum_{t=1}^{\infty} \beta_t\right)
\end{align*}
by the definition of the $\beta$-mixing coefficients. Hence, by Markov's inequality,
\begin{equation}\label{eq: mixing bound on R}
|R(s_1,s_2)| \leq \frac{T_2 - T_1}{\gamma_T}\left(1 + 8\sum_{t=1}^{\infty} \beta_t\right)
\end{equation}
with probability at least $1 - \gamma_T$ uniformly over $s_1,s_2\in(0,\infty)$, which is our first bound on $R(s_1,s_2)$.

Next, observe that for any random variables $Z_1$ and $Z_2$ such that $|Z_1|\leq 1$ a.s., we have
\begin{align*}
|\textrm{Cov}(Z_1,Z_2\mid D_1^{T_1})|
& = | \Ep[ (Z_1 - \Ep[Z_1\mid D_1^{T_1}])(Z_2 - \Ep[Z_2\mid D_1^{T_1}]) \mid D_1^{T_1}] | \\
& = | \Ep[ Z_1(Z_2 - \Ep[Z_2\mid D_1^{T_1}]) \mid D_1^{T_1}] |  \leq \Ep[ |Z_2 - \Ep[Z_2\mid D_1^{T_1}]| \mid D_1^{T_1}] \\
& \leq \Ep[|Z_2|\mid D_1^{T_1}] + |\Ep[Z_2\mid D_1^{T_1}]| \leq 2\Ep[|Z_2|\mid D_1^{T_1}].
\end{align*}
Substituting here $Z_1 = \mathbb I\{f_{t_1,+} > s_1\} - \mathbb I\{f_{t_1,-} > s_1\}$ and $Z_2 = \mathbb I\{f_{t_2,+} > s_2\} - \mathbb I\{f_{t_2,-} > s_2\}$ again, we obtain
\begin{align*}
&\left| \textrm{Cov}\Big( \mathbb I\{f_{t_1,+} > s_1\} - \mathbb I\{f_{t_1,-} > s_1\}, \mathbb I\{f_{t_2,+} > s_2\} - \mathbb I\{f_{t_2,-} > s_2\} \mid D_1^{T_1}  \Big) \right|\\
& \qquad \leq 2\Big(\Pr( f_{t_2} > s_2 \mid D_1^{T_1}) + \Pr( f_{t_2} < - s_2 \mid D_1^{T_1})\Big) \leq 2\Pr( |f_{t_2}| > s_2 \mid D_1^{T_1}).
\end{align*}
Therefore,
$$
|R(s_1,s_2)| \leq 2\sum_{t_1,t_2=T_1+1}^{T_1 + T_2} \Pr( |f_{t_2}| > s_2 \mid D_1^{T_1}) \leq 2(T_2-T_1)\sum_{t=T_1+1}^{T_1+T_2} \Pr( |f_{t}| > s_2 \mid D_1^{T_1}),
$$
and so, by Markov's inequality and \eqref{eq: fourth moment probability bound},
$$
|R(s_1,s_2)| \leq \frac{2(T_2-T_1)}{s_2^4}\sum_{t=T_1+1}^{T_1+T_2}\Ep[|f_t|^4\mid D_1^{T_1}] \leq \frac{2\gamma_T^2(T_2-T_1)}{s_2^4}
$$
with probability at least $1 - \gamma_T$ uniformly over $s_1,s_2\in(0,\infty)$, which is our second bound on $R(s_1,s_2)$. In addition, by the same argument, with interchanged $Z_1$ and $Z_2$,
$$
|R(s_1,s_2)| \leq 2(T_2-T_1)\sum_{t=T_1+1}^{T_1+T_2} \Pr( |f_{t}| > s_1 \mid D_1^{T_1}) \leq \frac{2\gamma_T^2(T_2-T_1)}{s_1^4},
$$
with probability at least $1 - \gamma_T$ uniformly over $s_1,s_2\in(0,\infty)$, which is our third bound on $R(s_1,s_2)$.

Now, denoting the right-hand side of \eqref{eq: mixing bound on R} by $\bar R$ and combining all three bounds, we have
$$
|R(s_1,s_2)|\leq \int_0^{\bar R}\mathbb I\left\{ u\leq \frac{2\gamma_T^2(T_2-T_1)}{s_1^4} \right\}\mathbb I\left\{ u\leq \frac{2\gamma_T^2(T_2-T_1)}{s_2^4} \right\}du
$$
with probability at least $1 - 3\gamma_T$. Substituting this bound into \eqref{eq: conditional variance formula}, we obtain
\begin{align*}
& \textrm{Var}\left( \sum_{t=T_1+1}^{T_1+T_2} f_t \mid D_1^{T_1} \right) 
 \leq \int_{0}^{+\infty}\int_{0}^{+\infty} |R(s_1,s_2)|ds_1ds_2\\
& \qquad \leq \int_{0}^{+\infty}\int_{0}^{+\infty} \int_0^{\bar R}\mathbb I\left\{ u\leq \frac{2\gamma_T^2(T_2-T_1)}{s_1^4} \right\}\mathbb I\left\{ u\leq \frac{2\gamma_T^2(T_2-T_1)}{s_2^4} \right\}duds_1ds_2 \\
&\qquad = \int_0^{\bar R} \left(\frac{2\gamma_T^2(T_2-T_1)}{u}\right)^{1/4}\left(\frac{2\gamma_T^2(T_2-T_1)}{u}\right)^{1/4}du \\
&\qquad= 2\gamma_T\sqrt{2(T_2-T_1)}\sqrt{\bar R} \leq 2\sqrt{2\gamma_T\left(1+\sum_{t=1}^{\infty}\beta_t\right)}(T_2-T_1)
\end{align*}
with probability at least $1-3\gamma_T$. Since $\gamma_T\to0$ and $\sum_{t=1}^{\infty}\beta_t < \infty$ by Assumption \ref{as: beta mixing}, the asserted claim follows.
\end{proof}

\begin{lemma}\label{lem: auxiliary lemma 3}
Consider a sequence of functions $\{f_T\}_{T\geq 2}$ such that for all $T\geq 2$, the function $f_T$ is mapping $\mathcal D\times\mathcal D^{T_1}$ into $\mathbb R$, where $\mathcal D$ is the support of $D$. Let $\{A_T\}_{T\geq 2}$ be a sequence of positive numbers. Also, let $D = (X',Y)'$ be independent of $\{D_t\}_{t\in\mathbb Z}$. Finally, let $\delta>0$ be some number. Then
$$
\sum_{t=T_1+1}^{T_1+T_2}\left(\Ep\left[ f_T(D_t,D_1^{T_1}) \mid D_1^{T_1}\right] - \Ep\left[ f_T(D,D_1^{T_1}) \mid D_1^{T_1}\right]\right) = o_P (A_T)
$$
as long as
\begin{equation}\label{eq: 1 plus delta moment}
\sum_{t=T_1+1}^{T_1+T_2}\left(\Ep\left[ | f_T(D_t,D_1^{T_1}) |^{1+\delta} \mid D_1^{T_1} \right] + \Ep\left[ | f_T(D,D_1^{T_1}) |^{1+\delta} \mid D_1^{T_1} \right]\right) = o_P(A_T^{1+\delta})
\end{equation}
and Assumption \ref{as: beta mixing} is satisfied.
\end{lemma}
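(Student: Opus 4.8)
The plan is to prove the stronger statement that $\sum_{t=T_1+1}^{T_1+T_2}|h_t| = o_P(A_T)$, where I write $h_t := \Ep[f_T(D_t,D_1^{T_1})\mid D_1^{T_1}] - \Ep[f_T(D,D_1^{T_1})\mid D_1^{T_1}]$; the asserted claim then follows by the triangle inequality. The key observation is that, since $D$ is independent of $\{D_t\}_{t\in\mathbb Z}$ and, by stationarity, has the same marginal law as each $D_t$, conditionally on $D_1^{T_1}$ the term $\Ep[f_T(D,D_1^{T_1})\mid D_1^{T_1}]$ is the integral of $d\mapsto f_T(d,D_1^{T_1})$ against the \emph{marginal} law of $D_t$, whereas $\Ep[f_T(D_t,D_1^{T_1})\mid D_1^{T_1}]$ is the integral against the \emph{conditional} law of $D_t$ given $D_1^{T_1}$. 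Hence $h_t$ is the integral of $f_T(\cdot,D_1^{T_1})$ against the signed measure (conditional law minus marginal law) of $D_t$, a quantity controlled by how close these two laws are, which is precisely what $\beta$-mixing measures.

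First I would introduce $\tau_t := \sup_{B}|\Pr(D_t\in B\mid D_1^{T_1}) - \Pr(D_t\in B)|$ and record, using the monotonicity of the $\beta$-mixing coefficients in both the conditioning and the target $\sigma$-algebras together with stationarity, that $\Ep[\tau_t]\le \beta_{t-T_1}$, since $\sigma(D_1^{T_1})\subseteq\sigma(\{D_s\}_{s\le T_1})$ and $\sigma(D_t)\subseteq\sigma(\{D_s\}_{s\ge t})$. Summing over $t=T_1+1,\dots,T_1+T_2$ gives $\Ep[\sum_t\tau_t]\le\sum_{j=1}^{T_2}\beta_j\le\sum_{j=1}^\infty\beta_j<\infty$ by Assumption \ref{as: beta mixing}, so that $\sum_t\tau_t = O_P(1)$ by Markov's inequality.

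The main step is a truncation that couples the mixing bound (which only controls bounded integrands) with the available $(1+\delta)$-moment bound (which controls the tails). For any $D_1^{T_1}$-measurable threshold $M>0$ I would split $f_T = f_T\mathbb I\{|f_T|\le M\} + f_T\mathbb I\{|f_T|>M\}$. Integrating the bounded piece against the signed measure yields a contribution at most $2M\tau_t$, while the tail piece is bounded, using $|x|\mathbb I\{|x|>M\}\le M^{-\delta}|x|^{1+\delta}$, by $M^{-\delta}(\Ep[|f_T(D_t,D_1^{T_1})|^{1+\delta}\mid D_1^{T_1}] + \Ep[|f_T(D,D_1^{T_1})|^{1+\delta}\mid D_1^{T_1}])$. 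Summing over $t$ gives, with $S_T$ denoting the left-hand side of \eqref{eq: 1 plus delta moment},
$$
\sum_{t=T_1+1}^{T_1+T_2}|h_t| \le 2M\sum_{t=T_1+1}^{T_1+T_2}\tau_t + M^{-\delta}S_T .
$$

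Finally I would optimize over $M$. Choosing $M$ to balance the two terms (e.g. $M = (S_T/\sum_t\tau_t)^{1/(1+\delta)}$, which is $D_1^{T_1}$-measurable, with an obvious modification when $\sum_t\tau_t$ or $S_T$ vanishes) gives a bound of order $(\sum_t\tau_t)^{\delta/(1+\delta)}S_T^{1/(1+\delta)}$. Since $\sum_t\tau_t=O_P(1)$ and $S_T = o_P(A_T^{1+\delta})$ by \eqref{eq: 1 plus delta moment}, the second factor is $o_P(A_T)$ and the product is $O_P(1)\cdot o_P(A_T) = o_P(A_T)$, as desired. I expect the main obstacle to be exactly this balancing: because $f_T$ is only assumed to have $(1+\delta)$ conditional moments rather than being bounded, neither the mixing argument nor the moment argument suffices alone, and one must verify that the optimal (random, $D_1^{T_1}$-measurable) choice of $M$ is admissible in the truncation bound and that the resulting rate indeed collapses to $o_P(A_T)$; minor care is also needed in reducing the full past/future $\sigma$-algebras appearing in the definition of $\beta_j$ to $\sigma(D_1^{T_1})$ and $\sigma(D_t)$.
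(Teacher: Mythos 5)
Your proof is correct and is essentially the paper's own argument in dual form: the paper's layer-cake representation, with the tail-probability sums $R_1(s),R_2(s)$ bounded two ways and the integral split at a point $s_0$, is exactly your truncation of $f_T$ at level $M$ --- the uniform mixing bound on $R_1$ corresponds to your $2M\sum_t\tau_t$ term, the Markov/moment bound corresponds to $M^{-\delta}S_T$, and the paper's optimization over $s_0$ is your balancing choice of $M$. The only cosmetic difference is bookkeeping: the paper extracts deterministic sequences $\gamma_T\to0$ from the $o_P$ hypotheses and works on events of probability at least $1-\gamma_T$ with deterministic thresholds, whereas you keep a random, $D_1^{T_1}$-measurable threshold and conclude via $O_P/o_P$ calculus, which is equally valid.
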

\begin{proof}
For brevity of notations, for all $t = T_1+1,\dots,T_2$, we will write $f_t$ instead of $f_T(D_t,D_1^{T_1})$ throughout the proof. In addition, we will write $\tilde f$ instead of $f_T(D,D_1^{T_1})$. Then it follows from \eqref{eq: 1 plus delta moment} that there exists $\gamma_T\to0$ as $T\to\infty$ such that 
\begin{equation}\label{eq: 1 plus delta probability bound}
\Pr\left(\sum_{t=T_1+1}^{T_1+T_2}\left(\Ep\left[ | f_t|^{1+\delta} \mid D_1^{T_1} \right] + \Ep\left[ | \tilde f |^{1+\delta} \mid D_1^{T_1} \right]\right) \leq (\gamma_T A_T)^{1+\delta}\right) \geq 1 - \gamma_T.
\end{equation}
Next, for all $t = T_1+1,\dots,T_2$, denote $f_{t,+} = f_t \mathbb I\{f_t \geq 0\}$ and $f_{t,-} = -f_t \mathbb I\{f_t < 0\}$, so that
\begin{equation}\label{eq: f decomposition positive-negative 1}
f_t = f_{t,+} - f_{t,-} = \int_{0}^{+\infty} (\mathbb I\{f_{t,+} > s\} - \mathbb I\{f_{t,-} > s\})ds.
\end{equation}
Similarly, denote $\tilde f_{+} = \tilde f \mathbb I\{\tilde f \geq 0\}$ and $\tilde f_{-} = -\tilde f \mathbb I\{\tilde f < 0\}$, so that
\begin{equation}\label{eq: f decomposition positive-negative 2}
\tilde f = \tilde f_{+} - \tilde f_{-} = \int_{0}^{+\infty} (\mathbb I\{\tilde f_{+} > s\} - \mathbb I\{\tilde f_{-} > s\})ds.
\end{equation}
Further, for all $s>0$, denote
$$
R_1(s) = \sum_{t=T_1+1}^{T_1+T_2}\left| \Pr(f_{t,+} > s\mid D_1^{T_1}) - \Pr(\tilde f_{+} > s\mid D_1^{T_1})  \right|
$$
and
$$
R_2(s) = \sum_{t=T_1+1}^{T_1+T_2}\left| \Pr(f_{t,-} > s\mid D_1^{T_1}) - \Pr(\tilde f_{-} > s\mid D_1^{T_1})  \right|.
$$
Then it follows from \eqref{eq: f decomposition positive-negative 1}, \eqref{eq: f decomposition positive-negative 2}, and the triangle inequality that
\begin{equation}\label{eq: sum r decomposition}
\left| \sum_{t=T_1+1}^{T_1+T_2} (\Ep[f_t\mid D_1^{T_1}] - \Ep[\tilde f\mid D_1^{T_1}] )\right| \leq \int_0^{+\infty} R_1(s)ds + \int_0^{+\infty} R_2(s)ds.
\end{equation}
We will bound $\int_0^{+\infty} R_1(s)ds$ and note that $\int_0^{+\infty} R_2(s)ds$ can be bounded by the same argument.

Observe that
\begin{align*}
R_1(s) 
& \leq \sum_{t=T_1+1}^{T_1+T_2} \sup_B \left|\Pr(D_t \in B\mid D_1^{T_1}) - \Pr(D\in B\mid D_1^{T_1})\right| \\
& = \sum_{t=T_1+1}^{T_1+T_2} \sup_B \left|\Pr(D_t \in B\mid D_1^{T_1}) - \Pr(D\in B)\right|\\
& = \sum_{t=T_1+1}^{T_1+T_2} \sup_B \left|\Pr(D_t \in B\mid D_1^{T_1}) - \Pr(D_t\in B)\right|,
\end{align*}
and so 
$$
\Ep\left[\sup_{s\in(0,\infty)}R_1(s)\right] \leq \sum_{t=1}^{\infty} \beta_t.
$$
Therefore, $R_1(s) \leq \sum_{t=1}^{\infty}\beta_t / \gamma_T$ with probability at least $1-\gamma_T$ uniformly over $s\in (0,\infty)$ by Markov's inequality. In addition,
\begin{align*}
R_1(s) 
& \leq \sum_{t=T_1+1}^{T_1+T_2} \left( \Pr( f_{t,+} > s\mid D_1^{T_1} ) + \Pr(\tilde f_{+} > s\mid D_1^{T_1} )\right) \\
& \leq \frac{1}{s^{1+\delta}}\sum_{t=T_1+1}^{T_1+T_2}\left( \Ep[| f_{t,+}|^{1+\delta}\mid D_1^{T_1}] + \Ep[|\tilde f_{+}|^{1+\delta}\mid D_1^{T_1}]\right) \leq \frac{(\gamma_T A_T)^{1+\delta}}{s^{1+\delta}}
\end{align*}
with probability at least $1 - \gamma_T$ uniformly over $s\in(0,\infty)$ by Markov's inequality and \eqref{eq: 1 plus delta probability bound}. Hence, for any $s_0>0$, we have
$$
\int_0^{\infty} R_1(s)ds = \int_0^{s_0} R_1(s)ds + \int_{s_0}^{\infty} R_1(s)ds \leq \frac{s_0}{\gamma_T}\sum_{t=1}^{\infty}\beta_t + \frac{(\gamma_T A_T)^{1+\delta}}{\delta s_0^{\delta}}
$$
with probability at least $1 - 2\gamma_T$. Therefore, setting $s_0 = A_T(\gamma_T^{2+\delta}/\delta\sum_{t=1}^{\infty}\beta_t)^{1/(1+\delta)}$, it follows that
$$
\int_0^{\infty} R_1(s)ds \leq 2A_t\left(\sum_{t=1}^{\infty}\beta_t\right)^{\delta/(1+\delta)}\left(\frac{\gamma_T}{\delta}\right)^{1/(1+\delta)}
$$
with probability at least $1 - 2\gamma_T$. Hence, given that $\sum_{t=1}^{\infty}\beta_t<\infty$ by Assumption \ref{as: beta mixing}, it follows that $\int_0^{\infty} R_1(s)ds = o_P(A_T)$ and, by the same argument, $\int_0^\infty R_2(s)ds = o_P(A_T)$. Substituting these bounds into \eqref{eq: sum r decomposition}, we obtain the asserted claim.
\end{proof}

\begin{lemma}\label{lem: fourth lemma}
Under Assumptions \ref{as: beta mixing}, \ref{as: standard convergence}, and \ref{as: estimator Fhat},
$$
\Ep\left[ \|X\|^2\Delta(X)^2\mid D_1^{T_1} \right] = o_P(1).
$$
\end{lemma}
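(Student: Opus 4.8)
The plan is to reduce the statement, which concerns the fresh copy $X$ (independent of $D_1^{T_1}$ by the standing convention of this section), to the time-series average appearing in Assumption \ref{as: estimator Fhat}(ii), by means of the transfer Lemma \ref{lem: auxiliary lemma 3}. Concretely, I would apply Lemma \ref{lem: auxiliary lemma 3} with $f_T(D_t,D_1^{T_1}) = \|X_t\|^2\Delta(X_t)^2$, $A_T = T_2$, and $\delta = 1$, so that the ``fresh-copy'' expectation $\Ep[f_T(D,D_1^{T_1})\mid D_1^{T_1}]$ is precisely the quantity $\Ep[\|X\|^2\Delta(X)^2\mid D_1^{T_1}]$ to be bounded, while the time-series terms are governed by Assumption \ref{as: estimator Fhat}(ii).

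First I would control the time-series sum. Writing $a_t = \Ep[\|X_t\|^2\Delta(X_t)^2\mid D_1^{T_1}]$ and $b_t = \Ep[\|X_t\|^4\Delta(X_t)^4\mid D_1^{T_1}]$, conditional Jensen gives $a_t^2\leq b_t$, and Cauchy-Schwarz across $t$ then yields $\sum_{t=T_1+1}^{T_1+T_2} a_t \leq \sqrt{T_2}\,(\sum_{t=T_1+1}^{T_1+T_2} b_t)^{1/2}$. Since $\|X_t\|^4\leq (1+\|X_t\|)^4$, Assumption \ref{as: estimator Fhat}(ii) gives $\sum_t b_t = o_P(1)$, and hence $\sum_t a_t = o_P(\sqrt{T_2})$.

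The step I expect to be the crux is verifying the moment condition \eqref{eq: 1 plus delta moment} of Lemma \ref{lem: auxiliary lemma 3}, which for $\delta=1$ and $A_T=T_2$ requires $\sum_t b_t + T_2\,\Ep[\|X\|^4\Delta(X)^4\mid D_1^{T_1}] = o_P(T_2^2)$. The first summand is $o_P(1)$ as above. For the second I would invoke Assumption \ref{as: estimator Fhat}(i): on the $D_1^{T_1}$-measurable event $\{\sup_{x\in\mathcal X}\Delta(x)\leq u_0/2\}$, which has probability tending to one, we have $\Delta(X)\leq u_0/2$ almost surely, so that $\Ep[\|X\|^4\Delta(X)^4\mid D_1^{T_1}]\leq (u_0/2)^4\Ep[\|X\|^4]$, finite by Assumption \ref{as: standard convergence}(i) since $X$ is independent of $D_1^{T_1}$. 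Thus $\Ep[\|X\|^4\Delta(X)^4\mid D_1^{T_1}] = O_P(1)$ and $T_2\cdot O_P(1) = o_P(T_2^2)$, so the condition holds. It is exactly this uniform-smallness input from Assumption \ref{as: estimator Fhat}(i) that tames the fresh-copy fourth moment without any further tail control on $\Delta$, and I anticipate it being the one genuinely delicate point.

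Finally, Lemma \ref{lem: auxiliary lemma 3} then gives $\sum_t a_t - T_2\,\Ep[\|X\|^2\Delta(X)^2\mid D_1^{T_1}] = o_P(T_2)$. Combining this with $\sum_t a_t = o_P(\sqrt{T_2}) = o_P(T_2)$ yields $T_2\,\Ep[\|X\|^2\Delta(X)^2\mid D_1^{T_1}] = o_P(T_2)$, and dividing by $T_2$ delivers the claim.
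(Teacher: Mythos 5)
Your proof is correct, but it takes a genuinely different route from the paper's. The paper first splits the product via the Cauchy--Schwarz inequality, $\Ep[\|X\|^2\Delta(X)^2\mid D_1^{T_1}]\leq \sqrt{\Ep[\|X\|^4\mid D_1^{T_1}]}\sqrt{\Ep[\Delta(X)^4\mid D_1^{T_1}]}$, reducing the problem to showing $\Ep[\Delta(X)^4\mid D_1^{T_1}]=o_P(1)$, and then transfers smallness from the $X_t$'s to the fresh copy $X$ by a hand-rolled mixing argument at the level of tail probabilities: Assumption \ref{as: estimator Fhat}(ii) and conditional Markov give $\sum_t \Pr(\Delta(X_t)>\gamma_T^{1/8}\mid D_1^{T_1})\leq\sqrt{\gamma_T}$ with high probability, the $\beta$-mixing coefficients bound the discrepancy between $\Pr(\Delta(X_t)>\gamma_T^{1/8}\mid D_1^{T_1})$ and $\Pr(\Delta(X)>\gamma_T^{1/8}\mid D_1^{T_1})$, and Assumption \ref{as: estimator Fhat}(i) only enters at the very end, converting a small exceedance probability into a small conditional fourth moment. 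You instead keep the product $\|X_t\|^2\Delta(X_t)^2$ intact and push it through the already-established transfer result, Lemma \ref{lem: auxiliary lemma 3}, with $A_T=T_2$ and $\delta=1$, so the mixing work is inherited from that lemma rather than redone; Assumption \ref{as: estimator Fhat}(i) enters earlier, in verifying the lemma's moment condition for the fresh copy, and your treatment of that step is exactly right: $\{\sup_{x\in\mathcal X}\Delta(x)\leq u_0/2\}$ is a $D_1^{T_1}$-measurable event of probability approaching one, on which independence of $X$ from $D_1^{T_1}$ and $\Ep[\|X\|^4]<\infty$ give $\Ep[\|X\|^4\Delta(X)^4\mid D_1^{T_1}]=O_P(1)$, whence $o_P(1)+T_2\cdot O_P(1)=o_P(T_2^2)$ as required. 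Both arguments consume precisely the same assumptions. Yours is shorter and reuses existing machinery (indeed it mirrors how the paper itself deploys Lemma \ref{lem: auxiliary lemma 3} inside the proof of Theorem \ref{thm: main result}), whereas the paper's bespoke argument works only with bounded indicators, which keeps the Markov and mixing steps elementary and isolates the clean intermediate claim $\Ep[\Delta(X)^4\mid D_1^{T_1}]=o_P(1)$.
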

\begin{proof}
By Jensen's inequality,
$$
\Ep\left[ \|X\|^2\Delta(X)^2\mid D_1^{T_1} \right] \leq \sqrt{\Ep[\|X\|^4\mid D_1^{T_1}]}\sqrt{\Ep[\Delta(X)^4\mid D_1^{T_1}]}.
$$
Therefore, given that $\Ep[\|X\|^4\mid D_1^{T_1}] = O_P(1)$ by Assumption \ref{as: standard convergence}(i) and Markov's inequality, it suffices to prove that $\Ep[\Delta(X)^4\mid D_1^{T_1}] = o_P(1)$. To do so, observe that by Assumption \ref{as: estimator Fhat}(ii), there exists $\gamma_T\to0$ as $T\to\infty$ such that
$$
\Pr\left( \sum_{t=T_1+1}^{T_1+T_2} \Ep[\Delta(X_t)^4\mid D_1^{T_1}] \leq \gamma_T \right) \geq 1 - \gamma_T.
$$
Hence,
\begin{align*}
\sum_{t=T_1+1}^{T_1+T_2} \Pr(\Delta(X_t) > \gamma_T^{1/8}\mid D_1^{T_1})
& =\sum_{t=T_1+1}^{T_1+T_2} \Pr(\Delta(X_t)^4 > \sqrt{\gamma_T} \mid D_1^{T_1}) \\
& \leq \frac{1}{\sqrt{\gamma_T}}\sum_{t=T_1+1}^{T_1+T_2}\Ep[\Delta(X_t)^4\mid D_1^{T_1}] \leq \sqrt{\gamma_T}
\end{align*}
with probability at least $1 - \gamma_T$. Also,
$$
\sum_{t=T_1+1}^{T_1+T_2} \Ep\left[ |\Pr(\Delta(X_t) > \gamma_T^{1/8} \mid D_1^{T_1}) - \Pr(\Delta(X) > \gamma_T^{1/8} \mid D_1^{T_1})| \right] \leq \sum_{t=1}^{\infty}\beta_t,
$$
and so, by Markov's inequality,
$$
\frac{1}{T_2-T_1}\sum_{t=T_1+1}^{T_1+T_2} |\Pr(\Delta(X_t) > \gamma_T^{1/8} \mid D_1^{T_1}) - \Pr(\Delta(X) > \gamma_T^{1/8} \mid D_1^{T_1})| \leq \frac{1}{\sqrt{T_2-T_1}}\sum_{t=1}^{\infty}\beta_t
$$ 
with probability at least $1-1/\sqrt{T_2-T_1}$. Therefore, by the union bound,
$$
\Pr(\Delta(X) > \gamma_T^{1/8} \mid D_1^{T_1}) \leq \frac{1}{\sqrt{T_2-T_1}}\sum_{t=1}^{\infty}\beta_t + \frac{\sqrt{\gamma_T}}{T_2 - T_1}
$$
with probability at least $1 - \gamma_T - 1/\sqrt{T_2 - T_1}$. Combining this bound with Assumption \ref{as: estimator Fhat}(i) shows that $\Ep[\Delta(X)^4\mid D_1^{T_1}] = o_P(1)$ and completes the proof of the lemma.
\end{proof}

We are now ready to proof Theorem \ref{thm: main result}:
\begin{proof}[Proof of Theorem \ref{thm: main result}]
Observe that
\begin{equation}\label{eq: law of large numbers}
\frac{1}{T_2}\sum_{t=T_1+1}^{T_1+T_2}X_tX_t'\to_P \Ep[XX']
\end{equation}
by Assumptions \ref{as: beta mixing} and \ref{as: standard convergence}(i) and Proposition 2.8 in \cite{FY05} since $\beta$-mixing coefficients dominate $\alpha$-mixing coefficients. Combining this result with Assumptions \ref{as: standard convergence}(ii,iii) and using the continuous mapping theorem and the Slutsky lemma gives the second convergence result in \eqref{eq: main convergence 2}.

To prove the first convergence result in \eqref{eq: main convergence 2}, denote
$$
r_{t1} = \int_{-\infty}^{+\infty}(\Psi(F(s|X_t))-\Psi(\widehat F(s|X_t)))ds + \int_{-\infty}^{+\infty}(\widehat F(s|X_t) - F(s|X_t))\psi(\widehat F(s|X_t))ds
$$
and
$$
r_{t2} = \int_{-\infty}^{+\infty}(F(s|X_t)-\mathbb I\{Y_t\leq s\})(\psi(\widehat F(s|X_t))-\psi(F(s|X_t)))ds
$$
for all $t=T_1+1,\dots,T_1+T_2$. Then
\begin{align*}
\sqrt{T_2}(\widehat\beta - \beta) & = \left(\frac{1}{T_2}\sum_{t=T_1+1}^{T_1+T_2}X_tX_t'\right)^{-1}\left(\frac{1}{\sqrt{T_2}}\sum_{t=T_1+1}^{T_1+T_2}X_te_t\right) \\
&\quad + \left(\frac{1}{T_2}\sum_{t=T_1+1}^{T_1+T_2}X_tX_t'\right)^{-1}\left(\frac{1}{\sqrt{T_2}}\sum_{t=T_1+1}^{T_1+T_2}X_t(r_{t1} + r_{t2})\right).
\end{align*}
Therefore, given that $T_2^{-1}\sum_{t=T_1+1}^{T_1+T_2} X_t X_t'$ converges to a positive-definite matrix by \eqref{eq: law of large numbers} and Assumption \ref{as: standard convergence}(ii), we only need to prove that
\begin{equation}\label{eq: need to prove thm 1 key}
E_1 = \frac{1}{\sqrt{T_2}}\sum_{t=T_1+1}^{T_1+T_2}X_tr_{t1} = o_P(1)\quad\text{and}\quad E_2 = \frac{1}{\sqrt{T_2}}\sum_{t=T_1+1}^{T_1+T_2}X_tr_{t2} = o_P(1).
\end{equation}
We do so in turn. In addition, as in the proof of Lemma \ref{lem: auxiliary 1}, we can decompose the function $\psi$ as $\psi = \psi_1 - \psi_2$, where the functions $\psi_1$ and $\psi_2$ are both bounded, increasing, and non-negative. Therefore, given that both $r_{t1}$ and $r_{t2}$ are linear in $\psi$ (and $\Psi$), it suffices to prove \eqref{eq: need to prove thm 1 key} assuming that the function $\psi$ is itself bounded, increasing, and non-negative. This is what we do below. (Note also that the new function $\psi$ still satisfies Assumption \ref{as: function psi}, and so Lemma \ref{lem: auxiliary 1} is still applicable.)

We start with $E_1$. Since $\Psi(s) = \int_0^s \psi(u)du$ and $\psi$ is increasing, the function $\Psi$ is convex, and so
$$
\Psi(F(s|X_t)) - \Psi(\widehat F(s|X_t)) = \Psi'(\widetilde F(s|X_t))(F(s|X_t) - \widehat F(s|X_t)),
$$ 
where $\widetilde F(s|X_t)$ belongs to the interval connecting $F(s|X_t)$ and $\widehat F(s|X_t)$, and $\Psi'(\widetilde F(s|X_t))$ is an element of the sub-differential of $\Psi(\widetilde F(s|X_t))$. Hence,
$$
r_{t1} = \int_{-\infty}^{+\infty}(\widehat F(s|X_t) - F(s|X_t))(\psi(\widehat F(s|X_t)) - \Psi'(\widetilde F(s|X_t))) ds,
$$
and so, for some constant $C>0$,
\begin{align*}
|r_{t1}| & \leq \int_{-\infty}^{+\infty}|\widehat F(s|X_t) - F(s|X_t)|\times |\psi(\widehat F(s|X_t)) - \Psi'(\widetilde F(s|X_t))| ds \\
& \leq \int_{-\infty}^{+\infty}|\widehat F(s|X_t) - F(s|X_t)|\times |\psi(\widehat F(s|X_t)) - \psi(F(s|X_t))| ds \leq C\Delta(X_t)^2,
\end{align*}
with probability approaching one uniformly over $t = T_1+1,\dots,T_1+T_2$, where the second inequality follows from the facts that the function $\psi$ is increasing and that $\Psi'(\widetilde F(s|X_t))\in [\psi(\widetilde F(s|X_t) - 0),\psi(\widetilde F(s|X_t) + 0)]$ and the third from Lemma \ref{lem: auxiliary 1}. Therefore,
\begin{equation}\label{eq: e1 - 1}
\|E_1\| \leq \frac{1}{\sqrt{T_2}}\sum_{t=T_1+1}^{T_1+T_2}\|X_t\|\times |r_{t1}| \leq \frac{C}{\sqrt{T_2}}\sum_{t=T_1+1}^{T_1+T_2}\|X_t\|\Delta(X_t)^2
\end{equation}
with probability approaching one. In addition,
\begin{equation}\label{eq: e1 - 2}
\frac{1}{\sqrt{T_2}}\sum_{t=T_1+1}^{T_1+T_2}\Ep[ \|X_t\|\Delta(X_t)^2 \mid D_1^{T_1}] \leq \sqrt{\sum_{t=T_1+1}^{T_1+T_2} \Ep[\|X_t\|^2\Delta(X_t)^4\mid D_1^{T_1}] } = o_P(1)
\end{equation}
by the Cauchy-Schwarz inequality and Assumption \ref{as: estimator Fhat}(ii). Combining \eqref{eq: e1 - 1} and \eqref{eq: e1 - 2} with Markov's inequality gives $E_1 = o_P(1)$.

Next, we consider $E_2$. Observe that
$$
\Ep\left[ X\int_{-\infty}^{+\infty}(F(s|X) - \mathbb I\{Y\leq s\})(\psi(\widehat F(s|X)) - \psi(F(s|X)))ds \mid D_1^{T_1} \right] = 0.
$$
Also, for some constant $C>0$,
\begin{align*}
&\Ep\left[ \left( \|X\|\int_{-\infty}^{+\infty}(F(s|X) - \mathbb I\{Y\leq s\})(\psi(\widehat F(s|X)) - \psi(F(s|X)))ds \right)^2  \mid D_1^{T_1} \right] \\
& \qquad \leq C\Ep\left[ \|X\|^2 \Delta(X)^2  \mid D_1^{T_1} \right]  + o_P(1) = o_P(1)
\end{align*}
by Lemmas \ref{lem: auxiliary 1} and \ref{lem: fourth lemma}. In addition,
\begin{align*}
&  \sum_{t=T_1+1}^{T_1+T_2} \Ep\left[ \left( \|X_t\|\int_{-\infty}^{+\infty}(F(s|X_t) - \mathbb I\{Y_t\leq s\})(\psi(\widehat F(s|X_t)) - \psi(F(s|X_t)))ds \right)^2 \mid D_1^{T_1} \right] \\
& \qquad \leq C \sum_{t=T_1+1}^{T_1+T_2} \Ep\left[ \|X_t\|^2\Delta(X_t)^2 \mid D_1^{T_1} \right] + o_P(1) \\
& \qquad \leq \sqrt{T_2-T_1}\sqrt{\sum_{t=T_1+1}^{T_1+T_2} \Ep\left[ \|X_t\|^4\Delta(X_t)^4\mid D_1^{T_1} \right] } + o_P(1) = o_P(T)
\end{align*}
by Lemma \ref{lem: auxiliary 1}, the Cauchy-Schwarz inequality, and Assumption \ref{as: estimator Fhat}(ii). Hence,
\begin{equation}\label{eq: mean e2}
\|\Ep[E_2\mid D_1^{T_1}] \|= \left\|\Ep\left[ \frac{1}{\sqrt{T_2}}\sum_{t=T_1+1}^{T_1+T_2} X_t r_{t2} \mid D_1^{T_1} \right] \right\|= o_P(1)
\end{equation}
by Lemma \ref{lem: auxiliary lemma 3}. Further,
$$
\sum_{t=T_1+1}^{T_1+T_2}\Ep\left[ (\|X_t\| \times |r_{t2}|)^4  \mid D_1^{T_1} \right] \leq C^2\sum_{t=T_1+1}^{T_1+T_2} \Ep\left[ \|X_t\|^4 \Delta(X_t)^4  \mid D_1^{T_1} \right] + o_P(1) = o_P(1)
$$
 by Lemma \ref{lem: auxiliary 1} and Assumption \ref{as: estimator Fhat}(ii). Hence, by Lemma \ref{lem: conditional variance},
\begin{equation}\label{eq: variance e2}
\|\textrm{Var}(E_2\mid D_1^{T_1})\| = o_P(1).
\end{equation}
Combining \eqref{eq: mean e2} and \eqref{eq: variance e2} gives $E_2 = o_P(1)$ and completes the proof of the theorem.
\end{proof}

\section{Proof of Theorem \ref{thm: variance estimation}}
Denote $w(0,m) = 1/2$, so that $\bar\Omega = \sum_{j=0}^m w(j,m)(\bar\Omega_j + \bar\Omega_j')$. Also, denote 
$$
\widehat\Omega = \sum_{j=0}^m w(j,m)(\widehat\Omega_j'+\widehat\Omega_{j}'),
\text{ so that }
 \widehat\Sigma = \left(\frac{1}{T_2}\sum_{t=T_1+1}^{T_1+T_2} X_tX_t'\right)^{-1}\widehat\Omega \left(\frac{1}{T_2}\sum_{t=T_1+1}^{T_1+T_2} X_tX_t'\right)^{-1}.
 $$
Then, recalling \eqref{eq: law of large numbers} from the proof of Theorem \ref{thm: main result} and observing that $\bar\Omega\to_P\Omega$ by Assumption \ref{as: extra conditions}(i), it follows that $\widehat\Sigma\to_P \Sigma$ as long as $\widehat\Omega - \bar\Omega \to_P 0$. Thus, it suffices to prove that $\sum_{j=0}^m w(j,m)(\widehat\Omega_j - \bar\Omega_j)\to_P0$. To do so, observe that for all $j=0,\dots,m$ and $t=T_1+j+1,\dots,T_1+T_2$, we have
$$
\widehat e_t\widehat e_{t-j} - e_te_{t-j} = e_{t-j}(\widehat e_t - e_t) + \widehat e_t (\widehat e_{t-j} - e_{t-j}),
$$
and so, denoting
$$
\mathcal S_1 = \sum_{j=0}^{m}\frac{w(j,m)}{T_2-T_1}\sum_{t=T_1+j+1}^{T_1+T_2} e_{t-j}(\widehat e_t - e_t)X_t X_{t-j}'
$$
and
$$
\mathcal S_2 = \sum_{j=0}^{m}\frac{w(j,m)}{T_2-T_1}\sum_{t=T_1+j+1}^{T_1+T_2}\widehat e_t (\widehat e_{t-j} - e_{t-j}) X_tX_{t-j}',
$$
we have $\sum_{j=0}^m w(j,m)(\widehat\Omega_j - \bar\Omega_j) = \mathcal S_1 + \mathcal S_2$. We will prove that $\mathcal S_1 \to_P 0$ and note that $\mathcal S_2 \to_P 0$ by a similar argument.

By the Cauchy-Schwarz inequality and Assumption \ref{as: extra conditions}(ii), 
\begin{align*}
\|\mathcal S_1\| & \leq \sum_{j=0}^m \frac{w(j,m)}{T_2-T_1}\sqrt{\sum_{t=T_1+j+1}^{T_1+T_2}\|e_{t-j}X_{t-j}\|^2}\sqrt{\sum_{t=T_1+j+1}^{T_1+T_2}\| (\widehat e_t - e_t)X_t \|^2} \\
& \leq \frac{m}{T_2-T_1}\sqrt{\sum_{t=T_1+1}^{T_1+T_2}\|e_{t}X_{t}\|^2}\sqrt{\sum_{t=T_1+1}^{T_1+T_2}\| (\widehat e_t - e_t)X_t \|^2}.
\end{align*}
Here, given that $\Ep[\|eX\|^2]\leq\sqrt{\Ep[e^4]\Ep[\|X\|^4]}<\infty$ by Assumption \ref{as: standard convergence}(i),
$$
\frac{1}{T_2-T_1}\sum_{t=T_1+1}^{T_1+T_2}\|e_tX_t\|^2 = O_P(1)
$$
by Assumptions \ref{as: beta mixing} and Proposition 2.8 in \cite{FY05}. Also,
\begin{align*}
&\frac{1}{T_2-T_1}\sum_{t=T_1+1}^{T_1+T_2}\|(\widehat e_t - e_t)X_t\|^2 \\
&\qquad \leq \frac{2}{T_2-T_1}\sum_{t=T_1+1}^{T_1+T_2}|r_{t1}+r_{t2}|^2\|X_t\|^2 + \frac{2}{T_2-T_1}\sum_{t=T_1+1}^{T_1+T_2}\|\widehat\beta - \beta\|^2\|X_t\|^4
\end{align*}
for $r_{t1}$ and $r_{t2}$ defined in the proof of Theorem \ref{thm: main result}. Moreover,
$$
\frac{1}{T_2-T_1}\sum_{t=T_1+1}^{T_1+T_2}\|X_t\|^4 = O_P(1)
$$
by Assumptions \ref{as: beta mixing} and \ref{as: standard convergence}(i) and Proposition 2.8 in \cite{FY05}. Therefore,
$$
\frac{1}{T_2-T_1}\sum_{t=T_1+1}^{T_1+T_2}\|\widehat\beta - \beta\|^2\|X_t\|^4 = O_P\left(\frac{1}{T}\right)
$$
by Theorem \ref{thm: main result}. In addition, as in the proof of Theorem \ref{thm: main result}, for some constant $C>0$, we have $|r_{t1}+r_{t2}| \leq C(\Delta(X_t)^2 + \Delta(X_t))$ with probability approaching one uniformly over $t=T_1+1,\dots,T_1+T_2$. Hence,
$$
\frac{1}{T_2-T_1}\sum_{t=T_1+1}^{T_1+T_2}|r_{t1}+r_{t2}|^2\|X_t\|^2 \leq \frac{2C^2}{T_2-T_1}\sum_{t=T_1+1}^{T_1+T_2}(\Delta(X_t)^4 + \Delta(X_t)^2)\|X_t\|^2 = o_P\left(\frac{1}{\sqrt T}\right)
$$
by Assumption \ref{as: estimator Fhat}(ii). We therefore conclude that $\|\mathcal S_1\| = o_P(m/T^{1/4}) = o_P(1)$ by Assumption \ref{as: extra conditions}(iii). Thus, given that $\|\mathcal S_2\| = o_P(1)$ by a similar argument, it follows that $\sum_{j=0}^m w(j,m)(\widehat\Omega_j - \bar\Omega_j)\to_P 0$, which completes the proof of the theorem.

\section{Weighted-Average Quantile Regression Estimators versus Parametric Estimators}\label{sec: comparison}
In this section, we compare our weighted-average quantile regression estimators with parametric estimators outlined in the Introduction. Recall that given a weighting function $\psi$, we define the parametric estimator by
$$
\widetilde\beta = \int_0^1 \widetilde\beta(u)\psi(u)du,
$$
where each $\widetilde\beta(u)$ is the classical (linear) $u$-quantile regression estimator of $Y$ on $X$. 

This parametric estimator is rather intuitive and is simple to implement. However, the key advantage of our weighted-average quantile regression estimator $\widehat\beta$ over the parametric estimator $\widetilde\beta$ is that our estimator is much more robust with respect to possible misspecification. In particular, our estimator requires fewer parametric assumptions for consistency. Indeed, we claim that consistency of the parametric estimator $\widetilde\beta$ can only be guaranteed under a {\em continuum} of constraints, namely $q_{Y|X}(u) = X'\beta(u)$ for all $u\in(0,1)$, whereas consistency of our estimator $\widehat\beta$, as discussed in the previous section, requires only one constraint: $\int_0^1 q_{Y|X}(u)\psi(u)du = X'\beta$.

To prove this claim, suppose that $\int_0^1 q_{Y|X}(u)\psi(u)du = X'\beta$ and recall that the classical $u$-quantile regression estimator
\begin{equation}\label{eq: quantile regression estimator naive}
\widetilde\beta(u) = \arg\min_{b\in\mathbb R^d} \left(\frac{u}{T} \sum_{t=1}^{T} (Y_t - X_t'b)_{+} + \frac{1-u}{T}\sum_{t=1}^{T} (Y_t - X_t'b)_{-}\right)
\end{equation}
converges in probability to
\begin{equation}\label{eq: quantile regression limit}
\bar\beta(u) = \arg\min_{b\in\mathbb R^d}\Big( u\Ep[(Y-X'b)_{+}] + (1-u)\Ep[(Y-X'b)_{-}] \Big),
\end{equation}
where for any random variable $Z$, we use $Z_{+} = Z\mathbb I\{Z\geq 0\}$ and $Z_{-} = Z\mathbb I\{Z<0\}$ to denote its positive and negative parts. Whenever $q_{Y|X}(u)$ is linear in $X$, i.e. $q_{Y|X}(u) = X'\beta(u)$ for some $\beta(u)$ almost surely, it is a standard exercise to show that $\bar\beta(u) = \beta(u)$ by taking the first-order conditions of \eqref{eq: quantile regression limit}, meaning that $\widetilde\beta(u)\to_P\beta(u)$, and so 
$$
\widetilde \beta = \int_0^1 \widetilde\beta(u)\psi(u)du \to_P \int_0^1 \beta(u)\psi(u)du = \beta,
$$
where the last equality follows from substituting $q_{Y|X}(u) = X'\beta(u)$ into the regression model $\int_0^1 q_{Y|X}(u)\psi(u)du = X'\beta$. On the other hand, whenever $q_{Y|X}(u)$ is not linear in $X$, we still have $\widetilde\beta(u)\to_P \bar\beta(u)$, so that $\widetilde\beta = \int_0^1\widetilde\beta(u)\psi(u)du \to_P \int_0^1 \bar\beta(u)\psi(u)du$, but in general $\int_0^1\bar\beta(u)\psi(u)du \neq \beta$ in this case. Indeed, consider the following data-generating process:
$$
Y = X\beta + X^2\gamma(U),
$$
where $X\sim U[0,2]$ and $U\sim U[0,1]$ are independent random variables, $\beta$ is any constant, and $\gamma(u) = 4u - 3$ for all $u\in[0,1]$. Suppose that $\psi(u) = 2\mathbb I\{u > 1/2\}$ for all $u\in (0,1)$. It is then easy to check that $\int_0^1 q_{Y|X}(u)\psi(u)du = X'\beta$ but, as we show below, $2\int_{1/2}^1\bar\beta(u)du = \beta + 19/6 - 8/\sqrt 6$. Therefore, the parametric estimator $\widetilde\beta$ is not consistent in this case, whereas our estimator $\widehat\beta$ is. Of course, the problem for the parametric estimator here is that $q_{Y|X}(u) = X\beta(u) + X^2\gamma(u)$ is not linear in $X$.

In addition, another advantage of our estimator $\widehat\beta$ over the parametric estimator $\widetilde\beta$ is that the latter requires estimating $u$-quantile regressions for values of $u$ that are close to the boundaries of the interval $[0,1]$. This is problematic because such quantile regression estimators may have a slow rate of convergence, undermining the properties of the estimator $\widetilde\beta$. In principle, one could consider a truncated version of $\widetilde\beta$, namely
$$
\widetilde\beta^{\varepsilon} = \int_{\varepsilon}^{1-\varepsilon} \widetilde\beta(u)\psi(u)du
$$
for some $\varepsilon = \varepsilon_T \to 0$ as $T\to\infty$ but in this case, one has to find a data-driven method to choose $\varepsilon$, and we are not aware of such methods. In contrast, although our estimator $\widehat\beta$ requires estimating the function $F$ via nonparametric/machine learning methods, which also rely on tuning parameters, there is a variety of methods in the literature, such as sample splitting and cross-validation, to choose these tuning parameters.

We now prove that $2\int_{1/2}^1 \bar\beta(u)du = 2\int_{1/2}^2 \beta(u)du + 19/6 - 8/\sqrt 6$. This calculation demonstrates that the parametric estimator described above is not consistent. Fix $u\in(0,1)$ and $b\in\mathbb R$ and denote $\tilde b = b - \beta$. First, consider the case $b \geq \beta$. In this case, we have
\begin{align*}
\Ep[(Y - Xb)_{+}\mid X] 
&= \Ep[( X^2\gamma(U) - X\tilde b )_{+}\mid X] = \int_0^{\infty} \Pr( X^2\gamma(U) - X\tilde b > s \mid X)ds \\
&  = \int_0^{\infty} \Pr\left( U > \frac{1}{4}\left(3 + \frac{\tilde b}{X} + \frac{s}{X^2}\right) \mid X\right)ds =
\begin{cases}
0 & \text{if }X\leq\tilde{b},\\
\frac{(X-\tilde{b})^{2}}{8} & \text{if }X>\tilde{b},
\end{cases}
\end{align*}
and
$$
\Ep[ (Y - Xb)_{-} \mid X] = \int_0^{\infty}\Pr\left( U < \frac{1}{4}\left(3 + \frac{\tilde b}{X} - \frac{s} {X^2}\right) \mid X\right) =
\begin{cases}
X^2 + X\tilde b & \text{if }X\leq\tilde{b},\\
\frac{(3X+\tilde{b})^{2}}{8} & \text{if }X>\tilde{b}.
\end{cases}
$$
Therefore, for $\beta \leq b < \beta + 2$,
$$
\frac{d}{db}\Ep[(Y - Xb)_{+}] = \Ep\left[\frac{d}{db}\Ep[(Y - Xb)_{+}\mid X]\right] = \Ep\left[\frac{\tilde b - X}{4}\mathbb I\{X > \tilde b\}\right] = -\frac{\tilde b^2}{16} + \frac{\tilde b}{4} - \frac{1}{4}
$$
and
\begin{align*}
\frac{d}{db}\Ep[(Y - Xb)_{-}]
& = \Ep\left[\frac{d}{db}\Ep[(Y - Xb)_{-}\mid X]\right] \\ 
& = \Ep\left[\frac{3X + \tilde b}{4}\mathbb I\{X > \tilde b\} + X\mathbb I\{X \leq \tilde b\}\right] = -\frac{\tilde b^2}{16} + \frac{\tilde b}{4} + \frac{3}{4},
\end{align*}
whereas for $b\geq \beta + 2$,
$$
\frac{d}{db}\Ep[(Y - Xb)_{+}] = 0\text{ and }\frac{d}{db}\Ep[(Y - Xb)_{-}] = \Ep[X] = 1.
$$
Next, consider the case $b < \beta$. In this case, we have
$$
\Ep[(Y - Xb)_{+}\mid X]  = 
\begin{cases}
- X^2 - X\tilde b & \text{if }X\leq - \tilde{b}/3,\\
\frac{(X-\tilde{b})^{2}}{8} & \text{if }X>-\tilde{b}/3.
\end{cases}
$$
and
$$
\Ep[ (Y - Xb)_{-} \mid X] = 
\begin{cases}
0 & \text{if }X\leq - \tilde{b}/3,\\
\frac{(3X+\tilde{b})^{2}}{8} & \text{if }X>-\tilde{b}/3.
\end{cases}
$$
Therefore, for $\beta - 6 < b < \beta$, 
\begin{align*}
\frac{d}{db}\Ep[(Y - Xb)_{+}] 
& = \Ep\left[\frac{d}{db}\Ep[(Y - Xb)_{+}\mid X]\right] \\
& = \Ep\left[\frac{\tilde b - X}{4}\mathbb I\{X > -\tilde b/3\} - X\mathbb I\{X \leq -\tilde b / 3\}\right] = \frac{\tilde b^2}{48} + \frac{\tilde b}{4} - \frac{1}{4}
\end{align*}
and
$$
\frac{d}{db}\Ep[(Y - Xb)_{-}] = \Ep\left[\frac{d}{db}\Ep[(Y - Xb)_{-}\mid X]\right] = \Ep\left[\frac{3X + \tilde b}{4}\mathbb I\{X > -\tilde b/3\}\right] = \frac{\tilde b^2}{48} + \frac{\tilde b}{4} + \frac{3}{4},
$$
whereas for $b \leq \beta - 6$,
$$
\frac{d}{db}\Ep[(Y - Xb)_{+}] = -\Ep[X] =  -1\text{ and }\frac{d}{db}\Ep[(Y - Xb)_{-}] = 0.
$$
Hence,
$$
\frac{d}{db}\Big\{u\Ep[(Y - Xb)_{+}] + (1-u)\Ep[(Y - Xb)_{-}]\Big\} = 
\begin{cases}
-1 & \text{if }b \leq \beta - 6,\\
\frac{\tilde b^2}{48} +\frac{\tilde b}{4} + \frac{3}{4} - u & \text{if }\beta - 6 < b <\beta,\\
-\frac{\tilde b^2}{16} + \frac{\tilde b}{4} + \frac{3}{4} - u & \text{if }\beta\leq b<\beta + 2,\\
+1 &\text{ if } b \geq \beta + 2.
\end{cases}
$$
Thus, by the first-order conditions, the solution to the optimization problem in \eqref{eq: quantile regression limit} is
$$
\bar\beta(u) = 
\begin{cases}
\beta -6 + 4\sqrt{3u} &\text{if } u < 3/4,\\
\beta + 2 - 4\sqrt{1 - u} & \text{if } u \geq 3/4.
\end{cases}
$$
We conclude that
$$
2\int_{1/2}^{1}\bar\beta(u)du = \beta + 2\int_{1/2}^{3/4}(-6+4\sqrt{3u})du + 2\int_{3/4}^1(2-4\sqrt{1-u})du = \beta + \frac{19}{6} - \frac{8}{\sqrt 6}\neq \beta.
$$
This means that the parametric estimator described above is not consistent.

\section{Additional Tables \& Figures}

\setcounter{table}{0} \renewcommand{\thetable}{A.\arabic{table}} \setcounter{figure}{0} \renewcommand{\thefigure}{A.\arabic{figure}}

\begin{table}[H]
\caption{{\small Results of Monte Carlo simulation study for the coverage probability of 90\% confidence intervals.\label{table: simulation results 1}}}
{\footnotesize
\begin{tabular}{cc|cccc|cccc}
\hline
\hline 
\multicolumn{10}{c}{Panel A: Homoscedastic Noise}\tabularnewline
\hline 
\hline 
\multirow{3}{*}{$\psi$-type} & \multirow{3}{*}{$\beta_{1}$} & \multicolumn{4}{c|}{$e\sim N(0,1)$} & \multicolumn{4}{c}{$e\sim t(4)$}\tabularnewline
 &  & \multicolumn{2}{c}{$p=2$} & \multicolumn{2}{c|}{$p=5$} & \multicolumn{2}{c}{$p=2$} & \multicolumn{2}{c}{$p=5$}\tabularnewline
 &  & $T=1000$ & $T=2000$ & $T=1000$ & $T=2000$ & $T=1000$ & $T=2000$ & $T=1000$ & $T=2000$\tabularnewline
\hline 
\multirow{4}{*}{1} & .0 & 0.912 & 0.918 & 0.894 & 0.874 & 0.896 & 0.898 & 0.874 & 0.874\tabularnewline
 & .3 & 0.926 & 0.9 & 0.918 & 0.872 & 0.892 & 0.888 & 0.86 & 0.882\tabularnewline
 & .6 & 0.906 & 0.89 & 0.906 & 0.888 & 0.916 & 0.898 & 0.866 & 0.878\tabularnewline
 & .9 & 0.912 & 0.884 & 0.886 & 0.876 & 0.886 & 0.892 & 0.868 & 0.886\tabularnewline
\hline 
\multirow{4}{*}{2} & .0 & 0.882 & 0.908 & 0.922 & 0.888 & 0.91 & 0.888 & 0.89 & 0.882\tabularnewline
 & .3 & 0.89 & 0.908 & 0.896 & 0.898 & 0.908 & 0.902 & 0.876 & 0.9\tabularnewline
 & .6 & 0.9 & 0.906 & 0.878 & 0.866 & 0.89 & 0.89 & 0.874 & 0.898\tabularnewline
 & .9 & 0.892 & 0.892 & 0.87 & 0.89 & 0.902 & 0.884 & 0.854 & 0.886\tabularnewline
\hline 
\multirow{4}{*}{3} & .0 & 0.884 & 0.874 & 0.892 & 0.904 & 0.874 & 0.896 & 0.888 & 0.898\tabularnewline
 & .3 & 0.878 & 0.878 & 0.884 & 0.88 & 0.872 & 0.912 & 0.878 & 0.896\tabularnewline
 & .6 & 0.89 & 0.886 & 0.876 & 0.89 & 0.878 & 0.898 & 0.882 & 0.908\tabularnewline
 & .9 & 0.88 & 0.874 & 0.886 & 0.886 & 0.884 & 0.896 & 0.87 & 0.894\tabularnewline
\hline 
\multirow{4}{*}{4} & .0 & 0.92 & 0.906 & 0.912 & 0.884 & 0.906 & 0.896 & 0.866 & 0.89\tabularnewline
 & .3 & 0.92 & 0.914 & 0.914 & 0.876 & 0.902 & 0.888 & 0.86 & 0.882\tabularnewline
 & .6 & 0.918 & 0.914 & 0.912 & 0.874 & 0.914 & 0.882 & 0.848 & 0.896\tabularnewline
 & .9 & 0.928 & 0.906 & 0.9 & 0.876 & 0.898 & 0.894 & 0.872 & 0.892\tabularnewline
\hline 
\hline 
\multicolumn{10}{c}{Panel B: Heteroscedastic noise}\tabularnewline
\hline 
\hline 
\multirow{3}{*}{$\psi$-type} & \multirow{3}{*}{$\beta_{1}$} & \multicolumn{4}{c|}{$e\sim N(0,1)$} & \multicolumn{4}{c}{$e\sim t(4)$}\tabularnewline
 &  & \multicolumn{2}{c}{$p=2$} & \multicolumn{2}{c|}{$p=5$} & \multicolumn{2}{c}{$p=2$} & \multicolumn{2}{c}{$p=5$}\tabularnewline
 &  & $T=1000$ & $T=2000$ & $T=1000$ & $T=2000$ & $T=1000$ & $T=2000$ & $T=1000$ & $T=2000$\tabularnewline
\hline 
\multirow{4}{*}{1} & .0 & 0.916 & 0.906 & 0.872 & 0.862 & 0.884 & 0.888 & 0.846 & 0.88\tabularnewline
 & .3 & 0.906 & 0.906 & 0.916 & 0.848 & 0.886 & 0.88 & 0.862 & 0.868\tabularnewline
 & .6 & 0.902 & 0.908 & 0.91 & 0.886 & 0.88 & 0.89 & 0.844 & 0.872\tabularnewline
 & .9 & 0.902 & 0.902 & 0.912 & 0.88 & 0.876 & 0.886 & 0.842 & 0.862\tabularnewline
\hline 
\multirow{4}{*}{2} & .0 & 0.882 & 0.896 & 0.894 & 0.884 & 0.888 & 0.87 & 0.872 & 0.888\tabularnewline
 & .3 & 0.886 & 0.884 & 0.906 & 0.876 & 0.884 & 0.878 & 0.88 & 0.894\tabularnewline
 & .6 & 0.886 & 0.91 & 0.894 & 0.88 & 0.882 & 0.892 & 0.854 & 0.89\tabularnewline
 & .9 & 0.866 & 0.884 & 0.898 & 0.86 & 0.858 & 0.866 & 0.87 & 0.896\tabularnewline
\hline 
\multirow{4}{*}{3} & .0 & 0.89 & 0.87 & 0.884 & 0.882 & 0.886 & 0.918 & 0.878 & 0.9\tabularnewline
 & .3 & 0.884 & 0.876 & 0.88 & 0.87 & 0.88 & 0.906 & 0.868 & 0.898\tabularnewline
 & .6 & 0.886 & 0.88 & 0.878 & 0.884 & 0.892 & 0.914 & 0.876 & 0.908\tabularnewline
 & .9 & 0.888 & 0.868 & 0.886 & 0.868 & 0.886 & 0.906 & 0.88 & 0.902\tabularnewline
\hline 
\multirow{4}{*}{4} & .0 & 0.898 & 0.916 & 0.882 & 0.876 & 0.876 & 0.894 & 0.846 & 0.878\tabularnewline
 & .3 & 0.926 & 0.912 & 0.9 & 0.86 & 0.878 & 0.892 & 0.84 & 0.884\tabularnewline
 & .6 & 0.914 & 0.912 & 0.918 & 0.864 & 0.88 & 0.892 & 0.842 & 0.878\tabularnewline
 & .9 & 0.91 & 0.898 & 0.91 & 0.876 & 0.876 & 0.874 & 0.844 & 0.882\tabularnewline
\hline 
\hline
\end{tabular}
}
\end{table}

\newpage
\begin{table}
\caption{{\small Results of Monte Carlo simulation study for the mean absolute error.\label{table: simulation results 2}}}
{\footnotesize
\begin{tabular}{cc|cccc|cccc}
\hline
\hline 
\multicolumn{10}{c}{Panel A: DGP1, Homoscedastic Noise}\tabularnewline
\hline 
\hline 
\multirow{3}{*}{$\psi$-type} & \multirow{3}{*}{$\beta_{1}$} & \multicolumn{4}{c|}{$e\sim N(0,1)$} & \multicolumn{4}{c}{$e\sim t(4)$}\tabularnewline
 &  & \multicolumn{2}{c}{$p=2$} & \multicolumn{2}{c|}{$p=5$} & \multicolumn{2}{c}{$p=2$} & \multicolumn{2}{c}{$p=5$}\tabularnewline
 &  & $T=1000$ & $T=2000$ & $T=1000$ & $T=2000$ & $T=1000$ & $T=2000$ & $T=1000$ & $T=2000$\tabularnewline
\hline 
\multirow{4}{*}{1} & .0 & 0.158 & 0.107 & 0.171 & 0.12 & 0.373 & 0.255 & 0.381 & 0.254\tabularnewline
 & .3 & 0.152 & 0.109 & 0.166 & 0.123 & 0.368 & 0.252 & 0.379 & 0.257\tabularnewline
 & .6 & 0.151 & 0.113 & 0.171 & 0.123 & 0.369 & 0.253 & 0.376 & 0.259\tabularnewline
 & .9 & 0.152 & 0.115 & 0.175 & 0.123 & 0.368 & 0.243 & 0.375 & 0.255\tabularnewline
\hline 
\multirow{4}{*}{2} & .0 & 0.214 & 0.152 & 0.215 & 0.154 & 0.478 & 0.339 & 0.485 & 0.316\tabularnewline
 & .3 & 0.212 & 0.147 & 0.224 & 0.153 & 0.463 & 0.332 & 0.472 & 0.319\tabularnewline
 & .6 & 0.21 & 0.147 & 0.219 & 0.158 & 0.464 & 0.333 & 0.465 & 0.318\tabularnewline
 & .9 & 0.2 & 0.15 & 0.229 & 0.159 & 0.467 & 0.321 & 0.483 & 0.319\tabularnewline
\hline 
\multirow{4}{*}{3} & .0 & 0.076 & 0.057 & 0.074 & 0.054 & 0.086 & 0.058 & 0.088 & 0.058\tabularnewline
 & .3 & 0.078 & 0.058 & 0.074 & 0.056 & 0.086 & 0.058 & 0.089 & 0.058\tabularnewline
 & .6 & 0.077 & 0.058 & 0.075 & 0.056 & 0.087 & 0.059 & 0.088 & 0.057\tabularnewline
 & .9 & 0.079 & 0.06 & 0.077 & 0.056 & 0.089 & 0.06 & 0.093 & 0.059\tabularnewline
\hline 
\multirow{4}{*}{4} & .0 & 0.14 & 0.097 & 0.153 & 0.104 & 0.326 & 0.226 & 0.336 & 0.218\tabularnewline
 & .3 & 0.138 & 0.097 & 0.152 & 0.108 & 0.324 & 0.218 & 0.336 & 0.221\tabularnewline
 & .6 & 0.136 & 0.099 & 0.156 & 0.109 & 0.327 & 0.219 & 0.335 & 0.222\tabularnewline
 & .9 & 0.131 & 0.1 & 0.152 & 0.107 & 0.323 & 0.215 & 0.331 & 0.224\tabularnewline
\hline 
\hline 
\multicolumn{10}{c}{Panel B: DGP2, Heteroscedastic noise}\tabularnewline
\hline 
\hline 
\multirow{3}{*}{$\psi$-type} & \multirow{3}{*}{$\beta_{1}$} & \multicolumn{4}{c|}{$e\sim N(0,1)$} & \multicolumn{4}{c}{$e\sim t(4)$}\tabularnewline
 &  & \multicolumn{2}{c}{$p=2$} & \multicolumn{2}{c|}{$p=5$} & \multicolumn{2}{c}{$p=2$} & \multicolumn{2}{c}{$p=5$}\tabularnewline
 &  & $T=1000$ & $T=2000$ & $T=1000$ & $T=2000$ & $T=1000$ & $T=2000$ & $T=1000$ & $T=2000$\tabularnewline
\hline 
\multirow{4}{*}{1} & .0 & 0.201 & 0.143 & 0.232 & 0.159 & 0.479 & 0.335 & 0.497 & 0.32\tabularnewline
 & .3 & 0.204 & 0.147 & 0.219 & 0.16 & 0.487 & 0.336 & 0.494 & 0.327\tabularnewline
 & .6 & 0.206 & 0.152 & 0.234 & 0.168 & 0.483 & 0.332 & 0.494 & 0.331\tabularnewline
 & .9 & 0.203 & 0.146 & 0.222 & 0.161 & 0.48 & 0.324 & 0.482 & 0.329\tabularnewline
\hline 
\multirow{4}{*}{2} & .0 & 0.278 & 0.194 & 0.278 & 0.194 & 0.596 & 0.435 & 0.634 & 0.407\tabularnewline
 & .3 & 0.277 & 0.194 & 0.279 & 0.199 & 0.603 & 0.43 & 0.624 & 0.399\tabularnewline
 & .6 & 0.272 & 0.191 & 0.279 & 0.208 & 0.601 & 0.426 & 0.622 & 0.401\tabularnewline
 & .9 & 0.267 & 0.191 & 0.278 & 0.216 & 0.604 & 0.427 & 0.602 & 0.406\tabularnewline
\hline 
\multirow{4}{*}{3} & .0 & 0.098 & 0.074 & 0.095 & 0.071 & 0.112 & 0.075 & 0.113 & 0.073\tabularnewline
 & .3 & 0.097 & 0.074 & 0.096 & 0.072 & 0.108 & 0.075 & 0.111 & 0.073\tabularnewline
 & .6 & 0.098 & 0.074 & 0.095 & 0.071 & 0.111 & 0.075 & 0.111 & 0.073\tabularnewline
 & .9 & 0.099 & 0.076 & 0.095 & 0.072 & 0.111 & 0.076 & 0.112 & 0.073\tabularnewline
\hline 
\multirow{4}{*}{4} & .0 & 0.183 & 0.127 & 0.2 & 0.138 & 0.425 & 0.294 & 0.429 & 0.278\tabularnewline
 & .3 & 0.184 & 0.128 & 0.202 & 0.144 & 0.429 & 0.291 & 0.432 & 0.282\tabularnewline
 & .6 & 0.187 & 0.132 & 0.205 & 0.149 & 0.42 & 0.291 & 0.439 & 0.286\tabularnewline
 & .9 & 0.182 & 0.131 & 0.204 & 0.15 & 0.427 & 0.291 & 0.434 & 0.284\tabularnewline
\hline 
\hline
\end{tabular}
}
\end{table}

\clearpage

\begin{center}
\begin{figure}[H]
\caption{Coefficient by Percentiles\label{fig:Coefficient-by-Percentiles}}
\caption*{
This figure plots the coefficient estimates and the 95\% confidence
intervals for the 1\% to 10\% quantile regressions. The dependent
variables are the excess returns of the Fama-French 5 industries.
The dependent variables are the Fama-French 5 factors. The estimates
are multiplied by -1 to be consistent with the risk regressions.
}
\begin{centering}
\subfloat{\includegraphics[width=7cm]{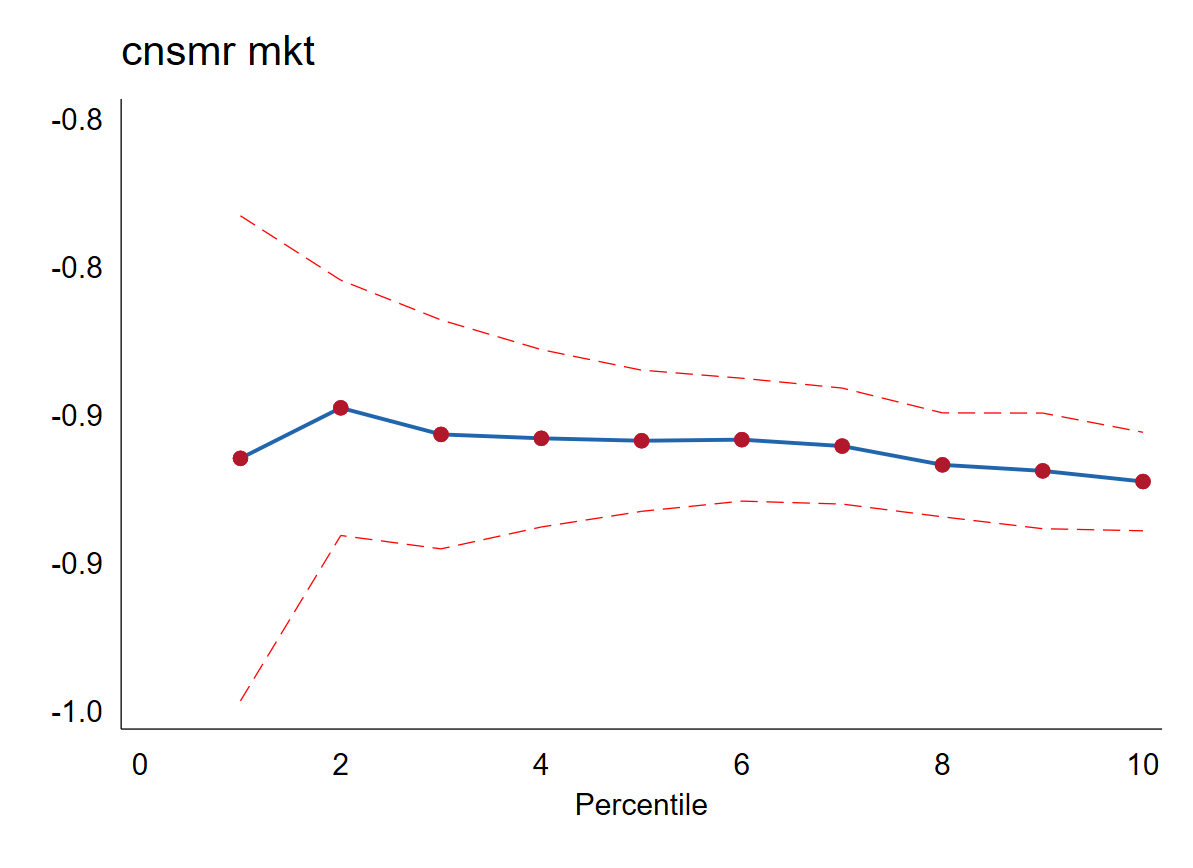}}\subfloat{\includegraphics[width=7cm]{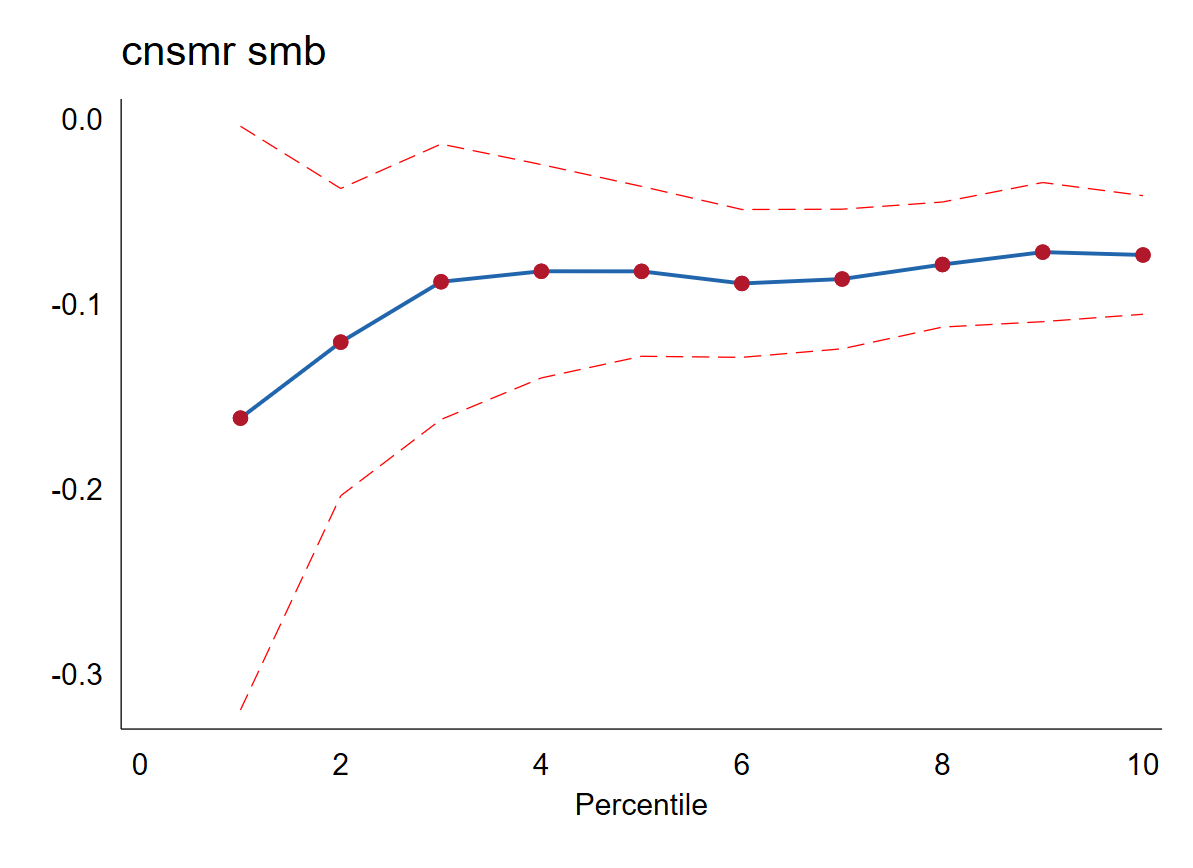}}
\par\end{centering}
\begin{centering}
\subfloat{\includegraphics[width=7cm]{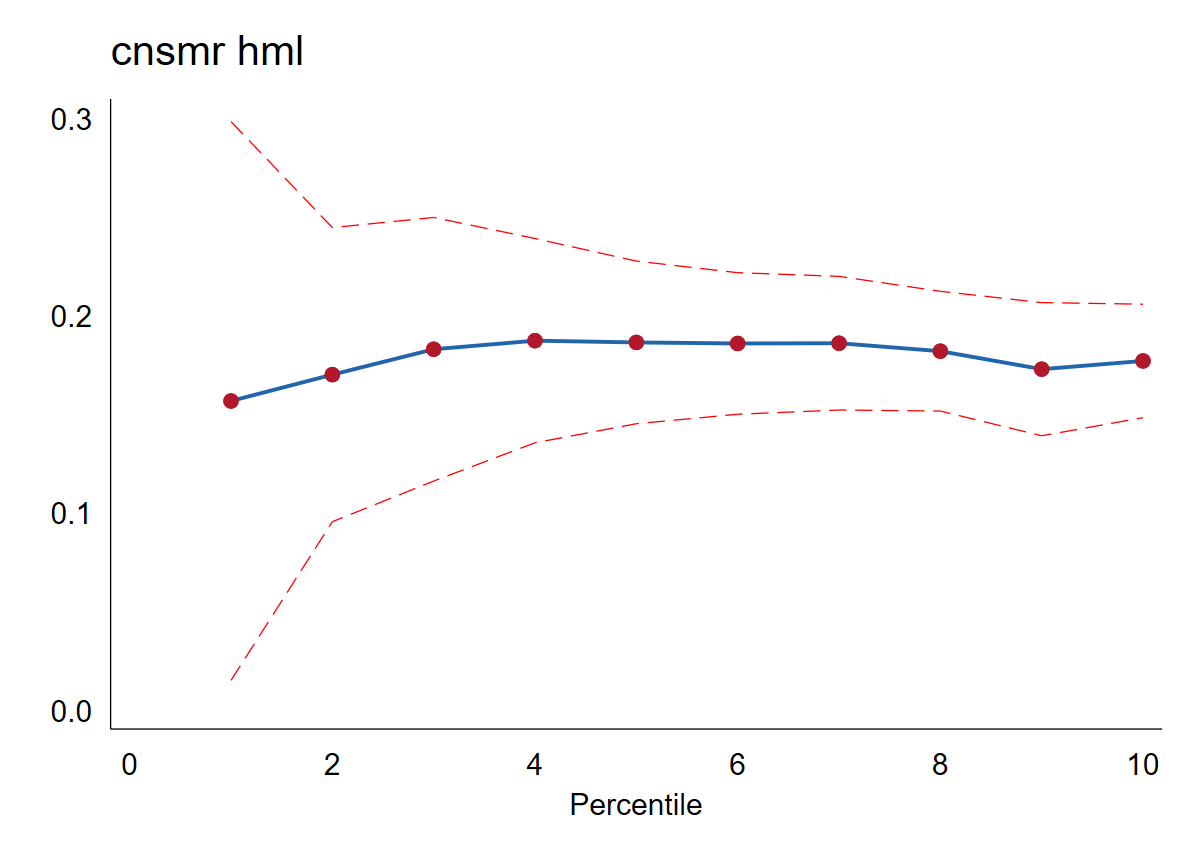}}\subfloat{\includegraphics[width=7cm]{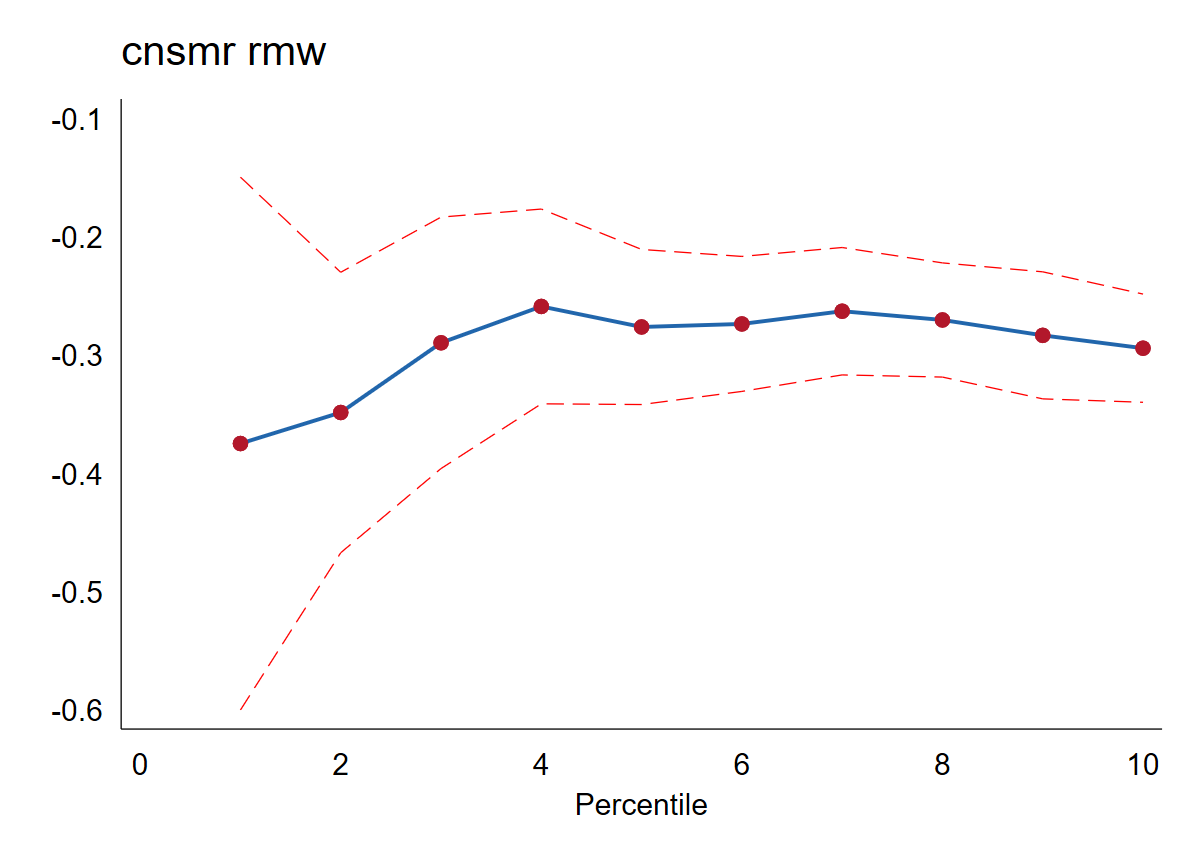}}
\par\end{centering}
\begin{centering}
\subfloat{\includegraphics[width=7cm]{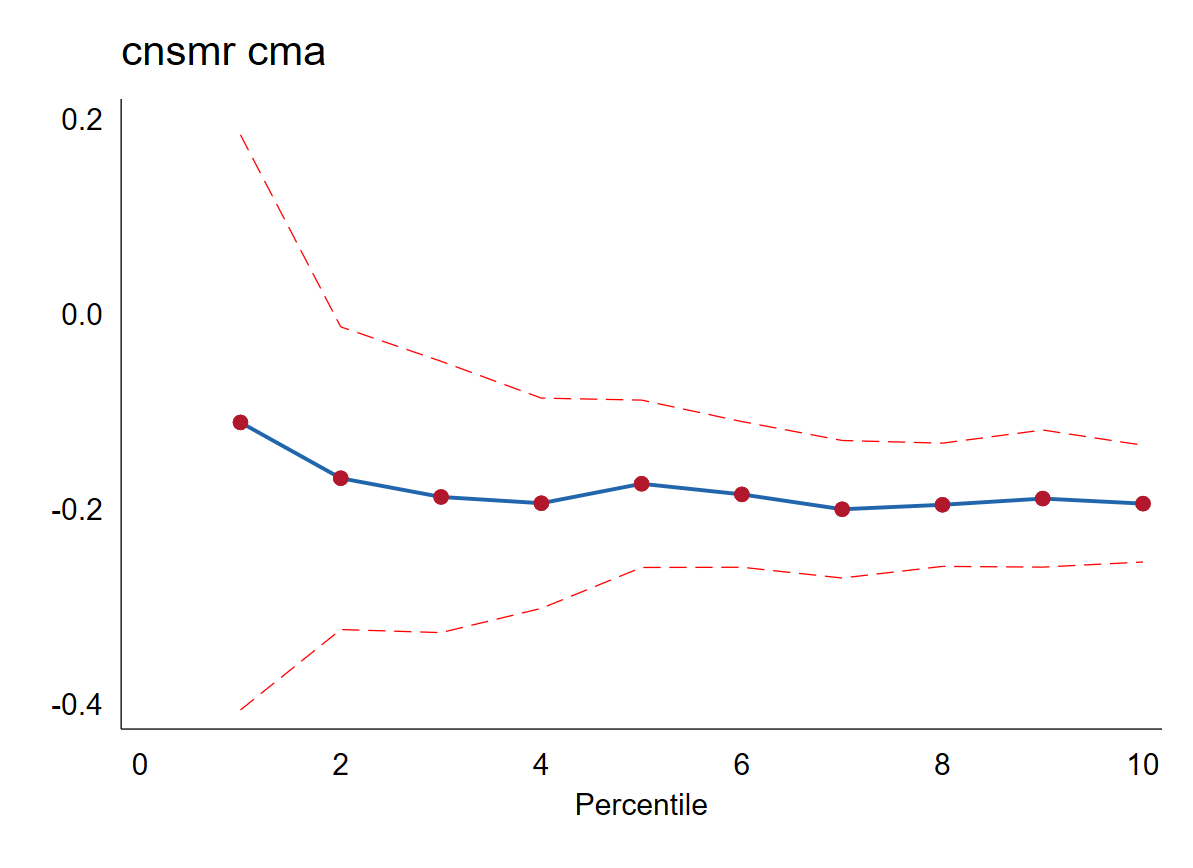}}\subfloat{\includegraphics[width=7cm]{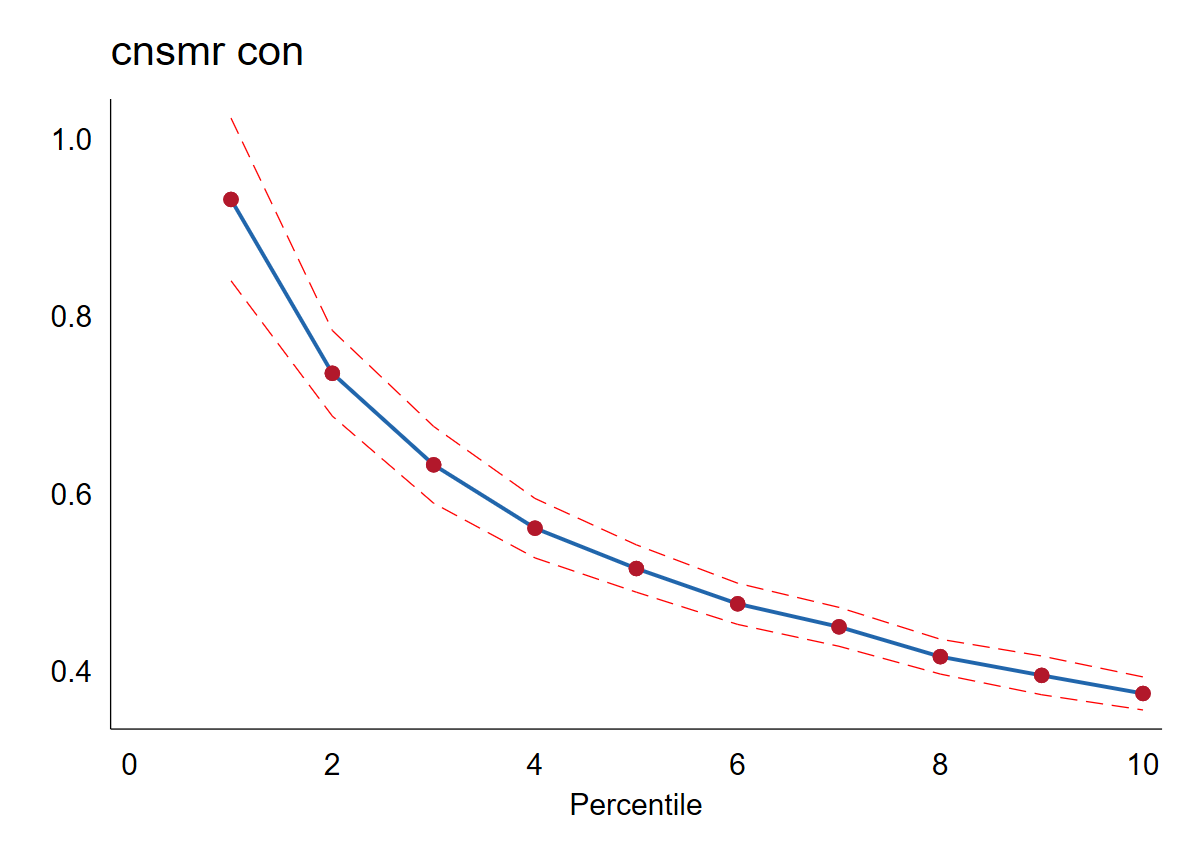}}
\par\end{centering}
\end{figure}
\par\end{center}

\begin{center}
\begin{figure}[H]
\begin{centering}
\subfloat{\includegraphics[width=7cm]{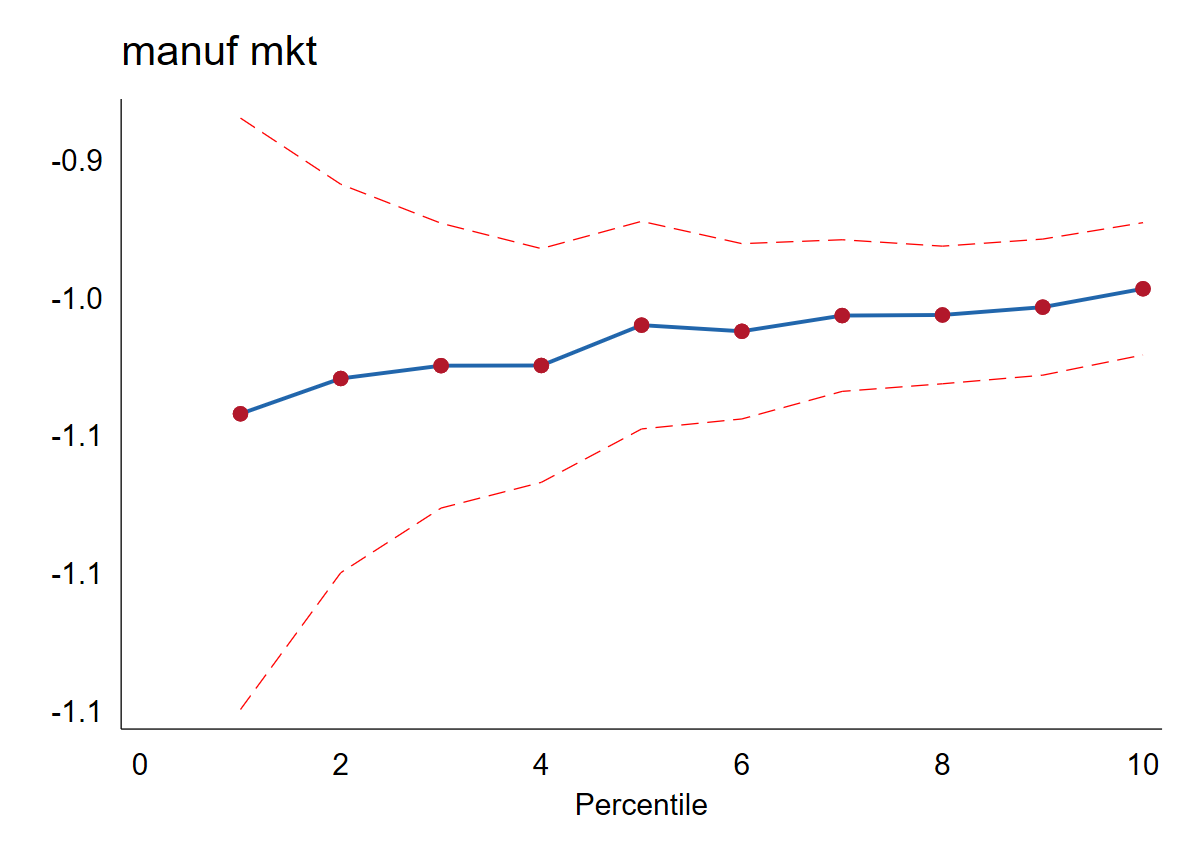}}\subfloat{\includegraphics[width=7cm]{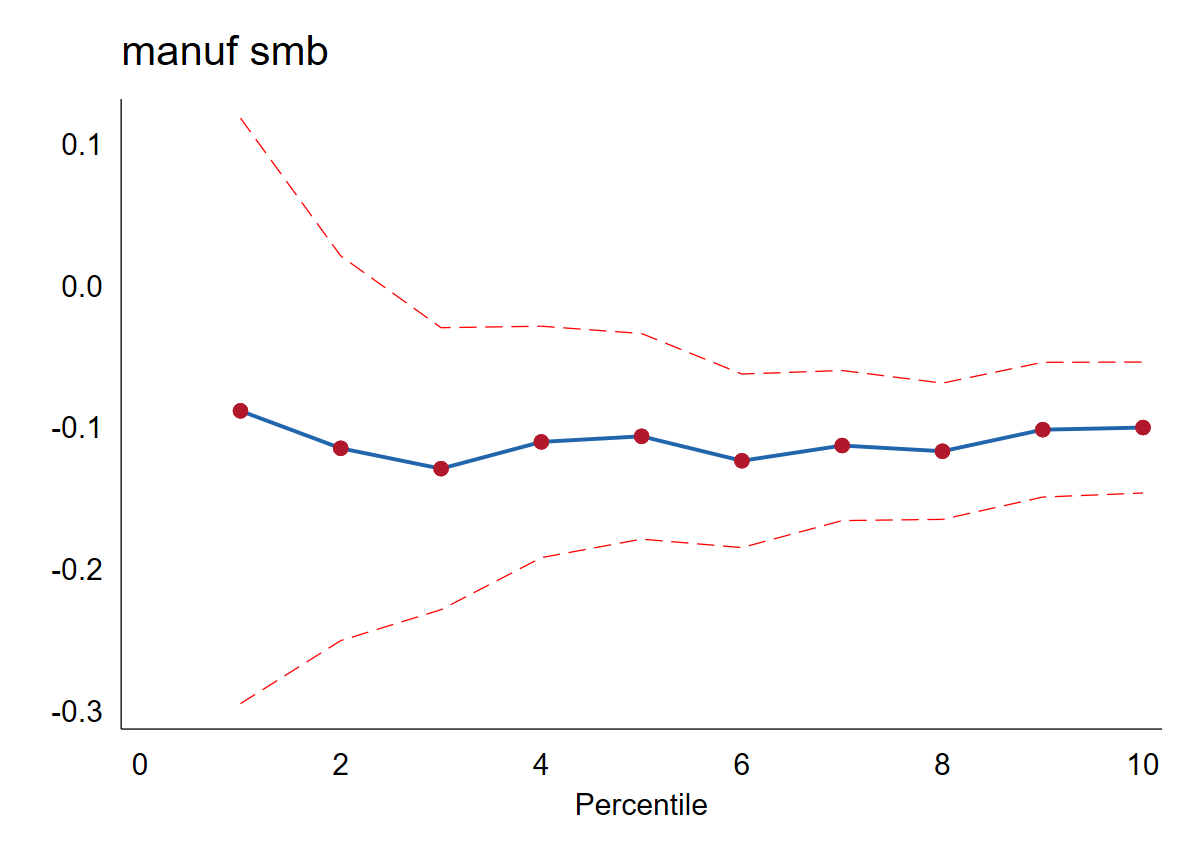}}
\par\end{centering}
\begin{centering}
\subfloat{\includegraphics[width=7cm]{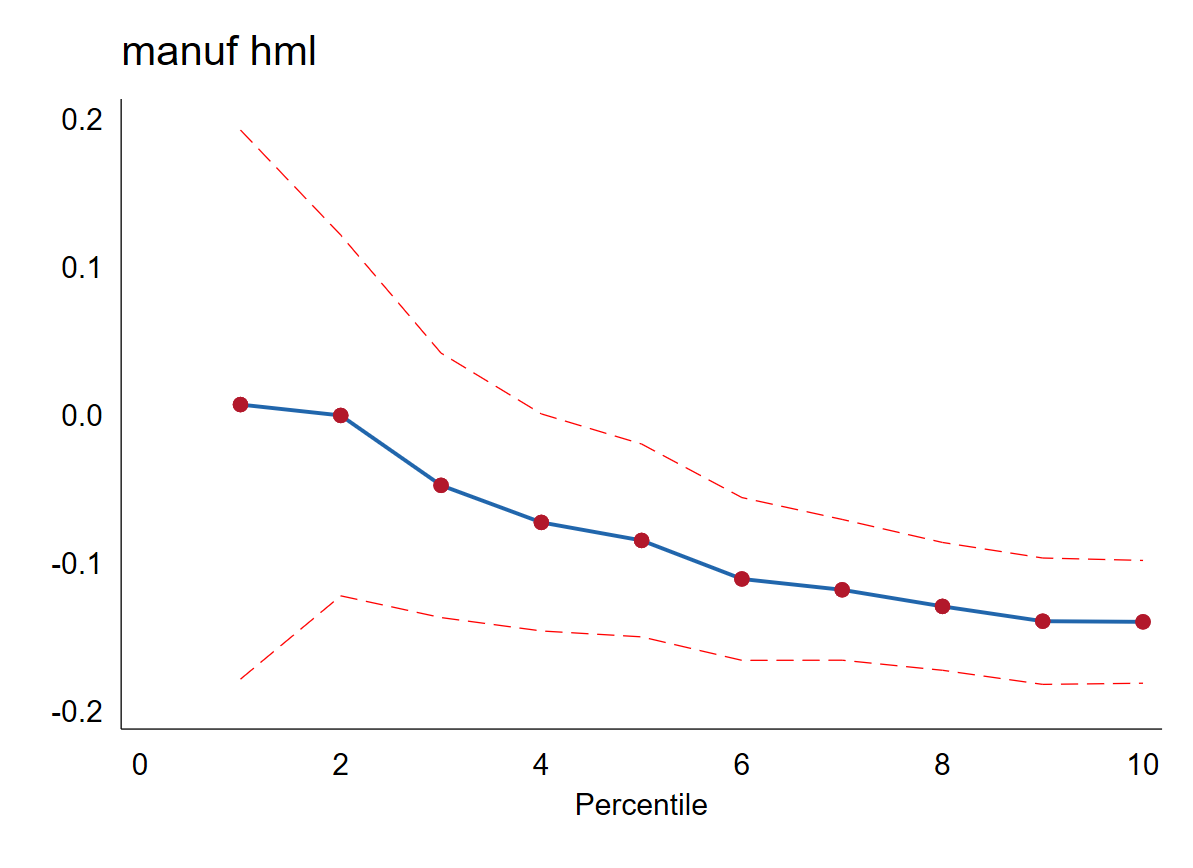}}\subfloat{\includegraphics[width=7cm]{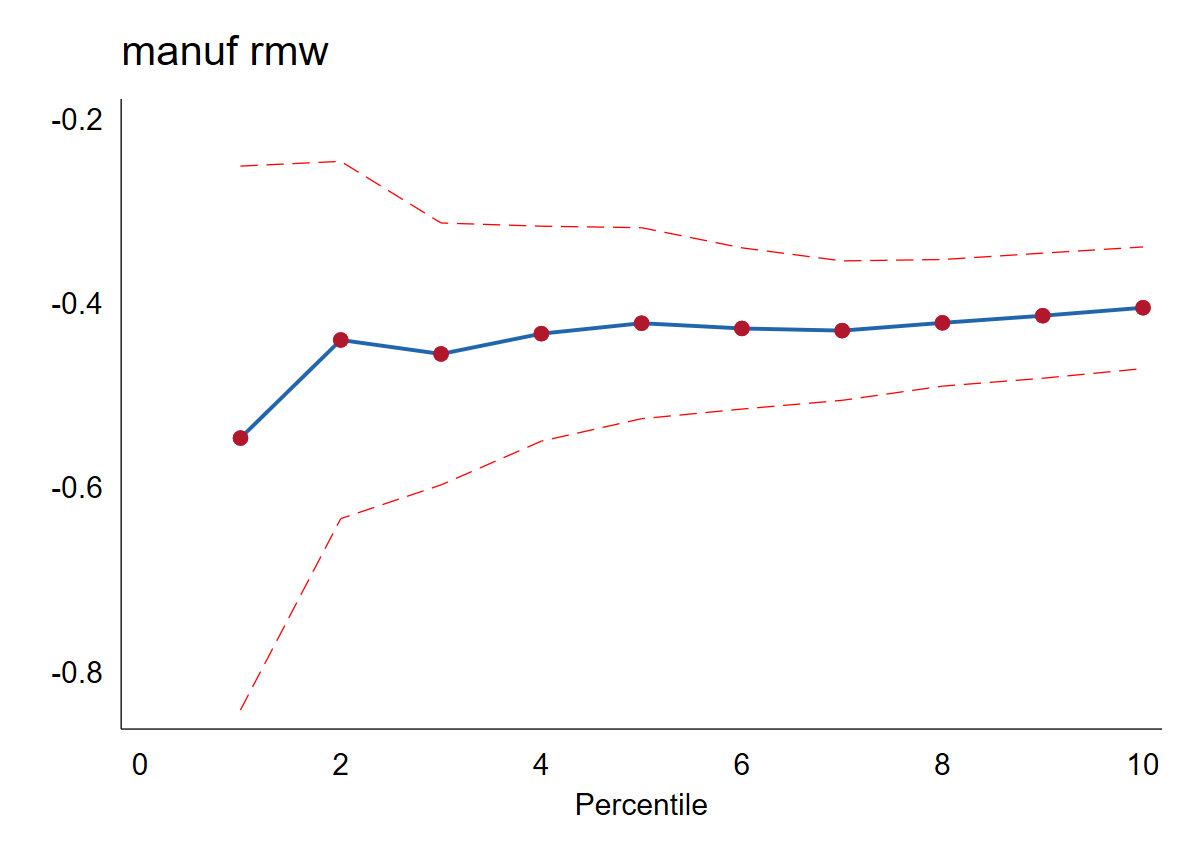}}
\par\end{centering}
\centering{}\subfloat{\includegraphics[width=7cm]{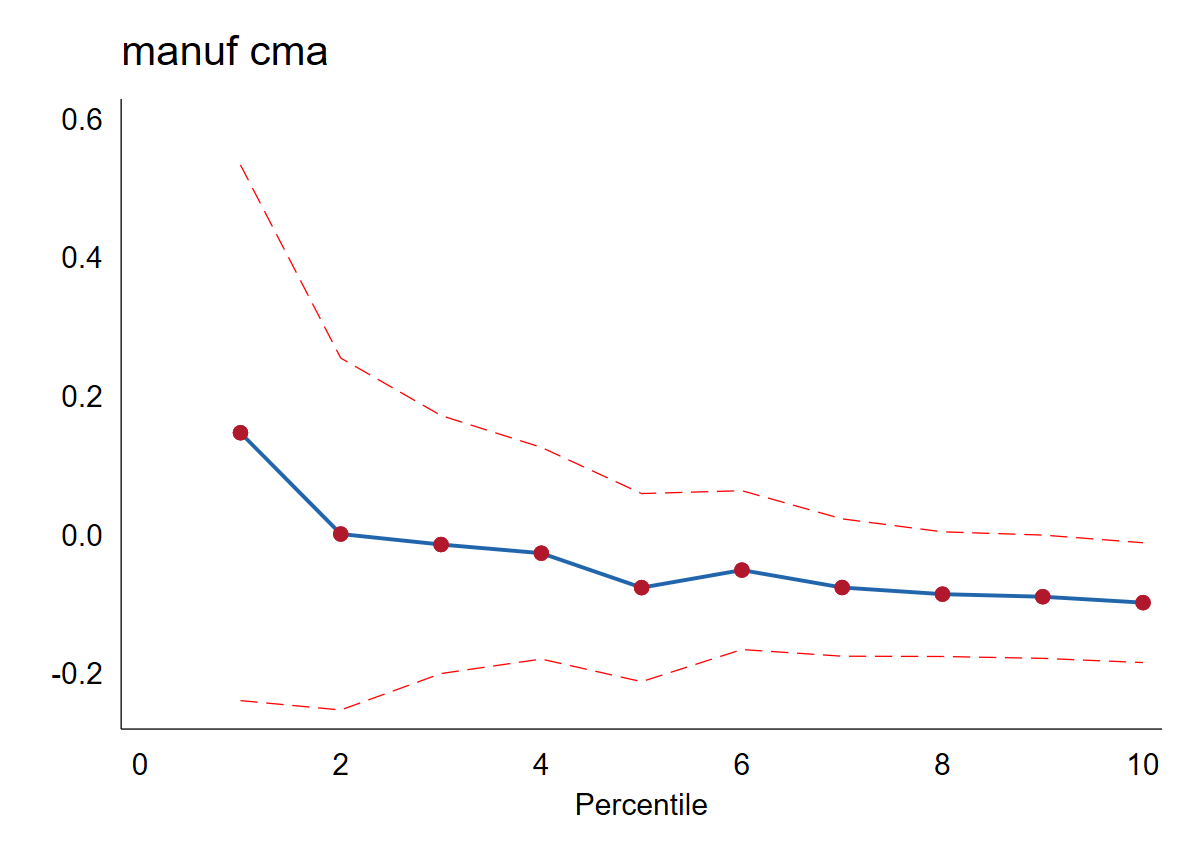}}\subfloat{\includegraphics[width=7cm]{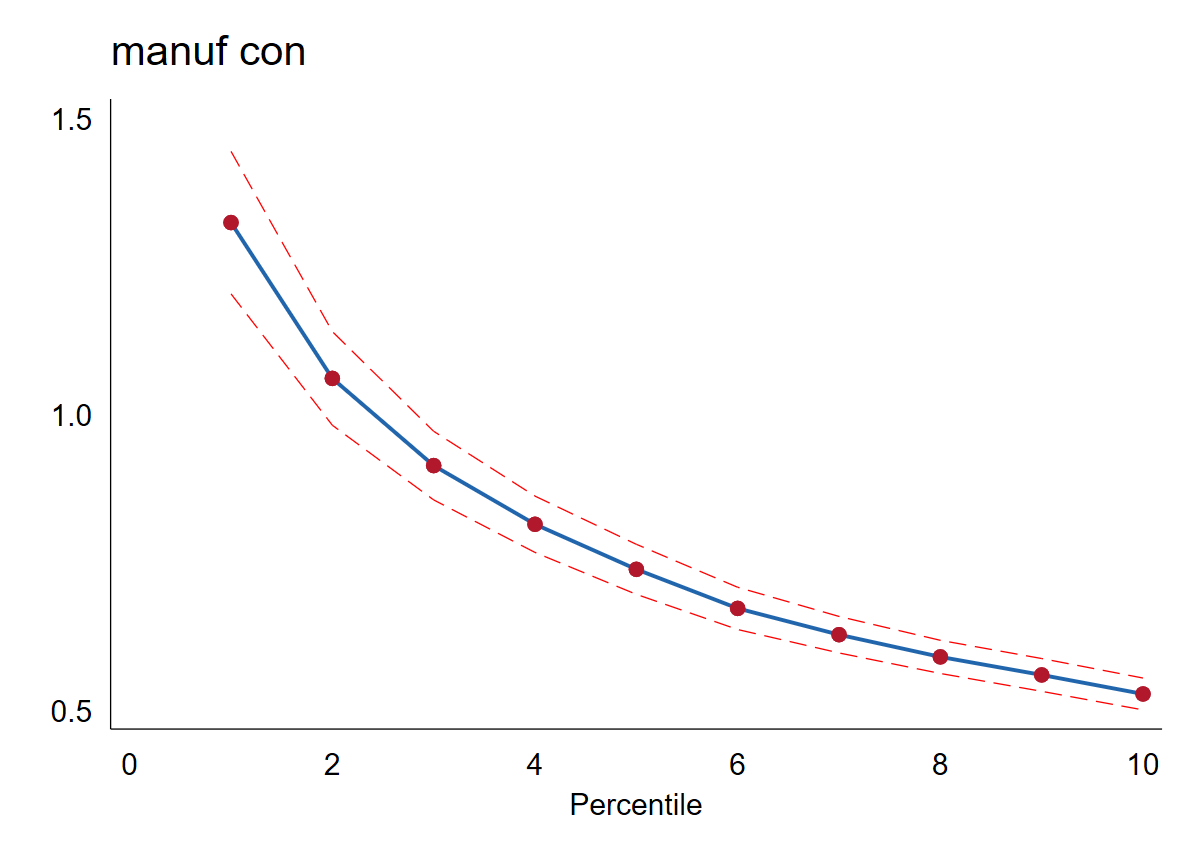}}
\end{figure}
\par\end{center}

\begin{center}
\begin{figure}[H]
\begin{centering}
\subfloat{\includegraphics[width=7cm]{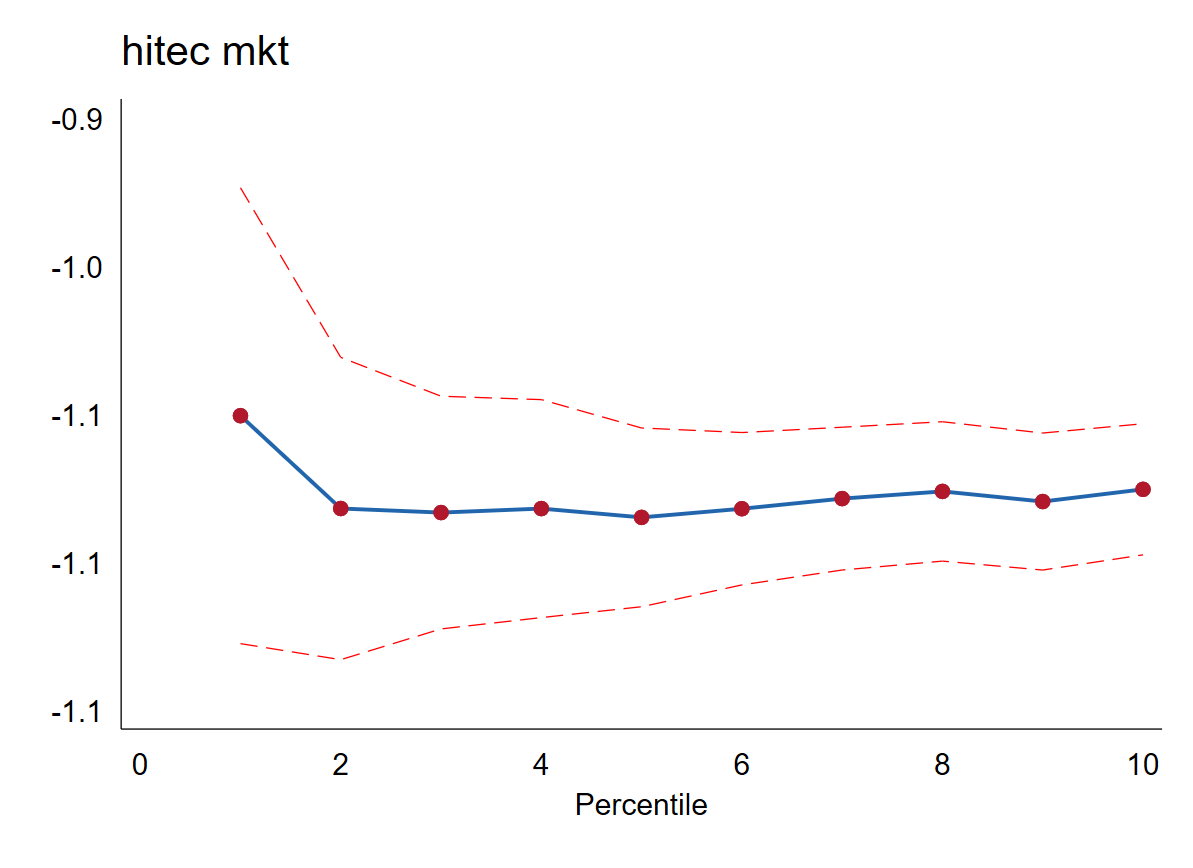}}\subfloat{\includegraphics[width=7cm]{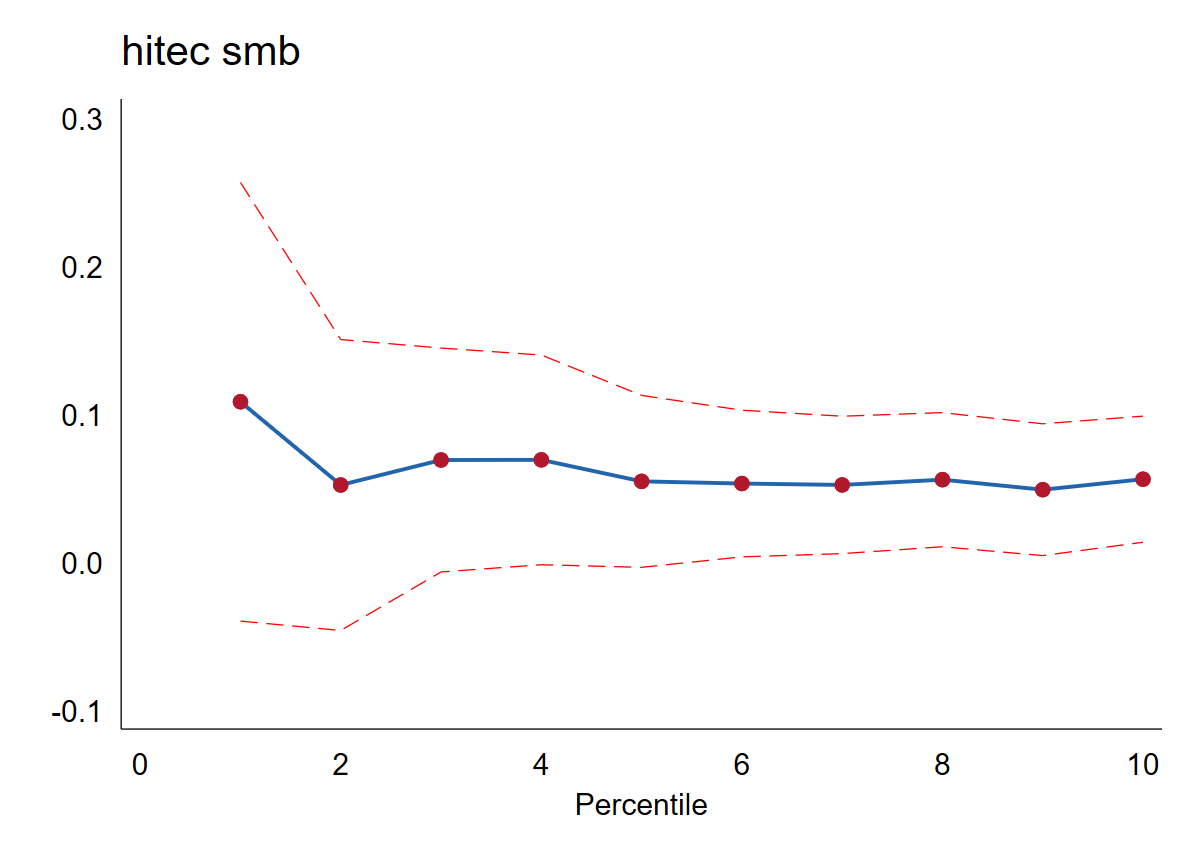}}
\par\end{centering}
\begin{centering}
\subfloat{\includegraphics[width=7cm]{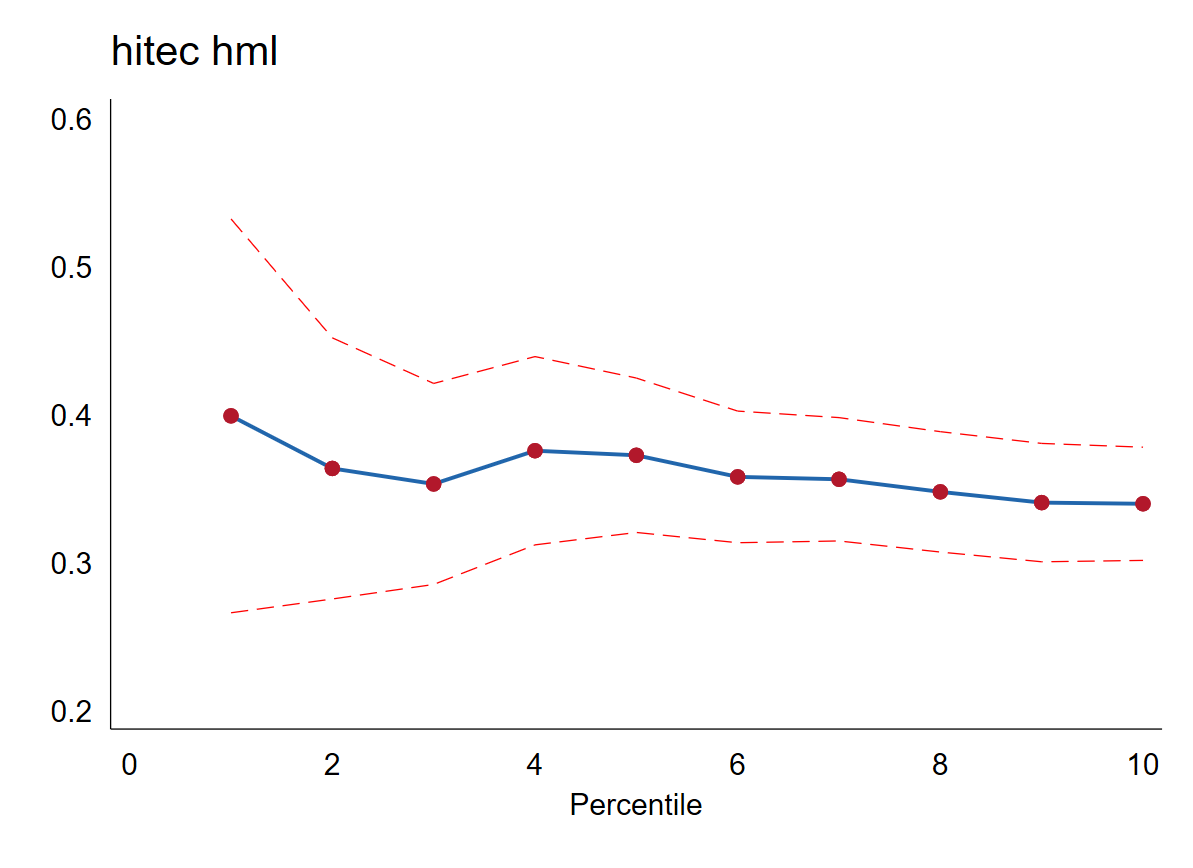}}\subfloat{\includegraphics[width=7cm]{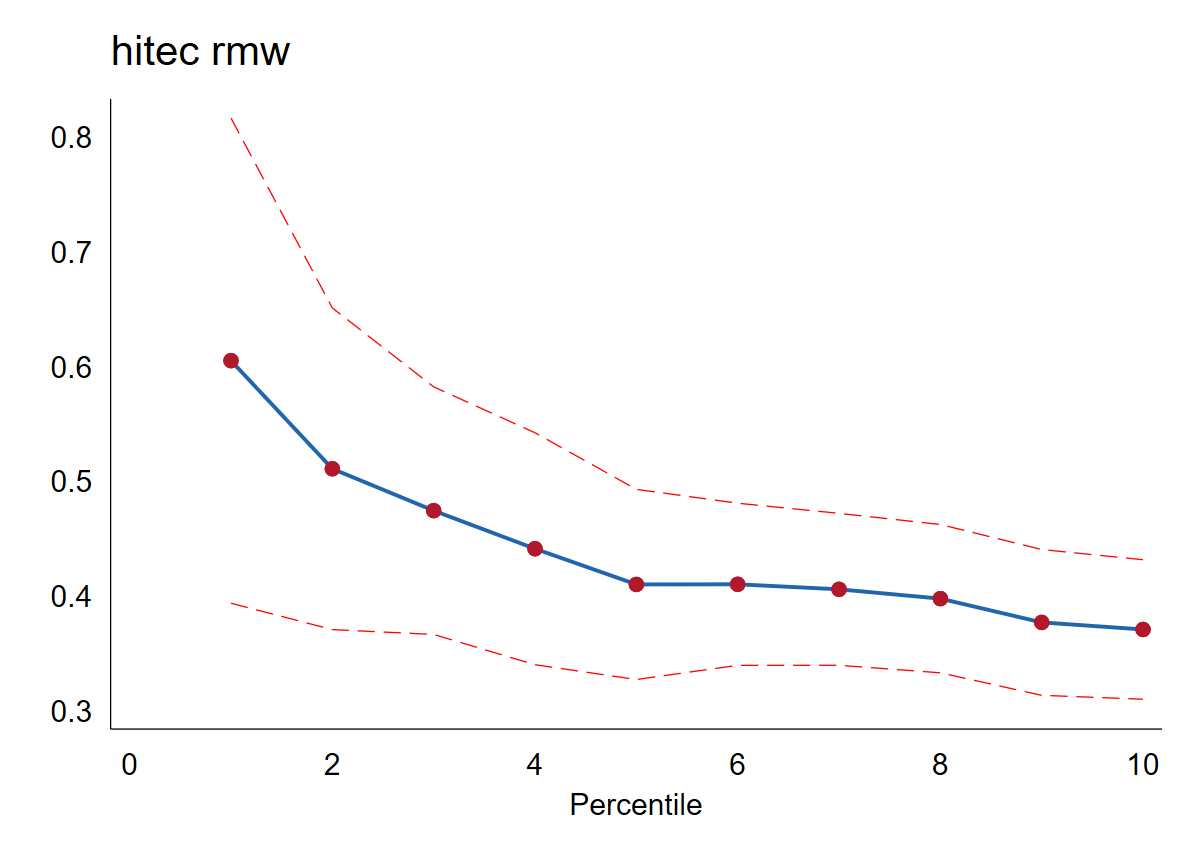}}
\par\end{centering}
\centering{}\subfloat{\includegraphics[width=7cm]{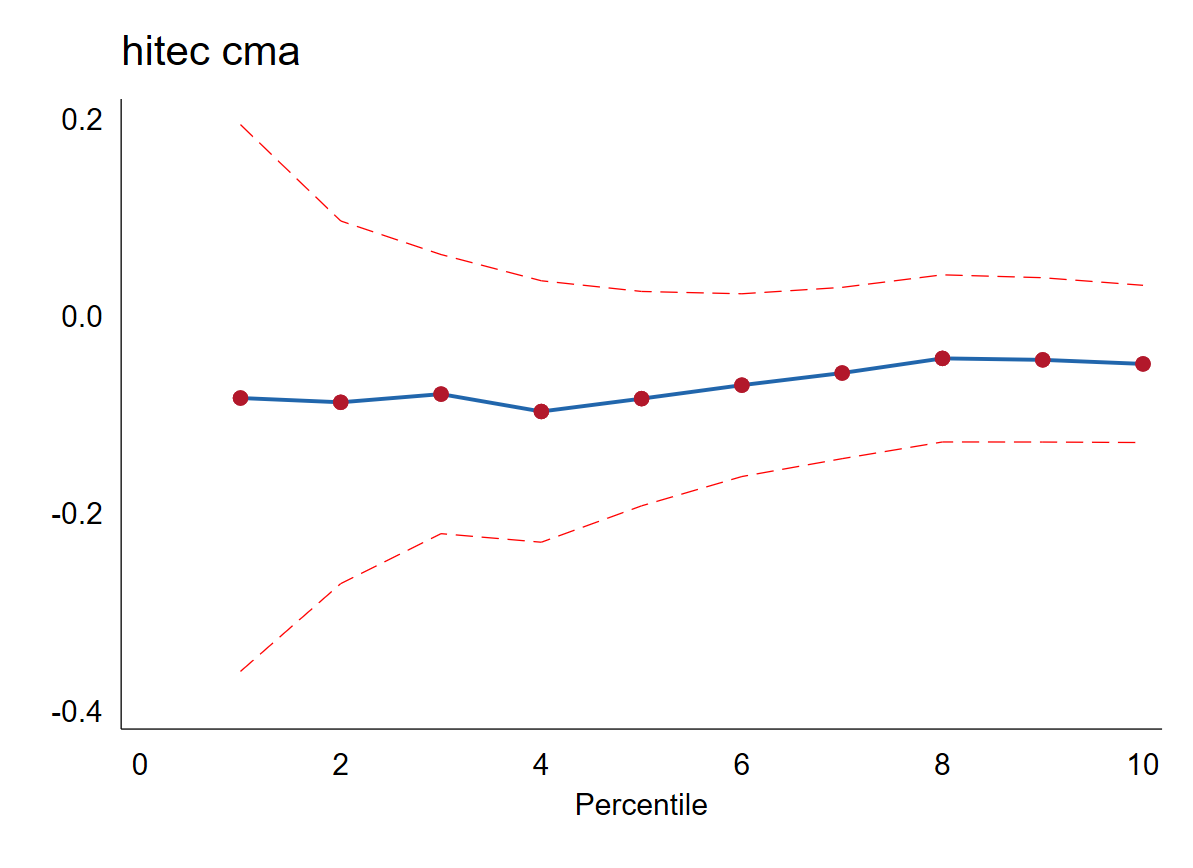}}\subfloat{\includegraphics[width=7cm]{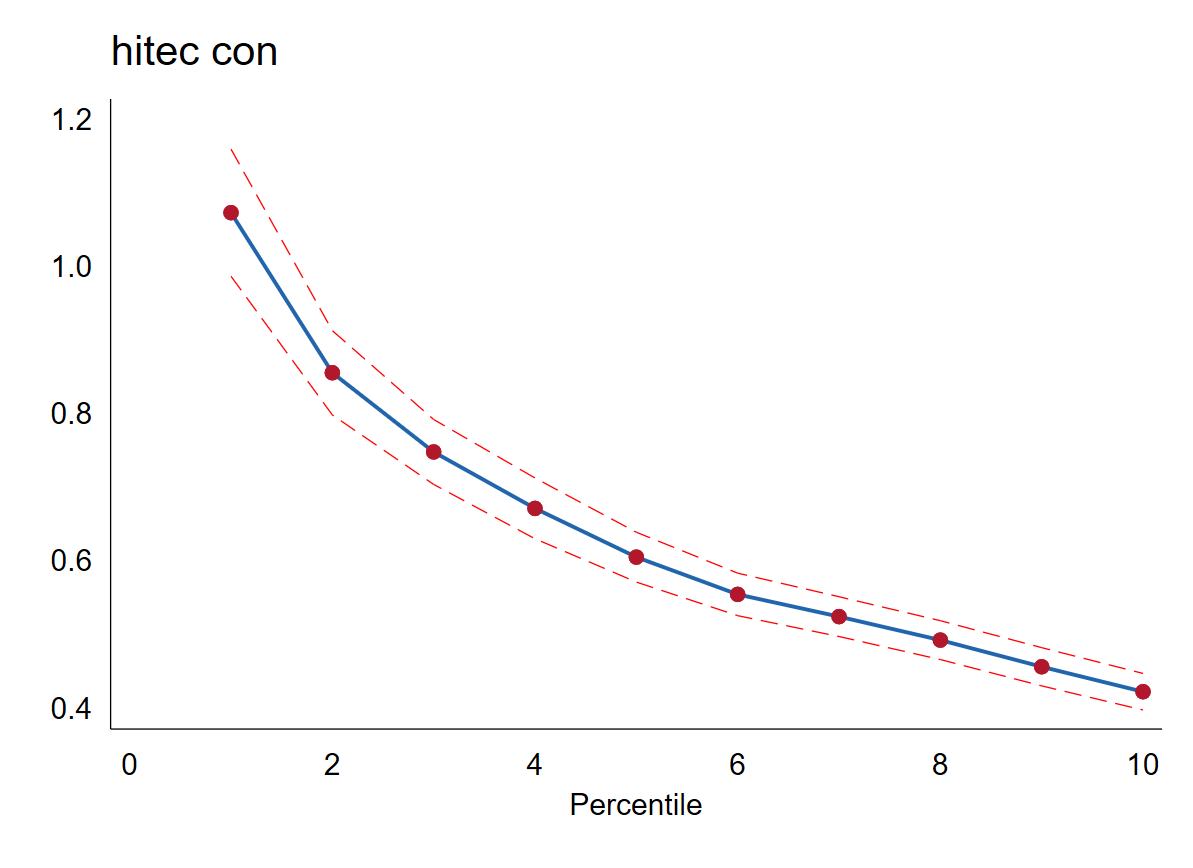}}
\end{figure}
\par\end{center}

\begin{center}
\begin{figure}[H]
\begin{centering}
\subfloat{\includegraphics[width=7cm]{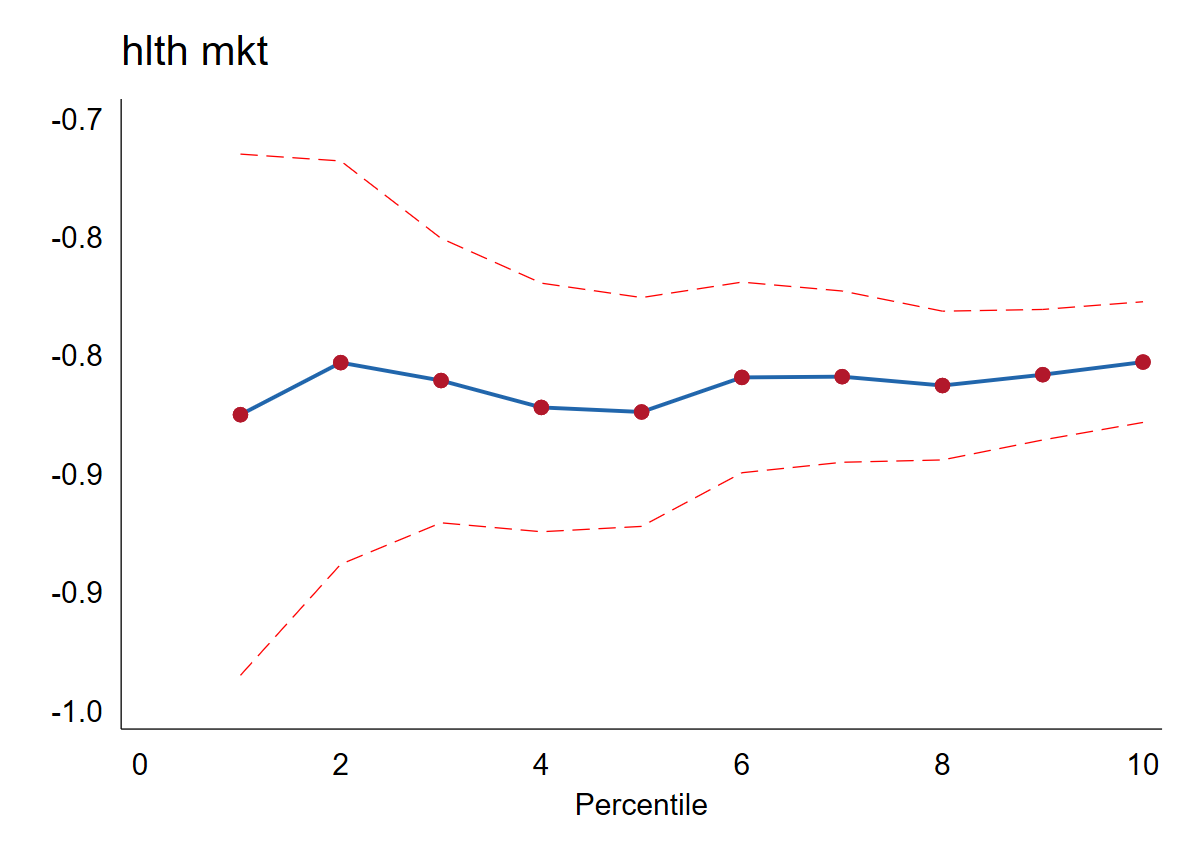}}\subfloat{\includegraphics[width=7cm]{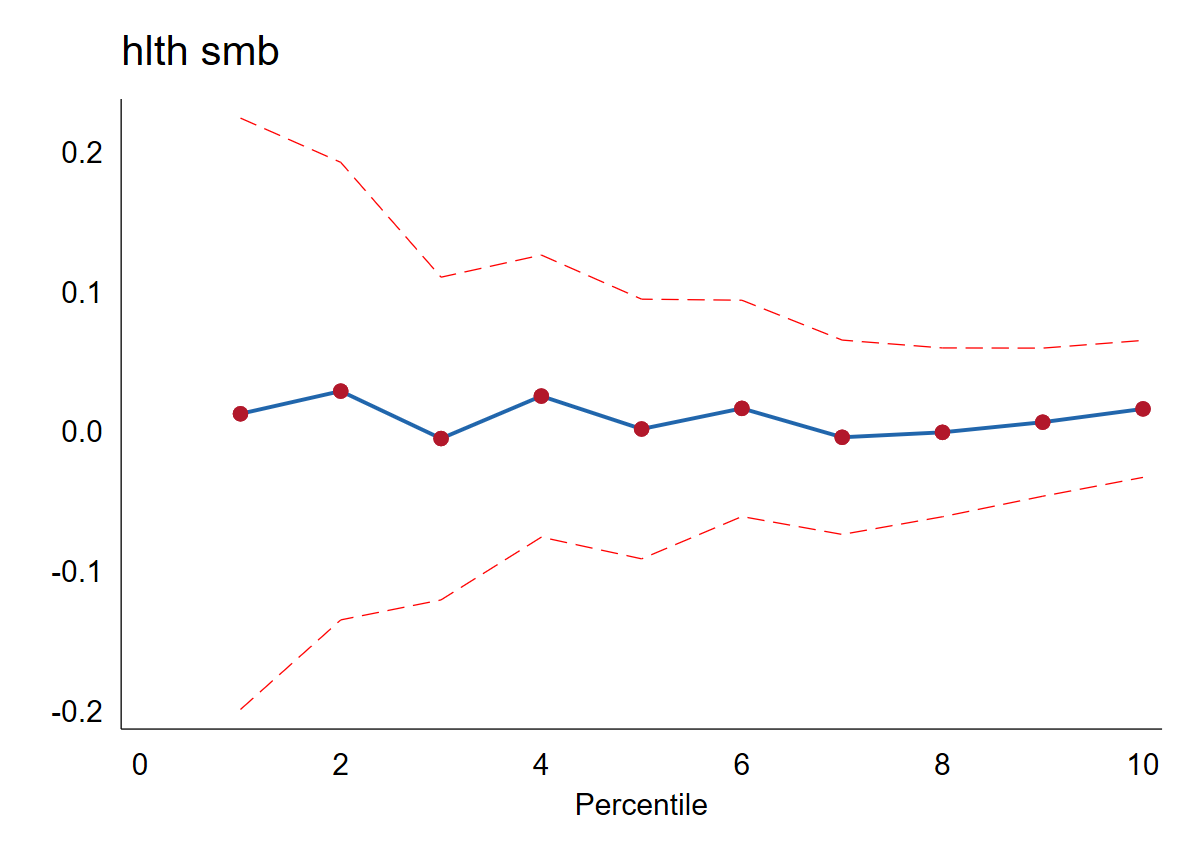}}
\par\end{centering}
\begin{centering}
\subfloat{\includegraphics[width=7cm]{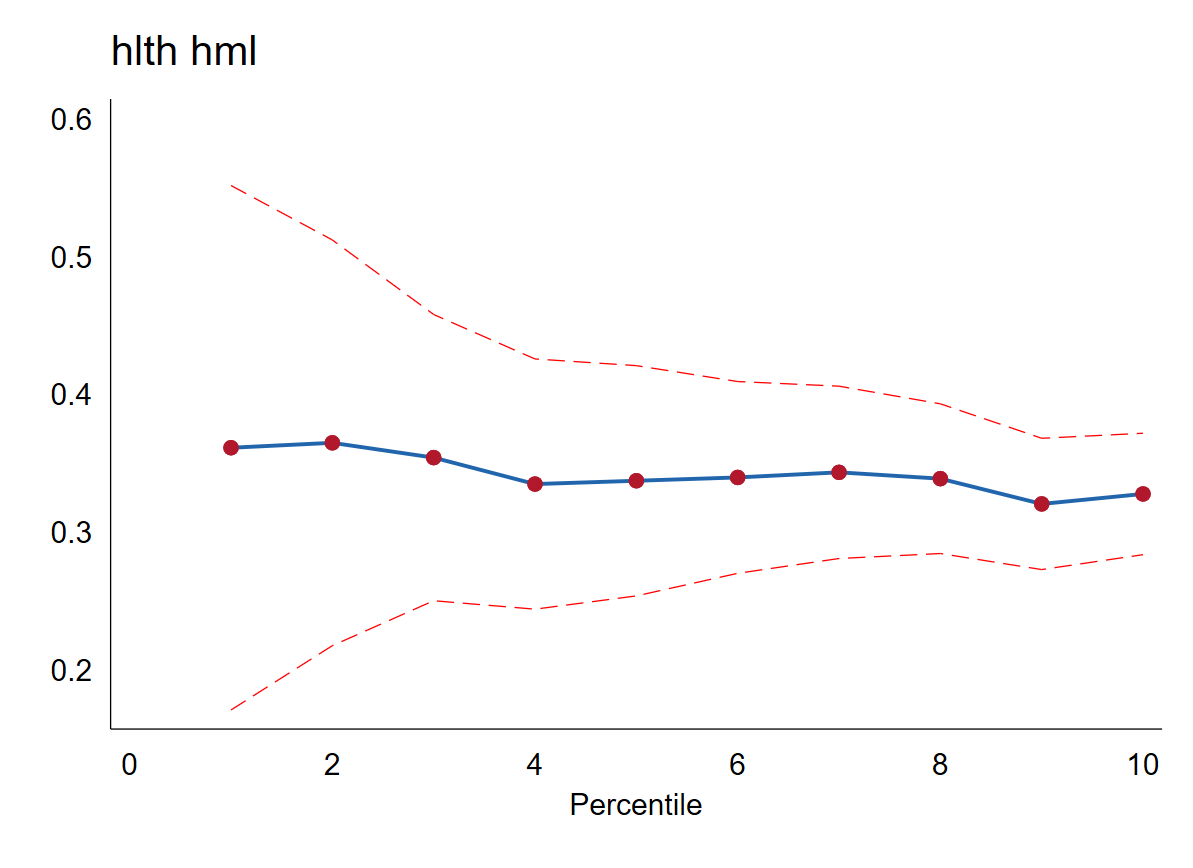}}\subfloat{\includegraphics[width=7cm]{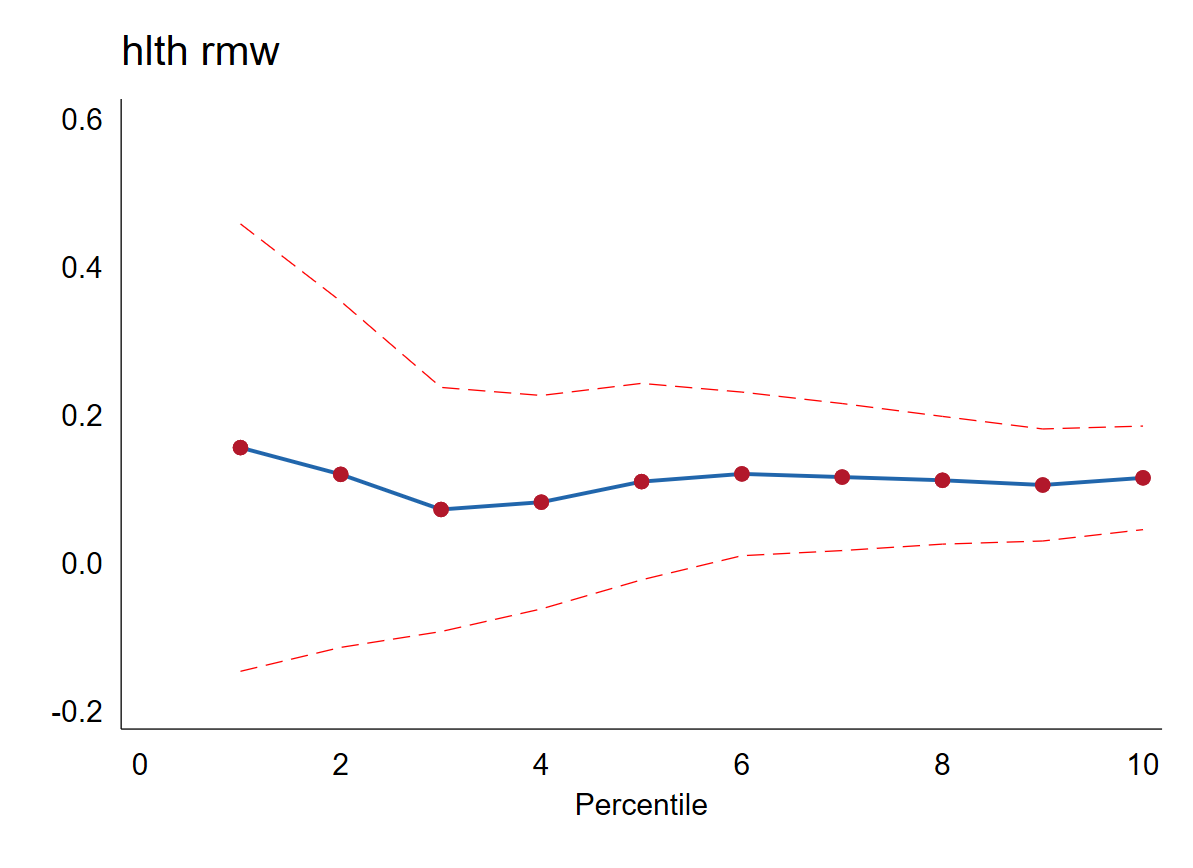}}
\par\end{centering}
\centering{}\subfloat{\includegraphics[width=7cm]{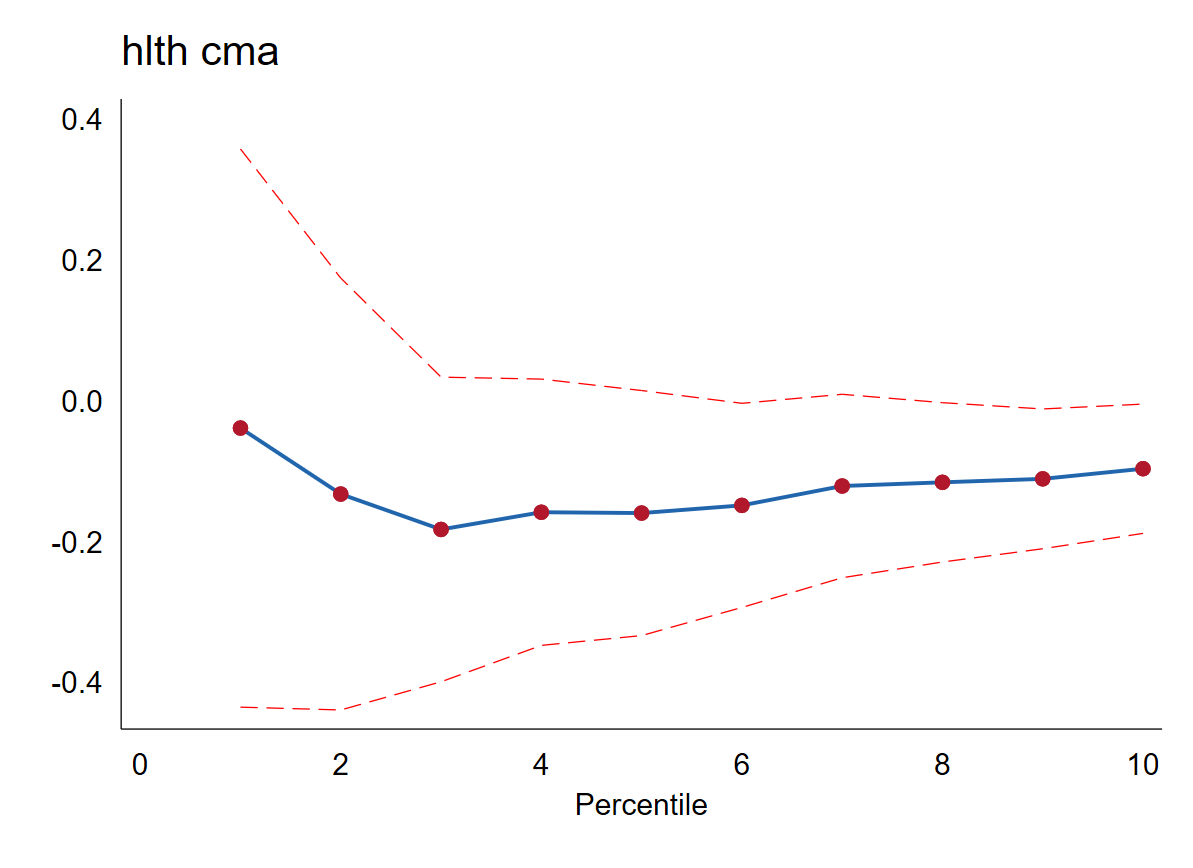}}\subfloat{\includegraphics[width=7cm]{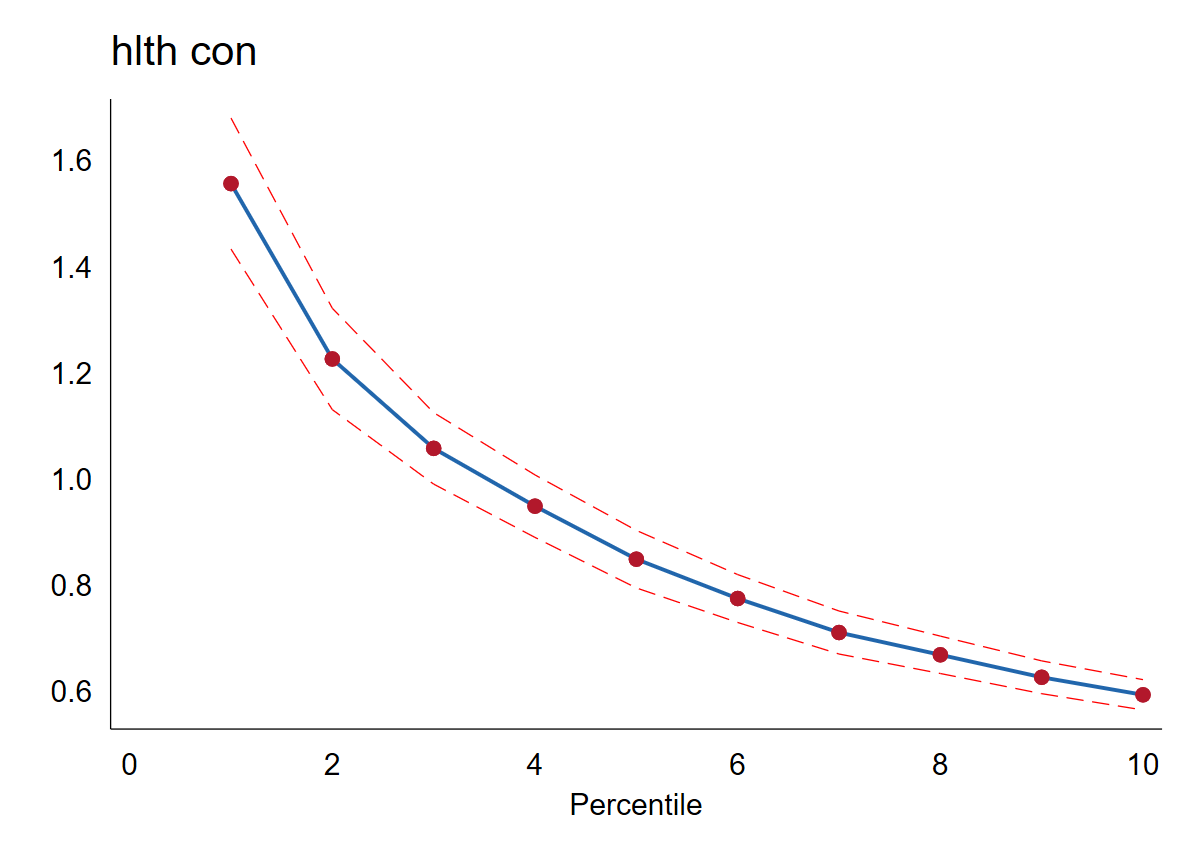}}
\end{figure}
\par\end{center}

\begin{center}
\begin{figure}[H]
\begin{centering}
\subfloat{\includegraphics[width=7cm]{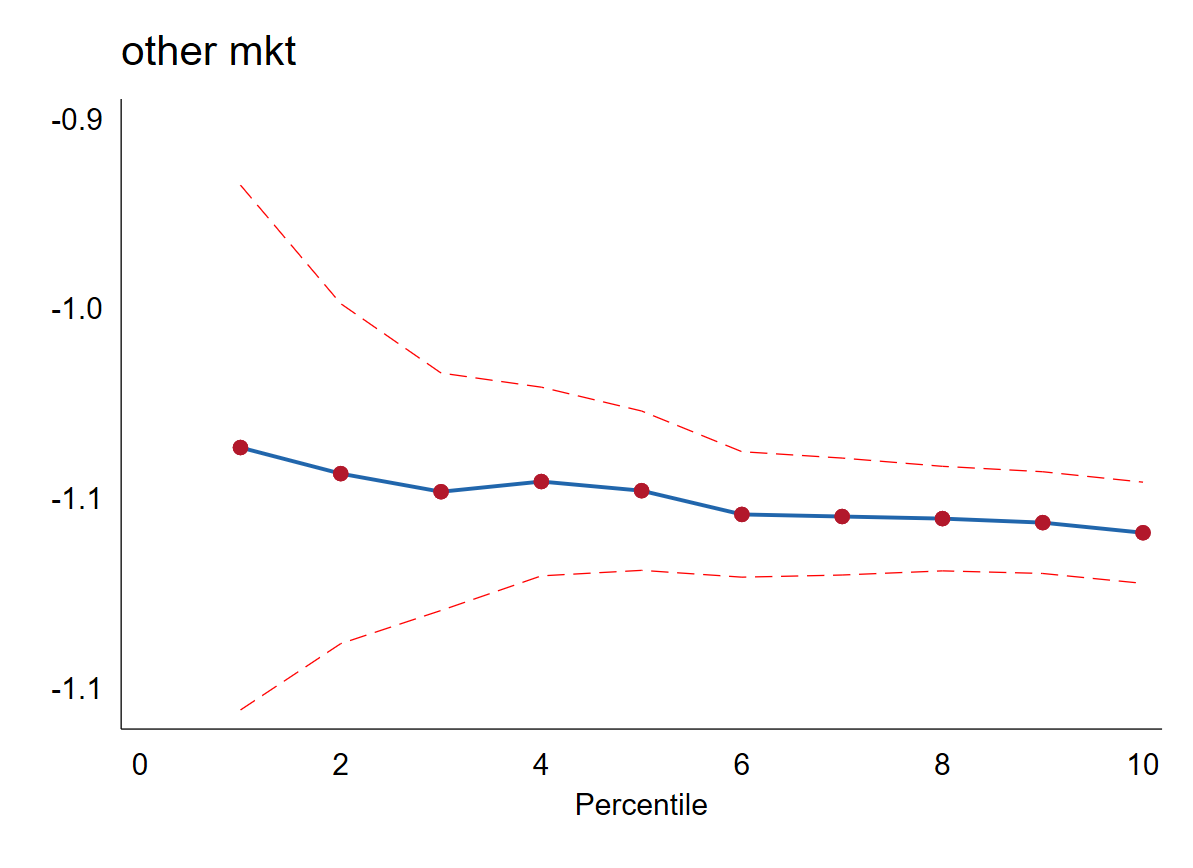}}\subfloat{\includegraphics[width=7cm]{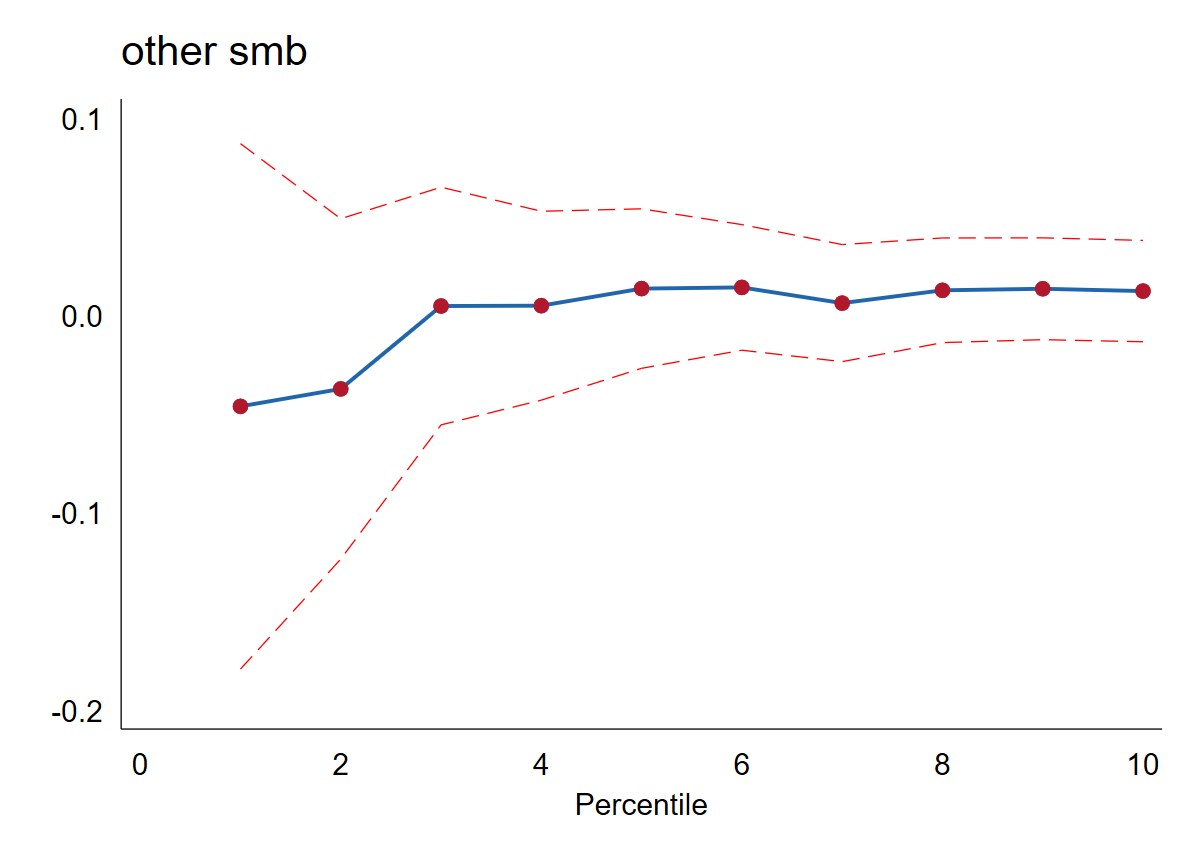}}
\par\end{centering}
\begin{centering}
\subfloat{\includegraphics[width=7cm]{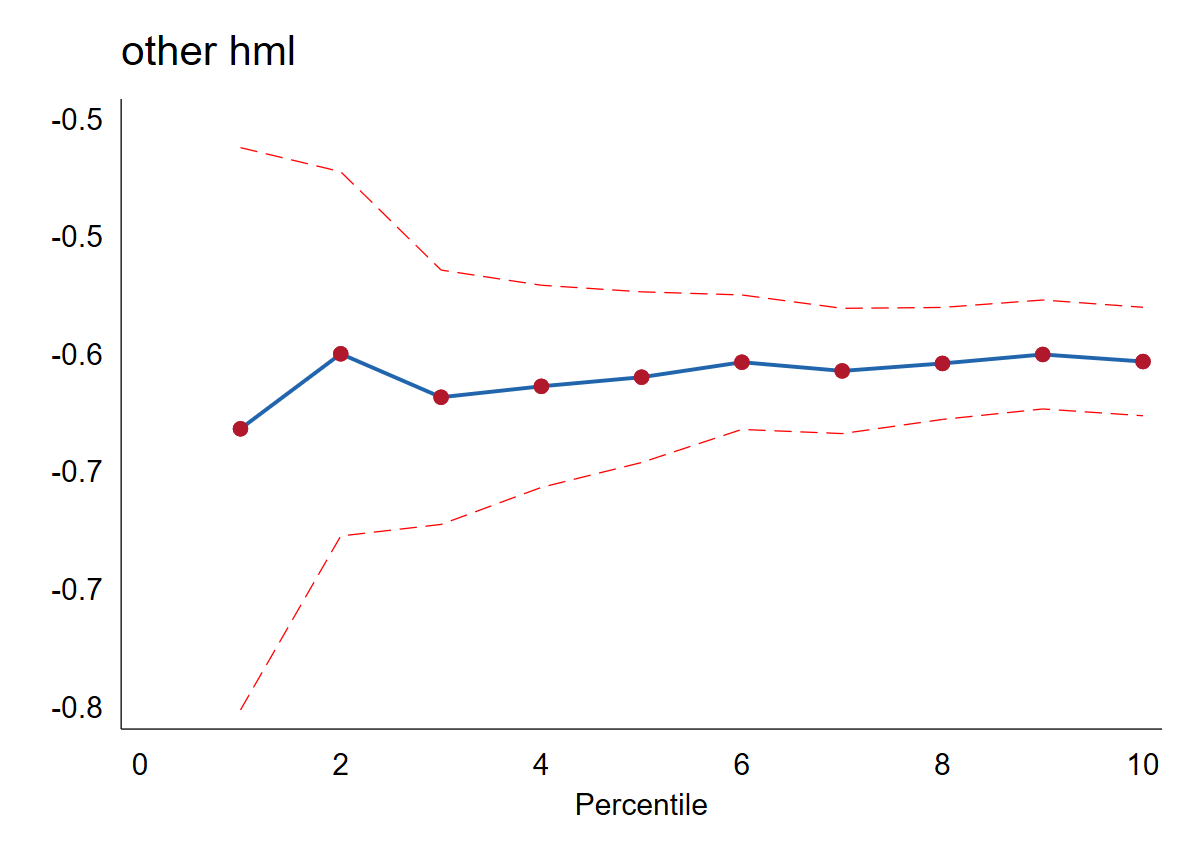}}\subfloat{\includegraphics[width=7cm]{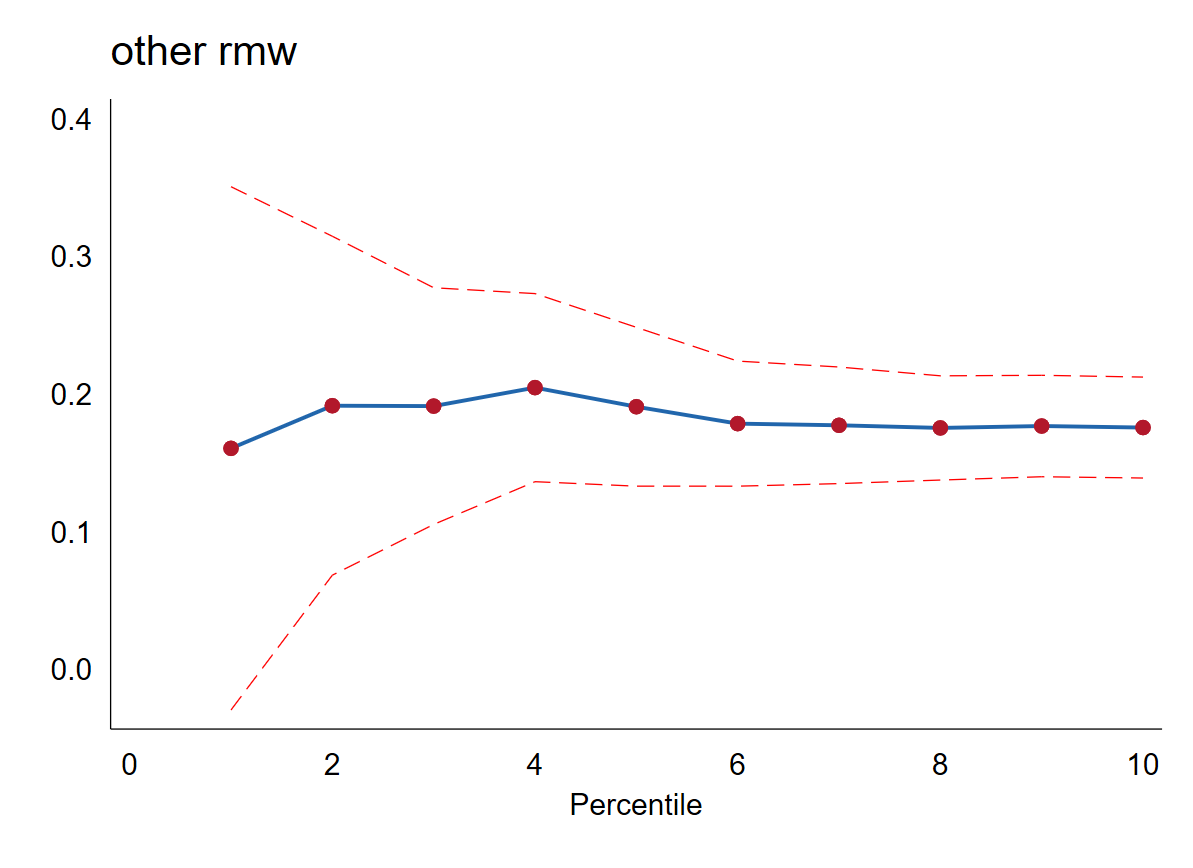}}
\par\end{centering}
\centering{}\subfloat{\includegraphics[width=7cm]{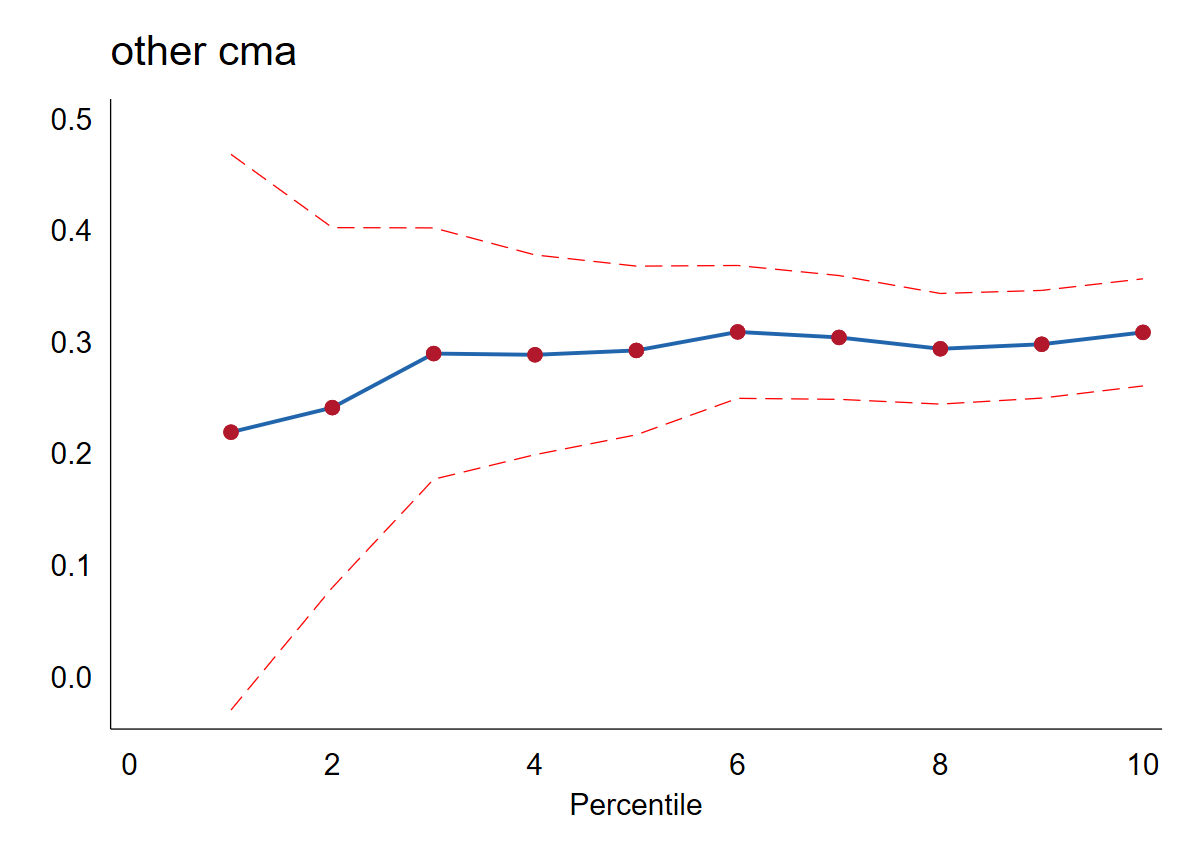}}\subfloat{\includegraphics[width=7cm]{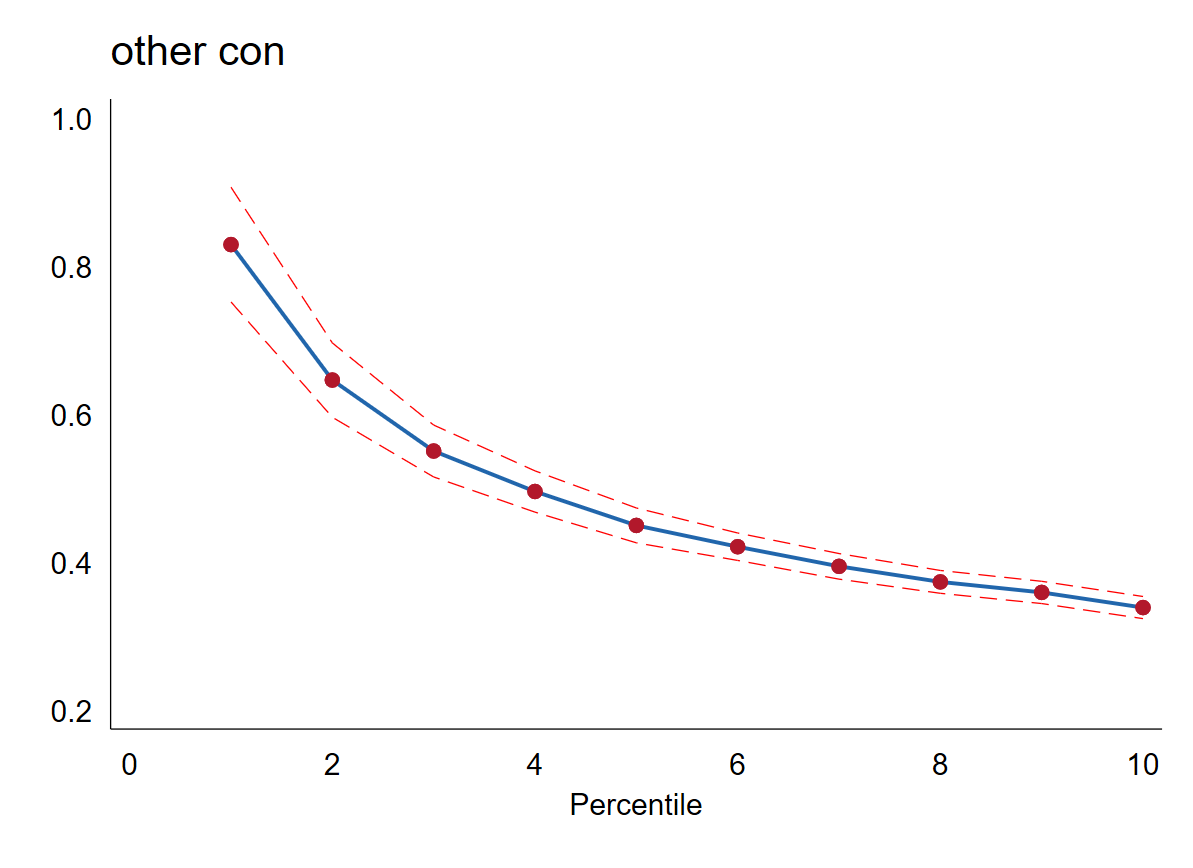}}
\end{figure}
\par\end{center}

\begin{landscape}
\begin{center}
{\tiny{}}
\begin{table}[H]
{\tiny{}\caption{\small{}Industry Factor Loadings -- Newey-West Adjusted\label{tab:newey-west}}
}{\tiny\par}

\caption*{
This table shows the WAQR estimates of industry factor loadings for the Fama-French
5 industries for the 10\% ES regression. The standard errors are reported for both the normal
and Newey-West adjusted method.  Standard errors are reported in parentheses.
}

\centering{}{\tiny{}}%
\begin{tabular}{lccccccccccccl}
\hline 
 & {\tiny{}MKTRF} &  & {\tiny{}SMB} &  & {\tiny{}HML} &  & {\tiny{}RMW} &  & {\tiny{}CMA} &  & {\tiny{}Constant} &  & {\tiny{}Adjustment}\tabularnewline
\hline 
{\tiny{}Cnsmr} & {\tiny{}-1.196} & {\tiny{}(0.022)} & {\tiny{}-0.122} & {\tiny{}(0.043)} & {\tiny{}0.214} & {\tiny{}(0.039)} & {\tiny{}-0.779} & {\tiny{}(0.061)} & {\tiny{}-0.250} & {\tiny{}(0.081)} & {\tiny{}0.909} & {\tiny{}(0.025)} & {\tiny{}No Adjust}\tabularnewline
 & {\tiny{}-1.196} & {\tiny{}(0.050)} & {\tiny{}-0.122} & {\tiny{}(0.093)} & {\tiny{}0.214} & {\tiny{}(0.112)} & {\tiny{}-0.779} & {\tiny{}(0.134)} & {\tiny{}-0.250} & {\tiny{}(0.185)} & {\tiny{}0.909} & {\tiny{}(0.029)} & {\tiny{}Newey-West}\tabularnewline
{\tiny{}Manuf} & {\tiny{}-1.293} & {\tiny{}(0.029)} & {\tiny{}-0.066} & {\tiny{}(0.057)} & {\tiny{}-0.514} & {\tiny{}(0.051)} & {\tiny{}-0.832} & {\tiny{}(0.081)} & {\tiny{}0.122} & {\tiny{}(0.106)} & {\tiny{}1.188} & {\tiny{}(0.033)} & {\tiny{}No Adjust}\tabularnewline
 & {\tiny{}-1.293} & {\tiny{}(0.058)} & {\tiny{}-0.066} & {\tiny{}(0.125)} & {\tiny{}-0.514} & {\tiny{}(0.131)} & {\tiny{}-0.832} & {\tiny{}(0.122)} & {\tiny{}0.122} & {\tiny{}(0.205)} & {\tiny{}1.188} & {\tiny{}(0.037)} & {\tiny{}Newey-West}\tabularnewline
{\tiny{}Hitec} & {\tiny{}-1.400} & {\tiny{}(0.027)} & {\tiny{}0.119} & {\tiny{}(0.052)} & {\tiny{}0.297} & {\tiny{}(0.046)} & {\tiny{}1.125} & {\tiny{}(0.074)} & {\tiny{}-0.280} & {\tiny{}(0.097)} & {\tiny{}1.082} & {\tiny{}(0.030)} & {\tiny{}No Adjust}\tabularnewline
 & {\tiny{}-1.400} & {\tiny{}(0.058)} & {\tiny{}0.119} & {\tiny{}(0.099)} & {\tiny{}0.297} & {\tiny{}(0.117)} & {\tiny{}1.125} & {\tiny{}(0.199)} & {\tiny{}-0.280} & {\tiny{}(0.210)} & {\tiny{}1.082} & {\tiny{}(0.033)} & {\tiny{}Newey-West}\tabularnewline
{\tiny{}Hlth} & {\tiny{}-1.153} & {\tiny{}(0.026)} & {\tiny{}0.096} & {\tiny{}(0.051)} & {\tiny{}0.251} & {\tiny{}(0.045)} & {\tiny{}0.190} & {\tiny{}(0.072)} & {\tiny{}-0.045} & {\tiny{}(0.094)} & {\tiny{}1.207} & {\tiny{}(0.029)} & {\tiny{}No Adjust}\tabularnewline
 & {\tiny{}-1.153} & {\tiny{}(0.056)} & {\tiny{}0.096} & {\tiny{}(0.087)} & {\tiny{}0.251} & {\tiny{}(0.095)} & {\tiny{}0.190} & {\tiny{}(0.105)} & {\tiny{}-0.045} & {\tiny{}(0.168)} & {\tiny{}1.207} & {\tiny{}(0.033)} & {\tiny{}Newey-West}\tabularnewline
{\tiny{}Other} & {\tiny{}-1.330} & {\tiny{}(0.033)} & {\tiny{}0.349} & {\tiny{}(0.064)} & {\tiny{}-1.685} & {\tiny{}(0.058)} & {\tiny{}0.278} & {\tiny{}(0.092)} & {\tiny{}0.765} & {\tiny{}(0.120)} & {\tiny{}1.043} & {\tiny{}(0.037)} & {\tiny{}No Adjust}\tabularnewline
 & {\tiny{}-1.330} & {\tiny{}(0.068)} & {\tiny{}0.349} & {\tiny{}(0.129)} & {\tiny{}-1.685} & {\tiny{}(0.197)} & {\tiny{}0.278} & {\tiny{}(0.144)} & {\tiny{}0.765} & {\tiny{}(0.224)} & {\tiny{}1.043} & {\tiny{}(0.041)} & {\tiny{}Newey-West}\tabularnewline
\hline 
\end{tabular}{\tiny\par}
\end{table}
{\tiny\par}
\par\end{center}
\end{landscape}

\begin{center}
{\tiny{}}
\begin{table}[H]
{\tiny{}\caption{\small{}Quantile Regression -- Higher Order\label{tab:Quantile-Regression-higher-order}}
}{\tiny\par}

\caption*{
This table shows the quantile regression results of regressing the
industry returns to the higher moments and interaction terms of the
Fama-French 5 factors.  Standard errors are reported in parentheses.
}

\centering{}{\scriptsize{}}%
\begin{tabular}{lcccccccccc}
\hline 
{\scriptsize{}Panel A} & {\scriptsize{}(1)} & {\scriptsize{}(2)} & {\scriptsize{}(3)} & {\scriptsize{}(4)} & {\scriptsize{}(5)} & {\scriptsize{}(6)} & {\scriptsize{}(7)} & {\scriptsize{}(8)} & {\scriptsize{}(9)} & {\scriptsize{}(10)}\tabularnewline
{\scriptsize{}Ind} & {\scriptsize{}cnsmr} & {\scriptsize{}cnsmr} & {\scriptsize{}manuf} & {\scriptsize{}manuf} & {\scriptsize{}hitec} & {\scriptsize{}hitec} & {\scriptsize{}hlth} & {\scriptsize{}hlth} & {\scriptsize{}other} & {\scriptsize{}other}\tabularnewline
{\scriptsize{}Quantile} & {\scriptsize{}0.1} & {\scriptsize{}0.05} & {\scriptsize{}0.1} & {\scriptsize{}0.05} & {\scriptsize{}0.1} & {\scriptsize{}0.05} & {\scriptsize{}0.1} & {\scriptsize{}0.05} & {\scriptsize{}0.1} & {\scriptsize{}0.05}\tabularnewline
\hline 
{\scriptsize{}$mktrf$} & {\scriptsize{}0.88} & {\scriptsize{}0.89} & {\scriptsize{}0.99} & {\scriptsize{}0.99} & {\scriptsize{}1.08} & {\scriptsize{}1.10} & {\scriptsize{}0.82} & {\scriptsize{}0.83} & {\scriptsize{}1.07} & {\scriptsize{}1.07}\tabularnewline
 & {\scriptsize{}(0.01)} & {\scriptsize{}(0.01)} & {\scriptsize{}(0.01)} & {\scriptsize{}(0.02)} & {\scriptsize{}(0.01)} & {\scriptsize{}(0.01)} & {\scriptsize{}(0.01)} & {\scriptsize{}(0.02)} & {\scriptsize{}(0.01)} & {\scriptsize{}(0.01)}\tabularnewline
{\scriptsize{}$smb$} & {\scriptsize{}0.04} & {\scriptsize{}0.08} & {\scriptsize{}0.14} & {\scriptsize{}0.13} & {\scriptsize{}-0.06} & {\scriptsize{}-0.10} & {\scriptsize{}-0.06} & {\scriptsize{}-0.09} & {\scriptsize{}-0.03} & {\scriptsize{}-0.03}\tabularnewline
 & {\scriptsize{}(0.02)} & {\scriptsize{}(0.02)} & {\scriptsize{}(0.02)} & {\scriptsize{}(0.03)} & {\scriptsize{}(0.02)} & {\scriptsize{}(0.02)} & {\scriptsize{}(0.03)} & {\scriptsize{}(0.04)} & {\scriptsize{}(0.01)} & {\scriptsize{}(0.02)}\tabularnewline
{\scriptsize{}$hml$} & {\scriptsize{}-0.17} & {\scriptsize{}-0.14} & {\scriptsize{}0.22} & {\scriptsize{}0.19} & {\scriptsize{}-0.34} & {\scriptsize{}-0.33} & {\scriptsize{}-0.43} & {\scriptsize{}-0.49} & {\scriptsize{}0.54} & {\scriptsize{}0.56}\tabularnewline
 & {\scriptsize{}(0.02)} & {\scriptsize{}(0.02)} & {\scriptsize{}(0.02)} & {\scriptsize{}(0.03)} & {\scriptsize{}(0.02)} & {\scriptsize{}(0.02)} & {\scriptsize{}(0.03)} & {\scriptsize{}(0.04)} & {\scriptsize{}(0.01)} & {\scriptsize{}(0.02)}\tabularnewline
{\scriptsize{}$rmw$} & {\scriptsize{}0.31} & {\scriptsize{}0.26} & {\scriptsize{}0.39} & {\scriptsize{}0.44} & {\scriptsize{}-0.33} & {\scriptsize{}-0.34} & {\scriptsize{}-0.31} & {\scriptsize{}-0.41} & {\scriptsize{}-0.20} & {\scriptsize{}-0.24}\tabularnewline
 & {\scriptsize{}(0.03)} & {\scriptsize{}(0.04)} & {\scriptsize{}(0.04)} & {\scriptsize{}(0.05)} & {\scriptsize{}(0.03)} & {\scriptsize{}(0.04)} & {\scriptsize{}(0.04)} & {\scriptsize{}(0.07)} & {\scriptsize{}(0.02)} & {\scriptsize{}(0.03)}\tabularnewline
{\scriptsize{}$cma$} & {\scriptsize{}0.26} & {\scriptsize{}0.20} & {\scriptsize{}-0.05} & {\scriptsize{}-0.09} & {\scriptsize{}0.09} & {\scriptsize{}0.09} & {\scriptsize{}0.08} & {\scriptsize{}0.12} & {\scriptsize{}-0.19} & {\scriptsize{}-0.19}\tabularnewline
 & {\scriptsize{}(0.03)} & {\scriptsize{}(0.04)} & {\scriptsize{}(0.05)} & {\scriptsize{}(0.07)} & {\scriptsize{}(0.03)} & {\scriptsize{}(0.05)} & {\scriptsize{}(0.05)} & {\scriptsize{}(0.08)} & {\scriptsize{}(0.03)} & {\scriptsize{}(0.04)}\tabularnewline
{\scriptsize{}$mktrf^{2}$} & {\scriptsize{}-0.01} & {\scriptsize{}-0.02} & {\scriptsize{}-0.02} & {\scriptsize{}-0.02} & {\scriptsize{}-0.00} & {\scriptsize{}-0.00} & {\scriptsize{}-0.02} & {\scriptsize{}-0.01} & {\scriptsize{}-0.01} & {\scriptsize{}-0.01}\tabularnewline
 & {\scriptsize{}(0.00)} & {\scriptsize{}(0.00)} & {\scriptsize{}(0.00)} & {\scriptsize{}(0.00)} & {\scriptsize{}(0.00)} & {\scriptsize{}(0.00)} & {\scriptsize{}(0.00)} & {\scriptsize{}(0.00)} & {\scriptsize{}(0.00)} & {\scriptsize{}(0.00)}\tabularnewline
{\scriptsize{}$smb^{2}$} & {\scriptsize{}-0.05} & {\scriptsize{}-0.06} & {\scriptsize{}-0.06} & {\scriptsize{}-0.02} & {\scriptsize{}-0.00} & {\scriptsize{}-0.02} & {\scriptsize{}-0.10} & {\scriptsize{}-0.14} & {\scriptsize{}-0.02} & {\scriptsize{}-0.02}\tabularnewline
 & {\scriptsize{}(0.01)} & {\scriptsize{}(0.01)} & {\scriptsize{}(0.01)} & {\scriptsize{}(0.02)} & {\scriptsize{}(0.01)} & {\scriptsize{}(0.01)} & {\scriptsize{}(0.02)} & {\scriptsize{}(0.03)} & {\scriptsize{}(0.01)} & {\scriptsize{}(0.01)}\tabularnewline
{\scriptsize{}$hml^{2}$} & {\scriptsize{}-0.00} & {\scriptsize{}0.01} & {\scriptsize{}-0.06} & {\scriptsize{}-0.10} & {\scriptsize{}-0.02} & {\scriptsize{}-0.02} & {\scriptsize{}-0.05} & {\scriptsize{}-0.08} & {\scriptsize{}-0.02} & {\scriptsize{}-0.03}\tabularnewline
 & {\scriptsize{}(0.01)} & {\scriptsize{}(0.01)} & {\scriptsize{}(0.01)} & {\scriptsize{}(0.01)} & {\scriptsize{}(0.01)} & {\scriptsize{}(0.01)} & {\scriptsize{}(0.01)} & {\scriptsize{}(0.01)} & {\scriptsize{}(0.00)} & {\scriptsize{}(0.01)}\tabularnewline
{\scriptsize{}$rmw^{2}$} & {\scriptsize{}-0.15} & {\scriptsize{}-0.24} & {\scriptsize{}-0.03} & {\scriptsize{}-0.15} & {\scriptsize{}-0.30} & {\scriptsize{}-0.31} & {\scriptsize{}-0.17} & {\scriptsize{}-0.24} & {\scriptsize{}-0.14} & {\scriptsize{}-0.22}\tabularnewline
 & {\scriptsize{}(0.02)} & {\scriptsize{}(0.03)} & {\scriptsize{}(0.03)} & {\scriptsize{}(0.05)} & {\scriptsize{}(0.02)} & {\scriptsize{}(0.03)} & {\scriptsize{}(0.04)} & {\scriptsize{}(0.06)} & {\scriptsize{}(0.02)} & {\scriptsize{}(0.03)}\tabularnewline
{\scriptsize{}$cma^{2}$} & {\scriptsize{}-0.17} & {\scriptsize{}-0.20} & {\scriptsize{}-0.41} & {\scriptsize{}-0.54} & {\scriptsize{}-0.45} & {\scriptsize{}-0.57} & {\scriptsize{}-0.26} & {\scriptsize{}-0.42} & {\scriptsize{}-0.13} & {\scriptsize{}-0.09}\tabularnewline
 & {\scriptsize{}(0.04)} & {\scriptsize{}(0.05)} & {\scriptsize{}(0.05)} & {\scriptsize{}(0.07)} & {\scriptsize{}(0.04)} & {\scriptsize{}(0.05)} & {\scriptsize{}(0.05)} & {\scriptsize{}(0.09)} & {\scriptsize{}(0.03)} & {\scriptsize{}(0.04)}\tabularnewline
{\scriptsize{}$mktrf^{3}$} & {\scriptsize{}-0.00} & {\scriptsize{}-0.00} & {\scriptsize{}0.00} & {\scriptsize{}0.00} & {\scriptsize{}-0.00} & {\scriptsize{}-0.00} & {\scriptsize{}-0.00} & {\scriptsize{}-0.00} & {\scriptsize{}-0.00} & {\scriptsize{}-0.00}\tabularnewline
 & {\scriptsize{}(0.00)} & {\scriptsize{}(0.00)} & {\scriptsize{}(0.00)} & {\scriptsize{}(0.00)} & {\scriptsize{}(0.00)} & {\scriptsize{}(0.00)} & {\scriptsize{}(0.00)} & {\scriptsize{}(0.00)} & {\scriptsize{}(0.00)} & {\scriptsize{}(0.00)}\tabularnewline
{\scriptsize{}$smb^{3}$} & {\scriptsize{}0.01} & {\scriptsize{}-0.00} & {\scriptsize{}-0.02} & {\scriptsize{}-0.02} & {\scriptsize{}0.00} & {\scriptsize{}0.00} & {\scriptsize{}0.02} & {\scriptsize{}0.03} & {\scriptsize{}0.00} & {\scriptsize{}0.00}\tabularnewline
 & {\scriptsize{}(0.00)} & {\scriptsize{}(0.00)} & {\scriptsize{}(0.00)} & {\scriptsize{}(0.01)} & {\scriptsize{}(0.00)} & {\scriptsize{}(0.00)} & {\scriptsize{}(0.00)} & {\scriptsize{}(0.01)} & {\scriptsize{}(0.00)} & {\scriptsize{}(0.00)}\tabularnewline
{\scriptsize{}$hml^{3}$} & {\scriptsize{}-0.00} & {\scriptsize{}-0.00} & {\scriptsize{}-0.02} & {\scriptsize{}-0.02} & {\scriptsize{}0.00} & {\scriptsize{}0.00} & {\scriptsize{}0.02} & {\scriptsize{}0.03} & {\scriptsize{}0.01} & {\scriptsize{}0.01}\tabularnewline
 & {\scriptsize{}(0.00)} & {\scriptsize{}(0.00)} & {\scriptsize{}(0.00)} & {\scriptsize{}(0.00)} & {\scriptsize{}(0.00)} & {\scriptsize{}(0.00)} & {\scriptsize{}(0.00)} & {\scriptsize{}(0.00)} & {\scriptsize{}(0.00)} & {\scriptsize{}(0.00)}\tabularnewline
{\scriptsize{}$rmw^{3}$} & {\scriptsize{}0.03} & {\scriptsize{}0.08} & {\scriptsize{}-0.01} & {\scriptsize{}-0.05} & {\scriptsize{}-0.06} & {\scriptsize{}-0.09} & {\scriptsize{}0.19} & {\scriptsize{}0.23} & {\scriptsize{}0.08} & {\scriptsize{}0.11}\tabularnewline
 & {\scriptsize{}(0.01)} & {\scriptsize{}(0.02)} & {\scriptsize{}(0.02)} & {\scriptsize{}(0.03)} & {\scriptsize{}(0.01)} & {\scriptsize{}(0.02)} & {\scriptsize{}(0.02)} & {\scriptsize{}(0.03)} & {\scriptsize{}(0.01)} & {\scriptsize{}(0.01)}\tabularnewline
{\scriptsize{}$cma^{3}$} & {\scriptsize{}-0.07} & {\scriptsize{}0.10} & {\scriptsize{}0.02} & {\scriptsize{}0.09} & {\scriptsize{}-0.20} & {\scriptsize{}-0.25} & {\scriptsize{}0.13} & {\scriptsize{}-0.05} & {\scriptsize{}-0.08} & {\scriptsize{}-0.03}\tabularnewline
 & {\scriptsize{}(0.02)} & {\scriptsize{}(0.03)} & {\scriptsize{}(0.03)} & {\scriptsize{}(0.04)} & {\scriptsize{}(0.02)} & {\scriptsize{}(0.03)} & {\scriptsize{}(0.03)} & {\scriptsize{}(0.06)} & {\scriptsize{}(0.02)} & {\scriptsize{}(0.02)}\tabularnewline
{\scriptsize{}Cons} & {\scriptsize{}-0.30} & {\scriptsize{}-0.40} & {\scriptsize{}-0.40} & {\scriptsize{}-0.55} & {\scriptsize{}-0.30} & {\scriptsize{}-0.42} & {\scriptsize{}-0.46} & {\scriptsize{}-0.62} & {\scriptsize{}-0.27} & {\scriptsize{}-0.36}\tabularnewline
 & {\scriptsize{}(0.01)} & {\scriptsize{}(0.01)} & {\scriptsize{}(0.01)} & {\scriptsize{}(0.02)} & {\scriptsize{}(0.01)} & {\scriptsize{}(0.01)} & {\scriptsize{}(0.02)} & {\scriptsize{}(0.03)} & {\scriptsize{}(0.01)} & {\scriptsize{}(0.01)}\tabularnewline
\hline 
\end{tabular}{\scriptsize\par}
\end{table}
{\tiny\par}
\par\end{center}

\begin{center}
{\tiny{}}
\begin{table}[H]
\centering{}{\scriptsize{}}%
\begin{tabular}{lcccccccccc}
\hline 
{\scriptsize{}Panel B} & {\scriptsize{}(1)} & {\scriptsize{}(2)} & {\scriptsize{}(3)} & {\scriptsize{}(4)} & {\scriptsize{}(5)} & {\scriptsize{}(6)} & {\scriptsize{}(7)} & {\scriptsize{}(8)} & {\scriptsize{}(9)} & {\scriptsize{}(10)}\tabularnewline
{\scriptsize{}Ind} & {\scriptsize{}cnsmr} & {\scriptsize{}cnsmr} & {\scriptsize{}manuf} & {\scriptsize{}manuf} & {\scriptsize{}hitec} & {\scriptsize{}hitec} & {\scriptsize{}hlth} & {\scriptsize{}hlth} & {\scriptsize{}other} & {\scriptsize{}other}\tabularnewline
{\scriptsize{}Quantile} & {\scriptsize{}0.1} & {\scriptsize{}0.05} & {\scriptsize{}0.1} & {\scriptsize{}0.05} & {\scriptsize{}0.1} & {\scriptsize{}0.05} & {\scriptsize{}0.1} & {\scriptsize{}0.05} & {\scriptsize{}0.1} & {\scriptsize{}0.05}\tabularnewline
\hline 
{\scriptsize{}$mktrf$} & {\scriptsize{}0.87} & {\scriptsize{}0.87} & {\scriptsize{}1.02} & {\scriptsize{}1.03} & {\scriptsize{}1.07} & {\scriptsize{}1.08} & {\scriptsize{}0.80} & {\scriptsize{}0.79} & {\scriptsize{}1.06} & {\scriptsize{}1.04}\tabularnewline
 & {\scriptsize{}(0.01)} & {\scriptsize{}(0.01)} & {\scriptsize{}(0.01)} & {\scriptsize{}(0.02)} & {\scriptsize{}(0.01)} & {\scriptsize{}(0.02)} & {\scriptsize{}(0.01)} & {\scriptsize{}(0.02)} & {\scriptsize{}(0.01)} & {\scriptsize{}(0.01)}\tabularnewline
{\scriptsize{}$smb$} & {\scriptsize{}0.06} & {\scriptsize{}0.09} & {\scriptsize{}0.10} & {\scriptsize{}0.07} & {\scriptsize{}-0.07} & {\scriptsize{}-0.08} & {\scriptsize{}-0.02} & {\scriptsize{}-0.05} & {\scriptsize{}0.00} & {\scriptsize{}-0.01}\tabularnewline
 & {\scriptsize{}(0.02)} & {\scriptsize{}(0.02)} & {\scriptsize{}(0.02)} & {\scriptsize{}(0.03)} & {\scriptsize{}(0.02)} & {\scriptsize{}(0.03)} & {\scriptsize{}(0.02)} & {\scriptsize{}(0.04)} & {\scriptsize{}(0.01)} & {\scriptsize{}(0.02)}\tabularnewline
{\scriptsize{}$hml$} & {\scriptsize{}-0.19} & {\scriptsize{}-0.18} & {\scriptsize{}0.14} & {\scriptsize{}0.11} & {\scriptsize{}-0.33} & {\scriptsize{}-0.32} & {\scriptsize{}-0.32} & {\scriptsize{}-0.32} & {\scriptsize{}0.60} & {\scriptsize{}0.62}\tabularnewline
 & {\scriptsize{}(0.01)} & {\scriptsize{}(0.02)} & {\scriptsize{}(0.02)} & {\scriptsize{}(0.03)} & {\scriptsize{}(0.02)} & {\scriptsize{}(0.03)} & {\scriptsize{}(0.02)} & {\scriptsize{}(0.04)} & {\scriptsize{}(0.01)} & {\scriptsize{}(0.02)}\tabularnewline
{\scriptsize{}$rmw$} & {\scriptsize{}0.30} & {\scriptsize{}0.30} & {\scriptsize{}0.41} & {\scriptsize{}0.45} & {\scriptsize{}-0.40} & {\scriptsize{}-0.47} & {\scriptsize{}-0.14} & {\scriptsize{}-0.19} & {\scriptsize{}-0.16} & {\scriptsize{}-0.17}\tabularnewline
 & {\scriptsize{}(0.02)} & {\scriptsize{}(0.03)} & {\scriptsize{}(0.04)} & {\scriptsize{}(0.05)} & {\scriptsize{}(0.03)} & {\scriptsize{}(0.04)} & {\scriptsize{}(0.03)} & {\scriptsize{}(0.06)} & {\scriptsize{}(0.02)} & {\scriptsize{}(0.03)}\tabularnewline
{\scriptsize{}$cma$} & {\scriptsize{}0.21} & {\scriptsize{}0.19} & {\scriptsize{}0.03} & {\scriptsize{}-0.02} & {\scriptsize{}0.07} & {\scriptsize{}0.05} & {\scriptsize{}0.07} & {\scriptsize{}0.09} & {\scriptsize{}-0.30} & {\scriptsize{}-0.30}\tabularnewline
 & {\scriptsize{}(0.03)} & {\scriptsize{}(0.04)} & {\scriptsize{}(0.05)} & {\scriptsize{}(0.06)} & {\scriptsize{}(0.04)} & {\scriptsize{}(0.05)} & {\scriptsize{}(0.04)} & {\scriptsize{}(0.08)} & {\scriptsize{}(0.02)} & {\scriptsize{}(0.04)}\tabularnewline
{\scriptsize{}$mktrf\times smb$} & {\scriptsize{}0.03} & {\scriptsize{}0.02} & {\scriptsize{}0.06} & {\scriptsize{}0.09} & {\scriptsize{}0.03} & {\scriptsize{}0.04} & {\scriptsize{}0.04} & {\scriptsize{}0.05} & {\scriptsize{}0.02} & {\scriptsize{}0.01}\tabularnewline
 & {\scriptsize{}(0.01)} & {\scriptsize{}(0.01)} & {\scriptsize{}(0.01)} & {\scriptsize{}(0.02)} & {\scriptsize{}(0.01)} & {\scriptsize{}(0.01)} & {\scriptsize{}(0.01)} & {\scriptsize{}(0.02)} & {\scriptsize{}(0.01)} & {\scriptsize{}(0.01)}\tabularnewline
{\scriptsize{}$mktrf\times hml$} & {\scriptsize{}-0.01} & {\scriptsize{}-0.02} & {\scriptsize{}-0.06} & {\scriptsize{}-0.06} & {\scriptsize{}-0.01} & {\scriptsize{}-0.00} & {\scriptsize{}-0.03} & {\scriptsize{}-0.04} & {\scriptsize{}-0.01} & {\scriptsize{}-0.01}\tabularnewline
 & {\scriptsize{}(0.01)} & {\scriptsize{}(0.01)} & {\scriptsize{}(0.01)} & {\scriptsize{}(0.01)} & {\scriptsize{}(0.01)} & {\scriptsize{}(0.01)} & {\scriptsize{}(0.01)} & {\scriptsize{}(0.01)} & {\scriptsize{}(0.00)} & {\scriptsize{}(0.01)}\tabularnewline
{\scriptsize{}$mktrf\times rmw$} & {\scriptsize{}0.02} & {\scriptsize{}0.00} & {\scriptsize{}-0.01} & {\scriptsize{}-0.02} & {\scriptsize{}0.06} & {\scriptsize{}0.06} & {\scriptsize{}0.08} & {\scriptsize{}0.08} & {\scriptsize{}-0.00} & {\scriptsize{}-0.02}\tabularnewline
 & {\scriptsize{}(0.01)} & {\scriptsize{}(0.02)} & {\scriptsize{}(0.02)} & {\scriptsize{}(0.03)} & {\scriptsize{}(0.02)} & {\scriptsize{}(0.02)} & {\scriptsize{}(0.02)} & {\scriptsize{}(0.03)} & {\scriptsize{}(0.01)} & {\scriptsize{}(0.02)}\tabularnewline
{\scriptsize{}$mktrf\times cma$} & {\scriptsize{}0.01} & {\scriptsize{}-0.00} & {\scriptsize{}0.14} & {\scriptsize{}0.13} & {\scriptsize{}0.03} & {\scriptsize{}0.02} & {\scriptsize{}0.03} & {\scriptsize{}0.01} & {\scriptsize{}0.04} & {\scriptsize{}0.05}\tabularnewline
 & {\scriptsize{}(0.02)} & {\scriptsize{}(0.02)} & {\scriptsize{}(0.03)} & {\scriptsize{}(0.03)} & {\scriptsize{}(0.02)} & {\scriptsize{}(0.03)} & {\scriptsize{}(0.02)} & {\scriptsize{}(0.04)} & {\scriptsize{}(0.01)} & {\scriptsize{}(0.02)}\tabularnewline
{\scriptsize{}$smb\times hml$} & {\scriptsize{}-0.03} & {\scriptsize{}-0.03} & {\scriptsize{}-0.10} & {\scriptsize{}-0.13} & {\scriptsize{}-0.05} & {\scriptsize{}-0.05} & {\scriptsize{}-0.03} & {\scriptsize{}-0.06} & {\scriptsize{}-0.06} & {\scriptsize{}-0.05}\tabularnewline
 & {\scriptsize{}(0.01)} & {\scriptsize{}(0.02)} & {\scriptsize{}(0.02)} & {\scriptsize{}(0.02)} & {\scriptsize{}(0.01)} & {\scriptsize{}(0.02)} & {\scriptsize{}(0.02)} & {\scriptsize{}(0.03)} & {\scriptsize{}(0.01)} & {\scriptsize{}(0.02)}\tabularnewline
{\scriptsize{}$smb\times rmw$} & {\scriptsize{}-0.01} & {\scriptsize{}-0.03} & {\scriptsize{}0.04} & {\scriptsize{}0.13} & {\scriptsize{}0.08} & {\scriptsize{}0.08} & {\scriptsize{}-0.02} & {\scriptsize{}-0.06} & {\scriptsize{}0.03} & {\scriptsize{}0.01}\tabularnewline
 & {\scriptsize{}(0.03)} & {\scriptsize{}(0.04)} & {\scriptsize{}(0.04)} & {\scriptsize{}(0.05)} & {\scriptsize{}(0.03)} & {\scriptsize{}(0.05)} & {\scriptsize{}(0.04)} & {\scriptsize{}(0.07)} & {\scriptsize{}(0.02)} & {\scriptsize{}(0.03)}\tabularnewline
{\scriptsize{}$smb\times cma$} & {\scriptsize{}0.05} & {\scriptsize{}0.03} & {\scriptsize{}0.21} & {\scriptsize{}0.20} & {\scriptsize{}-0.06} & {\scriptsize{}-0.04} & {\scriptsize{}0.10} & {\scriptsize{}0.13} & {\scriptsize{}0.04} & {\scriptsize{}0.04}\tabularnewline
 & {\scriptsize{}(0.03)} & {\scriptsize{}(0.05)} & {\scriptsize{}(0.05)} & {\scriptsize{}(0.07)} & {\scriptsize{}(0.04)} & {\scriptsize{}(0.06)} & {\scriptsize{}(0.05)} & {\scriptsize{}(0.09)} & {\scriptsize{}(0.03)} & {\scriptsize{}(0.04)}\tabularnewline
{\scriptsize{}$hml\times rmw$} & {\scriptsize{}0.02} & {\scriptsize{}0.03} & {\scriptsize{}0.09} & {\scriptsize{}0.15} & {\scriptsize{}0.08} & {\scriptsize{}0.09} & {\scriptsize{}-0.05} & {\scriptsize{}-0.00} & {\scriptsize{}0.01} & {\scriptsize{}0.04}\tabularnewline
 & {\scriptsize{}(0.02)} & {\scriptsize{}(0.03)} & {\scriptsize{}(0.03)} & {\scriptsize{}(0.04)} & {\scriptsize{}(0.02)} & {\scriptsize{}(0.04)} & {\scriptsize{}(0.03)} & {\scriptsize{}(0.05)} & {\scriptsize{}(0.02)} & {\scriptsize{}(0.03)}\tabularnewline
{\scriptsize{}$hml\times cma$} & {\scriptsize{}-0.05} & {\scriptsize{}-0.08} & {\scriptsize{}-0.22} & {\scriptsize{}-0.30} & {\scriptsize{}-0.15} & {\scriptsize{}-0.15} & {\scriptsize{}-0.13} & {\scriptsize{}-0.21} & {\scriptsize{}-0.11} & {\scriptsize{}-0.09}\tabularnewline
 & {\scriptsize{}(0.03)} & {\scriptsize{}(0.04)} & {\scriptsize{}(0.04)} & {\scriptsize{}(0.05)} & {\scriptsize{}(0.03)} & {\scriptsize{}(0.05)} & {\scriptsize{}(0.04)} & {\scriptsize{}(0.07)} & {\scriptsize{}(0.02)} & {\scriptsize{}(0.03)}\tabularnewline
{\scriptsize{}$rmw\times cma$} & {\scriptsize{}-0.04} & {\scriptsize{}-0.07} & {\scriptsize{}0.22} & {\scriptsize{}0.20} & {\scriptsize{}0.08} & {\scriptsize{}-0.04} & {\scriptsize{}-0.10} & {\scriptsize{}-0.16} & {\scriptsize{}0.04} & {\scriptsize{}-0.03}\tabularnewline
 & {\scriptsize{}(0.05)} & {\scriptsize{}(0.07)} & {\scriptsize{}(0.07)} & {\scriptsize{}(0.09)} & {\scriptsize{}(0.06)} & {\scriptsize{}(0.08)} & {\scriptsize{}(0.07)} & {\scriptsize{}(0.12)} & {\scriptsize{}(0.04)} & {\scriptsize{}(0.06)}\tabularnewline
{\scriptsize{}Cons} & {\scriptsize{}-0.36} & {\scriptsize{}-0.50} & {\scriptsize{}-0.50} & {\scriptsize{}-0.69} & {\scriptsize{}-0.39} & {\scriptsize{}-0.56} & {\scriptsize{}-0.57} & {\scriptsize{}-0.80} & {\scriptsize{}-0.32} & {\scriptsize{}-0.43}\tabularnewline
 & {\scriptsize{}(0.01)} & {\scriptsize{}(0.01)} & {\scriptsize{}(0.02)} & {\scriptsize{}(0.02)} & {\scriptsize{}(0.01)} & {\scriptsize{}(0.02)} & {\scriptsize{}(0.01)} & {\scriptsize{}(0.03)} & {\scriptsize{}(0.01)} & {\scriptsize{}(0.01)}\tabularnewline
\hline 
\end{tabular}
\end{table}
{\tiny\par}
\par\end{center}

\begin{center}
{\footnotesize{}}
\begin{table}[H]
{\footnotesize{}\caption{\small{}Inequality Regression\label{tab:Inequality-regression}}
}{\footnotesize\par}

{\footnotesize{}\medskip{}
}{\footnotesize\par}
\caption*{
This table reports results of the inequality regression for each year
from 2001 to 2018.  Standard errors are reported in parentheses.
}
{\footnotesize{}\medskip{}
}{\footnotesize\par}
\centering{}{\scriptsize{}}%
\begin{tabular}{lccccccccc}
\hline 
 & {\scriptsize{}2001} & {\scriptsize{}2002} & {\scriptsize{}2003} & {\scriptsize{}2004} & {\scriptsize{}2005} & {\scriptsize{}2006} & {\scriptsize{}2007} & {\scriptsize{}2008} & {\scriptsize{}2009}\tabularnewline
\hline 
 &  &  &  &  &  &  &  &  & \tabularnewline
{\scriptsize{}Fam Size} & {\scriptsize{}0.147} & {\scriptsize{}0.061} & {\scriptsize{}0.021} & {\scriptsize{}0.089} & {\scriptsize{}0.051} & {\scriptsize{}-0.070} & {\scriptsize{}-0.004} & {\scriptsize{}-0.034} & {\scriptsize{}0.126}\tabularnewline
 & {\scriptsize{}(0.052)} & {\scriptsize{}(0.044)} & {\scriptsize{}(0.049)} & {\scriptsize{}(0.054)} & {\scriptsize{}(0.053)} & {\scriptsize{}(0.051)} & {\scriptsize{}(0.048)} & {\scriptsize{}(0.051)} & {\scriptsize{}(0.049)}\tabularnewline
{\scriptsize{}No Child} & {\scriptsize{}-0.151} & {\scriptsize{}-0.111} & {\scriptsize{}-0.024} & {\scriptsize{}-0.134} & {\scriptsize{}-0.115} & {\scriptsize{}0.028} & {\scriptsize{}-0.026} & {\scriptsize{}0.038} & {\scriptsize{}-0.194}\tabularnewline
 & {\scriptsize{}(0.062)} & {\scriptsize{}(0.053)} & {\scriptsize{}(0.060)} & {\scriptsize{}(0.064)} & {\scriptsize{}(0.063)} & {\scriptsize{}(0.061)} & {\scriptsize{}(0.057)} & {\scriptsize{}(0.061)} & {\scriptsize{}(0.058)}\tabularnewline
{\scriptsize{}Age} & {\scriptsize{}0.019} & {\scriptsize{}0.010} & {\scriptsize{}0.034} & {\scriptsize{}0.011} & {\scriptsize{}0.013} & {\scriptsize{}0.013} & {\scriptsize{}0.001} & {\scriptsize{}0.019} & {\scriptsize{}0.020}\tabularnewline
 & {\scriptsize{}(0.010)} & {\scriptsize{}(0.009)} & {\scriptsize{}(0.010)} & {\scriptsize{}(0.011)} & {\scriptsize{}(0.010)} & {\scriptsize{}(0.010)} & {\scriptsize{}(0.010)} & {\scriptsize{}(0.010)} & {\scriptsize{}(0.010)}\tabularnewline
{\scriptsize{}Edu} & {\scriptsize{}0.140} & {\scriptsize{}0.122} & {\scriptsize{}0.134} & {\scriptsize{}0.061} & {\scriptsize{}0.118} & {\scriptsize{}0.129} & {\scriptsize{}0.117} & {\scriptsize{}0.099} & {\scriptsize{}0.113}\tabularnewline
 & {\scriptsize{}(0.015)} & {\scriptsize{}(0.014)} & {\scriptsize{}(0.015)} & {\scriptsize{}(0.016)} & {\scriptsize{}(0.015)} & {\scriptsize{}(0.015)} & {\scriptsize{}(0.014)} & {\scriptsize{}(0.015)} & {\scriptsize{}(0.014)}\tabularnewline
{\scriptsize{}Constant} & {\scriptsize{}0.055} & {\scriptsize{}0.705} & {\scriptsize{}-0.419} & {\scriptsize{}1.138} & {\scriptsize{}0.778} & {\scriptsize{}0.886} & {\scriptsize{}1.365} & {\scriptsize{}0.784} & {\scriptsize{}0.377}\tabularnewline
 & {\scriptsize{}(0.480)} & {\scriptsize{}(0.429)} & {\scriptsize{}(0.482)} & {\scriptsize{}(0.504)} & {\scriptsize{}(0.489)} & {\scriptsize{}(0.498)} & {\scriptsize{}(0.467)} & {\scriptsize{}(0.480)} & {\scriptsize{}(0.463)}\tabularnewline
 &  &  &  &  &  &  &  &  & \tabularnewline
\hline 
 & {\scriptsize{}2010} & {\scriptsize{}2011} & {\scriptsize{}2012} & {\scriptsize{}2013} & {\scriptsize{}2014} & {\scriptsize{}2015} & {\scriptsize{}2016} & {\scriptsize{}2017} & {\scriptsize{}2018}\tabularnewline
\hline 
 &  &  &  &  &  &  &  &  & \tabularnewline
{\scriptsize{}Fam Size} & {\scriptsize{}-0.011} & {\scriptsize{}-0.134} & {\scriptsize{}-0.114} & {\scriptsize{}-0.094} & {\scriptsize{}-0.019} & {\scriptsize{}-0.123} & {\scriptsize{}-0.090} & {\scriptsize{}-0.128} & {\scriptsize{}-0.083}\tabularnewline
 & {\scriptsize{}(0.048)} & {\scriptsize{}(0.044)} & {\scriptsize{}(0.048)} & {\scriptsize{}(0.052)} & {\scriptsize{}(0.044)} & {\scriptsize{}(0.047)} & {\scriptsize{}(0.047)} & {\scriptsize{}(0.047)} & {\scriptsize{}(0.049)}\tabularnewline
{\scriptsize{}No Child} & {\scriptsize{}-0.042} & {\scriptsize{}0.060} & {\scriptsize{}0.067} & {\scriptsize{}0.077} & {\scriptsize{}-0.071} & {\scriptsize{}0.059} & {\scriptsize{}0.041} & {\scriptsize{}0.085} & {\scriptsize{}0.040}\tabularnewline
 & {\scriptsize{}(0.058)} & {\scriptsize{}(0.053)} & {\scriptsize{}(0.058)} & {\scriptsize{}(0.063)} & {\scriptsize{}(0.053)} & {\scriptsize{}(0.057)} & {\scriptsize{}(0.057)} & {\scriptsize{}(0.056)} & {\scriptsize{}(0.060)}\tabularnewline
{\scriptsize{}Age} & {\scriptsize{}0.000} & {\scriptsize{}0.030} & {\scriptsize{}0.012} & {\scriptsize{}0.016} & {\scriptsize{}0.007} & {\scriptsize{}0.022} & {\scriptsize{}0.010} & {\scriptsize{}-0.006} & {\scriptsize{}0.018}\tabularnewline
 & {\scriptsize{}(0.010)} & {\scriptsize{}(0.009)} & {\scriptsize{}(0.010)} & {\scriptsize{}(0.011)} & {\scriptsize{}(0.010)} & {\scriptsize{}(0.011)} & {\scriptsize{}(0.011)} & {\scriptsize{}(0.010)} & {\scriptsize{}(0.011)}\tabularnewline
{\scriptsize{}Edu} & {\scriptsize{}0.067} & {\scriptsize{}0.094} & {\scriptsize{}0.057} & {\scriptsize{}0.050} & {\scriptsize{}0.066} & {\scriptsize{}0.065} & {\scriptsize{}0.080} & {\scriptsize{}0.051} & {\scriptsize{}0.071}\tabularnewline
 & {\scriptsize{}(0.015)} & {\scriptsize{}(0.014)} & {\scriptsize{}(0.015)} & {\scriptsize{}(0.017)} & {\scriptsize{}(0.014)} & {\scriptsize{}(0.015)} & {\scriptsize{}(0.016)} & {\scriptsize{}(0.015)} & {\scriptsize{}(0.016)}\tabularnewline
{\scriptsize{}Constant} & {\scriptsize{}1.912} & {\scriptsize{}0.618} & {\scriptsize{}1.696} & {\scriptsize{}1.623} & {\scriptsize{}1.748} & {\scriptsize{}1.274} & {\scriptsize{}1.704} & {\scriptsize{}2.643} & {\scriptsize{}1.366}\tabularnewline
 & {\scriptsize{}(0.469)} & {\scriptsize{}(0.442)} & {\scriptsize{}(0.490)} & {\scriptsize{}(0.542)} & {\scriptsize{}(0.459)} & {\scriptsize{}(0.499)} & {\scriptsize{}(0.520)} & {\scriptsize{}(0.496)} & {\scriptsize{}(0.528)}\tabularnewline
 &  &  &  &  &  &  &  &  & \tabularnewline
\hline 
\end{tabular}{\scriptsize\par}
\end{table}
{\footnotesize\par}
\par\end{center}

\newpage{}
\begin{center}
{\footnotesize{}}
\begin{table}[H]
{\footnotesize{}\caption{\small{}Social Welfare Regression\label{tab:lowincome-regression}}
}{\footnotesize\par}

{\footnotesize{}\medskip{}
}{\footnotesize\par}
\begin{flushleft}
This table reports results of the social welfare (exponential) regression for each year
from 2001 to 2018.  Standard errors are reported in parentheses.
\end{flushleft}
{\footnotesize{}\medskip{}
}{\footnotesize\par}
\centering{}{\scriptsize{}}%
\begin{tabular}{lccccccccc}
\hline 
 & {\scriptsize{}2001} & {\scriptsize{}2002} & {\scriptsize{}2003} & {\scriptsize{}2004} & {\scriptsize{}2005} & {\scriptsize{}2006} & {\scriptsize{}2007} & {\scriptsize{}2008} & {\scriptsize{}2009}\tabularnewline
\hline 
 &  &  &  &  &  &  &  &  & \tabularnewline
{\scriptsize{}Fam Size} & {\scriptsize{}-0.061} & {\scriptsize{}-0.051} & {\scriptsize{}-0.026} & {\scriptsize{}-0.060} & {\scriptsize{}-0.050} & {\scriptsize{}0.030} & {\scriptsize{}0.009} & {\scriptsize{}0.031} & {\scriptsize{}-0.080}\tabularnewline
 & {\scriptsize{}(0.038)} & {\scriptsize{}(0.031)} & {\scriptsize{}(0.039)} & {\scriptsize{}(0.045)} & {\scriptsize{}(0.039)} & {\scriptsize{}(0.038)} & {\scriptsize{}(0.033)} & {\scriptsize{}(0.036)} & {\scriptsize{}(0.036)}\tabularnewline
{\scriptsize{}No Child} & {\scriptsize{}0.146} & {\scriptsize{}0.166} & {\scriptsize{}0.113} & {\scriptsize{}0.166} & {\scriptsize{}0.199} & {\scriptsize{}0.091} & {\scriptsize{}0.094} & {\scriptsize{}0.061} & {\scriptsize{}0.220}\tabularnewline
 & {\scriptsize{}(0.046)} & {\scriptsize{}(0.037)} & {\scriptsize{}(0.047)} & {\scriptsize{}(0.053)} & {\scriptsize{}(0.046)} & {\scriptsize{}(0.046)} & {\scriptsize{}(0.040)} & {\scriptsize{}(0.043)} & {\scriptsize{}(0.043)}\tabularnewline
{\scriptsize{}Age} & {\scriptsize{}-0.005} & {\scriptsize{}0.004} & {\scriptsize{}-0.013} & {\scriptsize{}0.007} & {\scriptsize{}0.004} & {\scriptsize{}0.014} & {\scriptsize{}0.014} & {\scriptsize{}0.000} & {\scriptsize{}0.002}\tabularnewline
 & {\scriptsize{}(0.008)} & {\scriptsize{}(0.006)} & {\scriptsize{}(0.008)} & {\scriptsize{}(0.009)} & {\scriptsize{}(0.008)} & {\scriptsize{}(0.008)} & {\scriptsize{}(0.007)} & {\scriptsize{}(0.007)} & {\scriptsize{}(0.007)}\tabularnewline
{\scriptsize{}Edu} & {\scriptsize{}0.109} & {\scriptsize{}0.104} & {\scriptsize{}0.088} & {\scriptsize{}0.117} & {\scriptsize{}0.106} & {\scriptsize{}0.110} & {\scriptsize{}0.123} & {\scriptsize{}0.128} & {\scriptsize{}0.126}\tabularnewline
 & {\scriptsize{}(0.011)} & {\scriptsize{}(0.010)} & {\scriptsize{}(0.012)} & {\scriptsize{}(0.013)} & {\scriptsize{}(0.011)} & {\scriptsize{}(0.012)} & {\scriptsize{}(0.010)} & {\scriptsize{}(0.010)} & {\scriptsize{}(0.010)}\tabularnewline
{\scriptsize{}Constant} & {\scriptsize{}5.179} & {\scriptsize{}4.836} & {\scriptsize{}5.675} & {\scriptsize{}4.613} & {\scriptsize{}4.738} & {\scriptsize{}4.162} & {\scriptsize{}4.222} & {\scriptsize{}4.750} & {\scriptsize{}4.828}\tabularnewline
 & {\scriptsize{}(0.356)} & {\scriptsize{}(0.302)} & {\scriptsize{}(0.380)} & {\scriptsize{}(0.418)} & {\scriptsize{}(0.361)} & {\scriptsize{}(0.375)} & {\scriptsize{}(0.327)} & {\scriptsize{}(0.339)} & {\scriptsize{}(0.339)}\tabularnewline
 &  &  &  &  &  &  &  &  & \tabularnewline
\hline 
 & {\scriptsize{}2010} & {\scriptsize{}2011} & {\scriptsize{}2012} & {\scriptsize{}2013} & {\scriptsize{}2014} & {\scriptsize{}2015} & {\scriptsize{}2016} & {\scriptsize{}2017} & {\scriptsize{}2018}\tabularnewline
\hline 
 &  &  &  &  &  &  &  &  & \tabularnewline
{\scriptsize{}Fam Size} & {\scriptsize{}0.043} & {\scriptsize{}0.054} & {\scriptsize{}0.091} & {\scriptsize{}0.073} & {\scriptsize{}0.009} & {\scriptsize{}0.071} & {\scriptsize{}0.052} & {\scriptsize{}0.123} & {\scriptsize{}0.081}\tabularnewline
 & {\scriptsize{}(0.036)} & {\scriptsize{}(0.032)} & {\scriptsize{}(0.035)} & {\scriptsize{}(0.040)} & {\scriptsize{}(0.031)} & {\scriptsize{}(0.034)} & {\scriptsize{}(0.035)} & {\scriptsize{}(0.035)} & {\scriptsize{}(0.037)}\tabularnewline
{\scriptsize{}No Child} & {\scriptsize{}0.082} & {\scriptsize{}0.101} & {\scriptsize{}0.025} & {\scriptsize{}0.052} & {\scriptsize{}0.149} & {\scriptsize{}0.090} & {\scriptsize{}0.092} & {\scriptsize{}0.011} & {\scriptsize{}0.038}\tabularnewline
 & {\scriptsize{}(0.043)} & {\scriptsize{}(0.039)} & {\scriptsize{}(0.042)} & {\scriptsize{}(0.049)} & {\scriptsize{}(0.037)} & {\scriptsize{}(0.041)} & {\scriptsize{}(0.042)} & {\scriptsize{}(0.042)} & {\scriptsize{}(0.045)}\tabularnewline
{\scriptsize{}Age} & {\scriptsize{}0.023} & {\scriptsize{}-0.005} & {\scriptsize{}0.015} & {\scriptsize{}0.012} & {\scriptsize{}0.018} & {\scriptsize{}0.008} & {\scriptsize{}0.020} & {\scriptsize{}0.018} & {\scriptsize{}0.003}\tabularnewline
 & {\scriptsize{}(0.007)} & {\scriptsize{}(0.007)} & {\scriptsize{}(0.007)} & {\scriptsize{}(0.009)} & {\scriptsize{}(0.007)} & {\scriptsize{}(0.008)} & {\scriptsize{}(0.008)} & {\scriptsize{}(0.008)} & {\scriptsize{}(0.008)}\tabularnewline
{\scriptsize{}Edu} & {\scriptsize{}0.157} & {\scriptsize{}0.150} & {\scriptsize{}0.171} & {\scriptsize{}0.176} & {\scriptsize{}0.164} & {\scriptsize{}0.168} & {\scriptsize{}0.149} & {\scriptsize{}0.170} & {\scriptsize{}0.169}\tabularnewline
 & {\scriptsize{}(0.011)} & {\scriptsize{}(0.010)} & {\scriptsize{}(0.011)} & {\scriptsize{}(0.013)} & {\scriptsize{}(0.010)} & {\scriptsize{}(0.011)} & {\scriptsize{}(0.012)} & {\scriptsize{}(0.011)} & {\scriptsize{}(0.012)}\tabularnewline
{\scriptsize{}Constant} & {\scriptsize{}3.370} & {\scriptsize{}4.651} & {\scriptsize{}3.533} & {\scriptsize{}3.598} & {\scriptsize{}3.622} & {\scriptsize{}3.890} & {\scriptsize{}3.516} & {\scriptsize{}3.370} & {\scriptsize{}4.178}\tabularnewline
 & {\scriptsize{}(0.347)} & {\scriptsize{}(0.325)} & {\scriptsize{}(0.359)} & {\scriptsize{}(0.415)} & {\scriptsize{}(0.326)} & {\scriptsize{}(0.360)} & {\scriptsize{}(0.382)} & {\scriptsize{}(0.368)} & {\scriptsize{}(0.396)}\tabularnewline
 &  &  &  &  &  &  &  &  & \tabularnewline
\hline 
\end{tabular}{\scriptsize\par}
\end{table}
\end{center}

\clearpage

\section{Data}\label{sec: data}

\subsection*{Financial Market Data}

The Fama-French industry daily returns are obtained from Kenneth French's
website. In our main specifications, we use the Fama-French 5 industry
definition. In the Appendix, we also provide results based on the
Fama-French 30-industry definition. The factor model data, including
the Fama-French 3-factor and the Fama-French 5-factor, are also obtained
from Kenneth French's website. The industry returns and the factor
model returns span from 1963 to 2021. 

\subsection*{Wage Data}

The data are drawn from the IPUMS website. We apply filters similar
to \cite{ACF06}. The sample for the calculations
consists of US-born black and white men with age 40-49 with at least
5 years of eduation, with positive wages and hours worked. The data
span from 2001 to 2018. For each year, we use a 30,000 random sample.
The logged wage variable is the average logged weekly wage and is
calculated as the log of the annual income from work divided by weeks
worked.

\end{document}